\documentclass[a4paper,thm-restate,USenglish]{lipics-v2019}

\hideLIPIcs
\nolinenumbers

\usepackage[utf8]{inputenc}
\usepackage[T1]{fontenc}
\usepackage{needspace}
\usepackage{comment} 

\usepackage{libertine} 
\usepackage{inconsolata} 

\usepackage{amsmath, amsthm, amssymb}
\usepackage{stmaryrd}
\usepackage{graphicx}
\usepackage{tikz}
\usepackage{enumerate}
\usepackage{subcaption}

\usepackage{thmtools}
\usepackage{float}
\usepackage{xcolor}
\usepackage{doi} 
\usepackage[numbers, sort&compress]{natbib} 

\declaretheorem[name=Question, sibling=theorem]{question}

\newcommand{\Hm}{H^{-}}

\bibliographystyle{plainnat}

\title{Local certification of graph decompositions and applications to minor-free classes}

\funding{This work was supported by ANR project GrR (ANR-18-CE40-0032).}

\author{Nicolas Bousquet}{Univ. Lyon, Université Lyon 1, LIRIS UMR CNRS 5205, F-69621, Lyon, France}{nicolas.bousquet@univ-lyon1.fr}{https://orcid.org/0000-0003-0170-0503}{}

\author{Laurent Feuilloley}{Univ. Lyon, Université Lyon 1, LIRIS UMR CNRS 5205, F-69621, Lyon, France}{laurent.feuilloley@univ-lyon1.fr}{https://orcid.org/0000-0002-3994-0898}{}

\author{Théo Pierron}{Univ. Lyon, Université Lyon 1, LIRIS UMR CNRS 5205, F-69621, Lyon, France}{theo.pierron@univ-lyon1.fr}{https://orcid.org/0000-0002-5586-5613}{}
\ccsdesc[100]{Theory of computation $\rightarrow$ Design and analysis of algorithms $\rightarrow$ Distributed algorithms}

\keywords{Local certification, proof-labeling schemes, locally checkable proofs, graph decompositions, minor-free graphs} 

\authorrunning{N. Bousquet, L. Feuilloley, T. Pierron}

\begin{document}

\maketitle

\begin{abstract}
Local certification consists in assigning labels to the nodes of a network to certify that some given property is satisfied, in such a way that the labels can be checked locally.
In the last few years, certification of graph classes received a considerable attention. The goal is to certify that a graph $G$ belongs to a given graph class~$\mathcal{G}$. Such certifications with labels of size $O(\log n)$ (where $n$ is the size of the network) exist for trees, planar graphs and graphs embedded on surfaces. Feuilloley et al. ask if this can be extended to any class of graphs defined by a finite set of forbidden minors.

In this work, we develop new decomposition tools for graph certification, and apply them to show that for every small enough minor $H$, $H$-minor-free graphs can indeed be certified with labels of size $O(\log n)$. We also show matching lower bounds using a new proof technique. 
\end{abstract}

\section{Introduction}

Local certification is an active field of research in the theory of distributed computing. On a high level it consists in certifying global properties in such a way that the verification can be done locally. 
More precisely, for a given property, a local certification consists of a labeling (called a \emph{certificate assignment}), and of a local verification algorithm. 
If the configuration of the network is correct, then there should exist a labeling of the nodes that is accepted by the verification algorithm, whereas if the configuration is incorrect no labeling should make the verification algorithm accept.

Local certification originates from self-stabilization, and was first concerned with certifying that a solution to an algorithmic problem is correct.
However, it is also important to understand how to certify properties of the network itself,
that is, to find locally checkable proofs that the network belongs to some graph class. There are several reasons for that.
First, because certifying some solutions can be hard in general graphs, while they become simpler on more restricted classes. To make use of this fact, it is important to be able to certify that the network does belong to the restricted class.
Second, because some distributed algorithms work only on some specific graph classes, and we need a way to ensure that the network does belong to the class, before running the algorithm. 
Third, the distinction between certifying solutions and network properties is rather weak, in the sense that the techniques are basically the same. So we should take advantage of the fact that a lot is known about graph classes to learn more about certification. 

In the domain of graph classes certification, 
there have been several results on various classes such as trees~\cite{KormanKP10}, bipartite graphs \cite{GoosS16} or graphs of bounded diameter \cite{Censor-HillelPP20}, but until two years ago little was known about essential classes, such as planar graphs. 
Recently, it has been shown that planar graphs and graphs of bounded genus can be certified with $O(\log n)$-bit labels~\cite{FeuilloleyFMRRT20, FeuilloleyFMRRT21, EsperetL21}. 
This size, $O(\log n)$, is the gold standard of certification, in the sense that little can be achieved with $o(\log n)$ bits, thus $O(\log n)$ is often the best we can hope for.

Planar and bounded-genus graphs are classic examples of graphs classes defined by forbidden minors, a type of characterization that has become essential in graph theory since the Graph minor series of Robertson and Seymour~\cite{RobertsonS85}. 
Remember that a graph $H$ is a minor of a graph $G$, is it possible to obtain $H$ from $G$ by deleting vertices, deleting edges, contracting edges. 
At this point, the natural research direction is to try to get the big picture of graph classes  certification, by understanding all classes defined by forbidden minors. In particular, we want to answer the following concrete question.

\begin{question}
[\cite{FeuilloleyFMRRT21, Feuilloley19}]
\label{conj:minorconj}
Can any graph class defined by a finite set of forbidden minors be certified with $O(\log n)$-bit certificates?
\end{question}

This open question is quite challenging: there are as many good reasons to believe that the answer is positive as negative.

First, the literature provides some reasons to believe that the conjecture is true. 
Properties that are known to be hard to certify, that is, that are known to require large certificates, are very different from minor-freeness. 
Specifically, all these properties (\emph{e.g.} small diameter \cite{Censor-HillelPP20}, non-3-colorability~\cite{GoosS16}, having a non-trivial automorphism \cite{GoosS16}) are non-hereditary. 
That is, removing a node or an edge may yield a graph that is not in the class. Intuitively, hereditary properties might be easier to certify in the sense that one does not need to encode information about every single edge or node, as the class is stable by removal of edges and nodes.  
Minor-freeness is a typical example of hereditary property. Moreover, this property, that has been intensively studied in the last decades, is known to carry a lot of structure, which is an argument in favor of the existence of a compact certification (that is a certification with $O(\log n)$-bit labels). 

On the other hand, from a graph theory perspective, it might be surprising that a general compact certification existed for minor-free graphs. 
Indeed, for the known results, obtaining a compact certification  is tightly linked to the existence of a precise constructive characterization of the class (\emph{e.g.} a planar embedding for planar graphs \cite{FeuilloleyFMRRT20, EsperetL21}, or a canonical path to the root for trees \cite{KormanKP10}). Intuitively, this is because forbidden minor characterizations are about structures that are absent from the graphs, and local certification is often about certifying the existence of some structures.
While such a characterization is known for some restricted minor-closed classes, we are far from having such a characterization for every minor-closed class. 
Note that there are a lot of combinatorial and algorithmic results on $H$-minor free graphs, but they actually follow from properties satisfied by $H$-minor free graphs, not from exact characterizations of such graphs. For certification, we need to rule out the graphs that do not belong to the class, hence a characterization is somehow necessary.

\subsection{Our results}

Answering Question~\ref{conj:minorconj} seems unfortunately out of reach, at the current state of our knowledge. We have explained above about why designing compact certification is hard for classes that do not have  a constructive characterization. We will later give some intuition about why lower bounds seem equally difficult to get. 
In this paper, we intend to build the foundations needed to tackle Question~\ref{conj:minorconj}. 
More precisely, we have four types of contributions. 

First, we show how to certify some graph decompositions. Such decompositions state how to build a class based on a few elementary graphs and a few simple operations. 
They are essential in structural graph theory, and more specifically in the study of minor-closed classes. 
Amongst the most famous examples of these theorems is the proof of the $4$-Color Theorem~\cite{appel1976every} or the Strong Perfect Graph Theorem~\cite{chudnovsky2006strong}.  

Second, we show that by directly applying these tools, we can design compact certification for several $H$-minor free classes, for which a precise characterization is known. See Fig.~\ref{fig:our-results-certification} and \ref{fig:minors}. That is, we answer positively Question~\ref{conj:minorconj}, for several small minors, and show that our decomposition tools can easily be used.

\begin{figure}[!h]
    \begin{center}
    \begin{tabular}{|c|c|c|c|}
    \hline
        Class 
        & Optimal size
        & Result \\
    \hline
        $K_3$-minor free 
        & $\Theta(\log n)$ 
        & 
        \begin{minipage}{3.0cm}
            \centering
            \vspace{0.1cm}
            Equivalent to\\ 
            acyclicity \cite{KormanKP10, GoosS16}.
            \vspace{0.1cm}
        \end{minipage} \\
    \hline
        Diamond-minor-free 
        & $\Theta(\log n)$ 
        & Corollary~\ref{coro:diamond-etc} \\
    \hline
        $K_4$-minor-free 
        & $\Theta(\log n)$ 
        & Corollary~\ref{coro:diamond-etc}\\
    \hline
        $K_{2,3}$-minor-free 
        & $\Theta(\log n)$ 
        & Corollary~\ref{coro:diamond-etc}\\
    \hline
        \begin{minipage}{3.5cm}
            \centering
            \vspace{0.1cm}
            $(K_{2,3},K_4)$-minor-free \\
            (\emph{i.e.} outerplanar)
            \vspace{0.1cm}
        \end{minipage}
        & $\Theta(\log n)$ 
        & Corollary~\ref{coro:diamond-etc}\\
    \hline
        $K_{2,4}$-minor-free 
        & $\Theta(\log n)$
        & Lemma~\ref{lem:2-connected-K24}\\
    \hline
    \end{tabular} 
    \caption{\label{fig:our-results-certification}
    Our main results for the certification of minor-closed classes.}
    \end{center}
\end{figure}

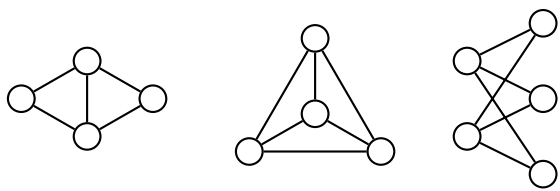
\begin{figure}[!ht]
\centering
\begin{tikzpicture}[thick,v/.style={draw, circle, minimum size=10pt}]
\node[v] (A) at (0,0.5) {};
\node[v] (B) at (0,-0.5) {};
\node[v] (C) at (-0.866,0) {};
\node[v] (D) at (0.866,0) {};
\draw (A) -- (C) -- (B) -- (D) -- (A) -- (B);
\tikzset{xshift=3cm,yshift=-0.2cm}
\node[v] (A) at (90:1) {};
\node[v] (B) at (210:1) {};
\node[v] (C) at (-30:1) {};
\node[v] (D) at (0,0) {};
\draw (D) -- (A) -- (B) -- (D) -- (C) -- (A);
\draw (B) -- (C);
\tikzset{xshift=2cm,yshift=0.2cm}
\node[v] (A) at (0,0.5) {};
\node[v] (B) at (0,-0.5) {};
\node[v] (C) at (1,-1) {};
\node[v] (D) at (1,0) {};
\node[v] (E) at (1,1) {};
\draw (A) -- (C) -- (B) -- (D) -- (A) -- (E) -- (B);
\end{tikzpicture}
\caption{
\label{fig:minors}
From left to right: the diamond, the clique on 4 vertices $K_4$, and the complete bipartite graph~$K_{2,3}$.}
\end{figure}

Third, we do a systematic study of small minors to identify which is the first one that we cannot tackle. 
We first prove the following theorem. 

\begin{restatable}{theorem}{ThmFourVertices}
\label{thm:4vertices}
$H$-minor-free classes can be certified in $O(\log n)$ bits when $H$ has at most 4 vertices.
\end{restatable}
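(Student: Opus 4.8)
The plan is to prove Theorem~\ref{thm:4vertices} by enumerating the graphs $H$ on at most four vertices and treating each $H$-minor-free class using the decomposition tools announced earlier. The key observation is that with few vertices, the forbidden minor $H$ is very constrained, so the class of $H$-minor-free graphs admits a clean structural characterization that should be directly certifiable. I would first dispose of the trivial and degenerate cases: if $H$ has an isolated vertex, or more generally if $H$ is disconnected, the $H$-minor-free condition can be reduced (via the block-cut tree and connectivity tools) to forbidding each connected component appropriately, so the main work is for connected $H$ on $\{2,3,4\}$ vertices. When $H$ has one or two vertices, $H$-minor-freeness is essentially vacuous or forces the graph to have no edges, which is trivially certifiable in $O(\log n)$ bits.

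The substance lies in the connected graphs on three and four vertices. For each such $H$, the plan is to recall (or derive) the known structural description of $H$-minor-free graphs. For instance, $K_3$-minor-free graphs are forests, which are certified with $O(\log n)$ bits by the classic tree certification of~\cite{KormanKP10}; $K_4$-minor-free graphs are exactly the graphs of treewidth at most~$2$, built from edges by the series-parallel operations, which matches precisely the edge- and node-expansion and $2$-connectivity decomposition tools developed in this paper. The remaining four-vertex graphs—paths, cycles, stars, the "paw", the "diamond" $K_4^-$ (denoted $\Hm$ in the preamble), and so on—each correspond to a bounded-treewidth or otherwise well-understood class, and for each I would exhibit a recursive decomposition along cut vertices or along $2$-cuts, certify the pieces, and glue them using the block-cut tree certification. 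The congestion issue flagged in the introduction is exactly what must be controlled here: when many blocks share a cut vertex, we cannot afford to store one certificate per block at that vertex, so I would route the block-structure information through the block-cut tree so that each node carries only $O(\log n)$ bits.

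I expect the main obstacle to be the densest four-vertex minor, $K_4$ itself (and to a lesser extent $\Hm=K_4^-$ and the diamond), because here the class is genuinely two-dimensional rather than tree-like: $K_4$-minor-free graphs require certifying treewidth~$2$, i.e.\ a full series-parallel/$\mathrm{SPQR}$-type decomposition, rather than a plain spanning-tree argument. The technical heart is to turn the recursive series-parallel construction into a locally checkable certificate of size $O(\log n)$: one must certify the $2$-connected components and their arrangement in a tree of triconnected pieces while avoiding the certificate congestion at $2$-cuts that are shared by many parts. This is precisely where the $3$-connectivity and $2$-edge-connectivity decomposition tools proved earlier in the paper should be invoked as black boxes, reducing the proof of each remaining case to verifying that the corresponding class is built by the expansion operations these tools can already certify.
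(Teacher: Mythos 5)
Your overall architecture matches the paper's: a case analysis over the graphs $H$ on at most four vertices, a reduction to $2$-connected components via the block-cut tree (Theorem~\ref{thm:2-connected-to-general}), and, for the hardest case $H=K_4$, the identification of $2$-connected $K_4$-minor-free graphs with $2$-connected series-parallel graphs (Theorem~\ref{thm:2-connected-series-parallel}), the other four-vertex minors being dispatched through Corollary~\ref{coro:diamond-etc} or elementary structural arguments.

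Two steps of your plan would fail as written. First, your handling of disconnected $H$ is not a valid reduction: $H$-minor-freeness for disconnected $H$ cannot be recast as ``forbidding each connected component appropriately'' (taken literally, forbidding a $K_1$ component would force the empty graph, and the class of $(K_3+K_1)$-minor-free graphs is not the intersection or union of component-wise classes). What saves these cases for $|H|\le 4$ is a direct structural argument, e.g.\ for $H$ a triangle plus a vertex of degree at most one, any $H$-minor-free graph containing a cycle must have its shortest cycle Hamiltonian, so the graph is a tree or a cycle; you need an argument of this kind, not a compositional one. Second, for $K_4$ you propose to invoke an SPQR-type decomposition and the $3$-connectivity tools as black boxes; the paper deliberately avoids SPQR trees (its discussion section explains that the edge removals involved make them hard to certify) and instead certifies $2$-connected series-parallel graphs via Eppstein's nested-ear-decomposition characterization combined with the path-outerplanar certification of Lemma~\ref{lem:path-outerplanar}. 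A minor notational slip: $\Hm$ in this paper denotes $H$ minus an edge in the lower-bound section, not the diamond.
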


Then, we extend this theorem to minors on five vertices with a specific shape, proving along the way new purely graph-theoretic characterizations for the associated classes. 
After this study, we can conclude that the next challenge is to understand $K_5$-minor free graphs.

Finally, we prove a general $\Omega(\log n)$ lower bounds for $H$-minor-freeness for all 2-connected graphs~$H$. 
This generalizes and simplifies the lower bounds  of \cite{FeuilloleyFMRRT20} which apply only to $K_k$ and $K_{p,q}$-minor-free graphs, and use ad-hoc and more complicated techniques.  

At the end of the paper, we discuss why the current tools we have, both in terms of upper and lower bounds, do not allow settling Question~\ref{conj:minorconj}. 
We list a few key questions that we need to answer before we can fully understand the certification of minor-closed classes, from the certification of classes with no tree minors to the certification $k$-connectivity, for arbitrary $k$. 

\subsection{Our techniques}
\label{subsec:our-techniques}

\paragraph*{General approach and challenges}
To give some intuition about our techniques, let us focus on a concrete example: $K_4$-minor-free graphs. Remember that a graph has $K_4$-minor if we can get a $K_4$ by deleting vertices and edges, and contracting edges. 
An alternative definition is that a graph has a $K_4$-minor, if it is possible to find four disjoint sets of vertices, called \emph{bags}, such that: each bag is connected, there is a  path between each pair of bags, these paths and bags are all vertex-disjoint (except for the endpoints of the paths that coincide with vertices of the bags). 
See Figure~\ref{fig:K4-minor}.

\begin{figure}[!h]
    \centering
    \begin{tabular}{ccc}
    \begin{minipage}{0.32 \textwidth}
        \scalebox{0.6}{
    \tikzset{every picture/.style={line width=0.75pt}} 

\begin{tikzpicture}[x=0.75pt,y=0.75pt,yscale=-1,xscale=1]

\draw [line width=1.5]    (237.55,53.55) -- (295.55,73.55) ;
\draw [line width=1.5]    (143.3,119.55) -- (177.55,75.55) ;
\draw [line width=1.5]    (177.55,75.55) -- (237.55,53.55) ;
\draw [line width=1.5]    (217.55,124.55) -- (177.55,75.55) ;
\draw [line width=1.5]    (217.55,124.55) -- (144.6,122.3) ;
\draw [line width=1.5]    (151.55,190.55) -- (144.6,122.3) ;
\draw [line width=1.5]    (151.55,190.55) -- (208.55,231.55) ;
\draw [line width=1.5]    (208.55,231.55) -- (281.55,231.55) ;
\draw [line width=1.5]    (281.55,231.55) -- (340.55,191.55) ;
\draw [line width=1.5]    (340.55,191.55) -- (345.02,118.02) ;
\draw [line width=1.5]    (345.02,118.02) -- (295.55,73.55) ;
\draw [line width=1.5]    (340.55,191.55) -- (272.55,174.55) ;
\draw [line width=1.5]    (272.55,174.55) -- (217.55,124.55) ;
\draw [line width=1.5]    (271.55,123.55) -- (255.74,137.7) -- (214.55,174.55) ;
\draw [line width=1.5]    (214.55,174.55) -- (208.55,231.55) ;
\draw [line width=1.5]    (295.55,73.55) -- (271.55,123.55) ;
\draw  [fill={rgb, 255:red, 255; green, 255; blue, 255 }  ,fill opacity=1 ][line width=1.5]  (261.8,123.55) .. controls (261.8,118.17) and (266.17,113.8) .. (271.55,113.8) .. controls (276.93,113.8) and (281.3,118.17) .. (281.3,123.55) .. controls (281.3,128.93) and (276.93,133.3) .. (271.55,133.3) .. controls (266.17,133.3) and (261.8,128.93) .. (261.8,123.55) -- cycle ;
\draw  [fill={rgb, 255:red, 255; green, 255; blue, 255 }  ,fill opacity=1 ][line width=1.5]  (285.8,73.55) .. controls (285.8,68.17) and (290.17,63.8) .. (295.55,63.8) .. controls (300.93,63.8) and (305.3,68.17) .. (305.3,73.55) .. controls (305.3,78.93) and (300.93,83.3) .. (295.55,83.3) .. controls (290.17,83.3) and (285.8,78.93) .. (285.8,73.55) -- cycle ;
\draw  [fill={rgb, 255:red, 255; green, 255; blue, 255 }  ,fill opacity=1 ][line width=1.5]  (227.8,53.55) .. controls (227.8,48.17) and (232.17,43.8) .. (237.55,43.8) .. controls (242.93,43.8) and (247.3,48.17) .. (247.3,53.55) .. controls (247.3,58.93) and (242.93,63.3) .. (237.55,63.3) .. controls (232.17,63.3) and (227.8,58.93) .. (227.8,53.55) -- cycle ;
\draw  [fill={rgb, 255:red, 255; green, 255; blue, 255 }  ,fill opacity=1 ][line width=1.5]  (167.8,75.55) .. controls (167.8,70.17) and (172.17,65.8) .. (177.55,65.8) .. controls (182.93,65.8) and (187.3,70.17) .. (187.3,75.55) .. controls (187.3,80.93) and (182.93,85.3) .. (177.55,85.3) .. controls (172.17,85.3) and (167.8,80.93) .. (167.8,75.55) -- cycle ;
\draw  [fill={rgb, 255:red, 255; green, 255; blue, 255 }  ,fill opacity=1 ][line width=1.5]  (134.85,122.3) .. controls (134.85,116.92) and (139.22,112.55) .. (144.6,112.55) .. controls (149.98,112.55) and (154.35,116.92) .. (154.35,122.3) .. controls (154.35,127.68) and (149.98,132.05) .. (144.6,132.05) .. controls (139.22,132.05) and (134.85,127.68) .. (134.85,122.3) -- cycle ;
\draw  [fill={rgb, 255:red, 255; green, 255; blue, 255 }  ,fill opacity=1 ][line width=1.5]  (141.8,190.55) .. controls (141.8,185.17) and (146.17,180.8) .. (151.55,180.8) .. controls (156.93,180.8) and (161.3,185.17) .. (161.3,190.55) .. controls (161.3,195.93) and (156.93,200.3) .. (151.55,200.3) .. controls (146.17,200.3) and (141.8,195.93) .. (141.8,190.55) -- cycle ;
\draw  [fill={rgb, 255:red, 255; green, 255; blue, 255 }  ,fill opacity=1 ][line width=1.5]  (204.8,174.55) .. controls (204.8,169.17) and (209.17,164.8) .. (214.55,164.8) .. controls (219.93,164.8) and (224.3,169.17) .. (224.3,174.55) .. controls (224.3,179.93) and (219.93,184.3) .. (214.55,184.3) .. controls (209.17,184.3) and (204.8,179.93) .. (204.8,174.55) -- cycle ;
\draw  [fill={rgb, 255:red, 255; green, 255; blue, 255 }  ,fill opacity=1 ][line width=1.5]  (198.8,231.55) .. controls (198.8,226.17) and (203.17,221.8) .. (208.55,221.8) .. controls (213.93,221.8) and (218.3,226.17) .. (218.3,231.55) .. controls (218.3,236.93) and (213.93,241.3) .. (208.55,241.3) .. controls (203.17,241.3) and (198.8,236.93) .. (198.8,231.55) -- cycle ;
\draw  [fill={rgb, 255:red, 255; green, 255; blue, 255 }  ,fill opacity=1 ][line width=1.5]  (271.8,231.55) .. controls (271.8,226.17) and (276.17,221.8) .. (281.55,221.8) .. controls (286.93,221.8) and (291.3,226.17) .. (291.3,231.55) .. controls (291.3,236.93) and (286.93,241.3) .. (281.55,241.3) .. controls (276.17,241.3) and (271.8,236.93) .. (271.8,231.55) -- cycle ;
\draw  [fill={rgb, 255:red, 255; green, 255; blue, 255 }  ,fill opacity=1 ][line width=1.5]  (330.8,191.55) .. controls (330.8,186.17) and (335.17,181.8) .. (340.55,181.8) .. controls (345.93,181.8) and (350.3,186.17) .. (350.3,191.55) .. controls (350.3,196.93) and (345.93,201.3) .. (340.55,201.3) .. controls (335.17,201.3) and (330.8,196.93) .. (330.8,191.55) -- cycle ;
\draw  [fill={rgb, 255:red, 255; green, 255; blue, 255 }  ,fill opacity=1 ][line width=1.5]  (262.8,174.55) .. controls (262.8,169.17) and (267.17,164.8) .. (272.55,164.8) .. controls (277.93,164.8) and (282.3,169.17) .. (282.3,174.55) .. controls (282.3,179.93) and (277.93,184.3) .. (272.55,184.3) .. controls (267.17,184.3) and (262.8,179.93) .. (262.8,174.55) -- cycle ;
\draw  [fill={rgb, 255:red, 255; green, 255; blue, 255 }  ,fill opacity=1 ][line width=1.5]  (207.8,124.55) .. controls (207.8,119.17) and (212.17,114.8) .. (217.55,114.8) .. controls (222.93,114.8) and (227.3,119.17) .. (227.3,124.55) .. controls (227.3,129.93) and (222.93,134.3) .. (217.55,134.3) .. controls (212.17,134.3) and (207.8,129.93) .. (207.8,124.55) -- cycle ;
\draw [line width=1.5]    (402.55,88.55) -- (345.02,118.02) ;
\draw [line width=1.5]    (402.55,88.55) -- (405.55,155.55) ;
\draw [line width=1.5]    (345.02,118.02) -- (405.55,155.55) ;
\draw  [fill={rgb, 255:red, 255; green, 255; blue, 255 }  ,fill opacity=1 ][line width=1.5]  (392.8,88.55) .. controls (392.8,83.17) and (397.17,78.8) .. (402.55,78.8) .. controls (407.93,78.8) and (412.3,83.17) .. (412.3,88.55) .. controls (412.3,93.93) and (407.93,98.3) .. (402.55,98.3) .. controls (397.17,98.3) and (392.8,93.93) .. (392.8,88.55) -- cycle ;
\draw  [fill={rgb, 255:red, 255; green, 255; blue, 255 }  ,fill opacity=1 ][line width=1.5]  (395.8,155.55) .. controls (395.8,150.17) and (400.17,145.8) .. (405.55,145.8) .. controls (410.93,145.8) and (415.3,150.17) .. (415.3,155.55) .. controls (415.3,160.93) and (410.93,165.3) .. (405.55,165.3) .. controls (400.17,165.3) and (395.8,160.93) .. (395.8,155.55) -- cycle ;
\draw  [fill={rgb, 255:red, 255; green, 255; blue, 255 }  ,fill opacity=1 ][line width=1.5]  (334.73,118.02) .. controls (334.73,112.34) and (339.34,107.73) .. (345.02,107.73) .. controls (350.7,107.73) and (355.3,112.34) .. (355.3,118.02) .. controls (355.3,123.7) and (350.7,128.3) .. (345.02,128.3) .. controls (339.34,128.3) and (334.73,123.7) .. (334.73,118.02) -- cycle ;

\end{tikzpicture}}
    \end{minipage}
         & 
    \begin{minipage}{0.35 \textwidth}
        \scalebox{0.6}{
    \input{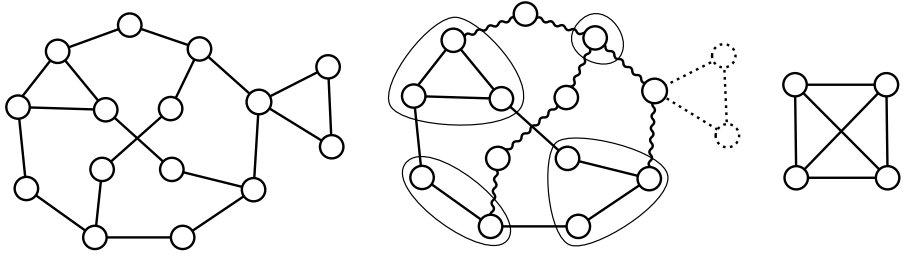}}
    \end{minipage}
    &
    \begin{minipage}{0.2 \textwidth}
        \scalebox{0.6}{
    \tikzset{every picture/.style={line width=0.75pt}} 

\begin{tikzpicture}[x=0.75pt,y=0.75pt,yscale=-1,xscale=1]

\draw [line width=1.5]    (574.55,94.55) -- (498.55,94.55) ;
\draw [line width=1.5]    (499.55,172.55) -- (498.55,94.55) ;
\draw [line width=1.5]    (575.55,172.55) -- (574.55,94.55) ;
\draw [line width=1.5]    (499.55,172.55) -- (575.55,172.55) ;
\draw [line width=1.5]    (499.55,172.55) -- (574.55,94.55) ;
\draw [line width=1.5]    (575.55,172.55) -- (498.55,94.55) ;
\draw  [fill={rgb, 255:red, 255; green, 255; blue, 255 }  ,fill opacity=1 ][line width=1.5]  (564.8,94.55) .. controls (564.8,89.17) and (569.17,84.8) .. (574.55,84.8) .. controls (579.93,84.8) and (584.3,89.17) .. (584.3,94.55) .. controls (584.3,99.93) and (579.93,104.3) .. (574.55,104.3) .. controls (569.17,104.3) and (564.8,99.93) .. (564.8,94.55) -- cycle ;
\draw  [fill={rgb, 255:red, 255; green, 255; blue, 255 }  ,fill opacity=1 ][line width=1.5]  (565.8,172.55) .. controls (565.8,167.17) and (570.17,162.8) .. (575.55,162.8) .. controls (580.93,162.8) and (585.3,167.17) .. (585.3,172.55) .. controls (585.3,177.93) and (580.93,182.3) .. (575.55,182.3) .. controls (570.17,182.3) and (565.8,177.93) .. (565.8,172.55) -- cycle ;
\draw  [fill={rgb, 255:red, 255; green, 255; blue, 255 }  ,fill opacity=1 ][line width=1.5]  (489.8,172.55) .. controls (489.8,167.17) and (494.17,162.8) .. (499.55,162.8) .. controls (504.93,162.8) and (509.3,167.17) .. (509.3,172.55) .. controls (509.3,177.93) and (504.93,182.3) .. (499.55,182.3) .. controls (494.17,182.3) and (489.8,177.93) .. (489.8,172.55) -- cycle ;
\draw  [fill={rgb, 255:red, 255; green, 255; blue, 255 }  ,fill opacity=1 ][line width=1.5]  (488.8,94.55) .. controls (488.8,89.17) and (493.17,84.8) .. (498.55,84.8) .. controls (503.93,84.8) and (508.3,89.17) .. (508.3,94.55) .. controls (508.3,99.93) and (503.93,104.3) .. (498.55,104.3) .. controls (493.17,104.3) and (488.8,99.93) .. (488.8,94.55) -- cycle ;

\end{tikzpicture}}
    \end{minipage}
    \end{tabular}
    
    \caption{The graph on the left has a $K_4$ minor. Indeed, the bags of the second definition are depicted in the picture in the middle, and it is easy to find the six disjoint paths that link them. 
   Alternatively, one can get a $K_4$ like the one of the right-most picture by contracting all the edges inside the bags, contracting the wavy paths between bags into edges, and deleting the dotted vertices and edges.}
    \label{fig:K4-minor}
\end{figure}

An important observation is that, if we take a collection $F_1$, ..., $F_k$ of $K_4$-minor-free graphs, and organize them into a tree, by identifying pairs of vertices like in Figure~\ref{fig:block-cut-tree}, we get a $K_4$-minor-free graph. 

\begin{figure}[!h]
    \centering
    \begin{tabular}{cc}
    \begin{minipage}{0.5 \textwidth}
        \scalebox{0.6}{
    \tikzset{every picture/.style={line width=0.75pt}} 

\begin{tikzpicture}[x=0.75pt,y=0.75pt,yscale=-1,xscale=1]

\draw [line width=1.5]    (273.3,100.55) -- (317.3,56.55) ;
\draw [line width=1.5]    (363.55,105.55) -- (317.3,56.55) ;
\draw [line width=1.5]    (363.55,105.55) -- (274.6,103.3) ;
\draw  [fill={rgb, 255:red, 255; green, 255; blue, 255 }  ,fill opacity=1 ][line width=1.5]  (307.55,56.55) .. controls (307.55,51.17) and (311.92,46.8) .. (317.3,46.8) .. controls (322.68,46.8) and (327.05,51.17) .. (327.05,56.55) .. controls (327.05,61.93) and (322.68,66.3) .. (317.3,66.3) .. controls (311.92,66.3) and (307.55,61.93) .. (307.55,56.55) -- cycle ;
\draw  [fill={rgb, 255:red, 255; green, 255; blue, 255 }  ,fill opacity=1 ][line width=1.5]  (264.85,103.3) .. controls (264.85,97.92) and (269.22,93.55) .. (274.6,93.55) .. controls (279.98,93.55) and (284.35,97.92) .. (284.35,103.3) .. controls (284.35,108.68) and (279.98,113.05) .. (274.6,113.05) .. controls (269.22,113.05) and (264.85,108.68) .. (264.85,103.3) -- cycle ;
\draw [line width=1.5]    (282.55,195.55) -- (326.55,161.3) ;
\draw [line width=1.5]    (366.55,200.55) -- (326.55,161.3) ;
\draw  [fill={rgb, 255:red, 255; green, 255; blue, 255 }  ,fill opacity=1 ][line width=1.5]  (316.8,161.3) .. controls (316.8,155.92) and (321.17,151.55) .. (326.55,151.55) .. controls (331.93,151.55) and (336.3,155.92) .. (336.3,161.3) .. controls (336.3,166.68) and (331.93,171.05) .. (326.55,171.05) .. controls (321.17,171.05) and (316.8,166.68) .. (316.8,161.3) -- cycle ;
\draw [line width=1.5]    (157.55,158.55) -- (215.55,159.55) ;
\draw [line width=1.5]    (322.6,232.3) -- (366.55,200.55) ;
\draw  [fill={rgb, 255:red, 255; green, 255; blue, 255 }  ,fill opacity=1 ][line width=1.5]  (147.8,158.55) .. controls (147.8,153.17) and (152.17,148.8) .. (157.55,148.8) .. controls (162.93,148.8) and (167.3,153.17) .. (167.3,158.55) .. controls (167.3,163.93) and (162.93,168.3) .. (157.55,168.3) .. controls (152.17,168.3) and (147.8,163.93) .. (147.8,158.55) -- cycle ;
\draw [line width=1.5]    (156.55,214.55) -- (214.55,215.55) ;
\draw [line width=1.5]    (282.55,195.55) -- (322.6,232.3) ;
\draw [line width=1.5]    (420.55,175.55) -- (418.55,122.55) ;
\draw [line width=1.5]    (418.55,122.55) -- (488.55,98.55) ;
\draw [line width=1.5]    (420.55,175.55) -- (493.55,191.55) ;
\draw [line width=1.5]    (493.55,191.55) -- (541.55,141.55) ;
\draw [line width=1.5]    (488.55,98.55) -- (541.55,141.55) ;
\draw [line width=1.5]    (488.55,98.55) -- (420.55,175.55) ;
\draw [line width=1.5]    (488.55,98.55) -- (493.55,191.55) ;
\draw  [fill={rgb, 255:red, 255; green, 255; blue, 255 }  ,fill opacity=1 ][line width=1.5]  (312.85,232.3) .. controls (312.85,226.92) and (317.22,222.55) .. (322.6,222.55) .. controls (327.98,222.55) and (332.35,226.92) .. (332.35,232.3) .. controls (332.35,237.68) and (327.98,242.05) .. (322.6,242.05) .. controls (317.22,242.05) and (312.85,237.68) .. (312.85,232.3) -- cycle ;
\draw  [fill={rgb, 255:red, 255; green, 255; blue, 255 }  ,fill opacity=1 ][line width=1.5]  (146.8,214.55) .. controls (146.8,209.17) and (151.17,204.8) .. (156.55,204.8) .. controls (161.93,204.8) and (166.3,209.17) .. (166.3,214.55) .. controls (166.3,219.93) and (161.93,224.3) .. (156.55,224.3) .. controls (151.17,224.3) and (146.8,219.93) .. (146.8,214.55) -- cycle ;
\draw  [fill={rgb, 255:red, 255; green, 255; blue, 255 }  ,fill opacity=1 ][line width=1.5]  (483.8,191.55) .. controls (483.8,186.17) and (488.17,181.8) .. (493.55,181.8) .. controls (498.93,181.8) and (503.3,186.17) .. (503.3,191.55) .. controls (503.3,196.93) and (498.93,201.3) .. (493.55,201.3) .. controls (488.17,201.3) and (483.8,196.93) .. (483.8,191.55) -- cycle ;
\draw  [fill={rgb, 255:red, 255; green, 255; blue, 255 }  ,fill opacity=1 ][line width=1.5]  (531.8,141.55) .. controls (531.8,136.17) and (536.17,131.8) .. (541.55,131.8) .. controls (546.93,131.8) and (551.3,136.17) .. (551.3,141.55) .. controls (551.3,146.93) and (546.93,151.3) .. (541.55,151.3) .. controls (536.17,151.3) and (531.8,146.93) .. (531.8,141.55) -- cycle ;
\draw  [fill={rgb, 255:red, 255; green, 255; blue, 255 }  ,fill opacity=1 ][line width=1.5]  (478.8,98.55) .. controls (478.8,93.17) and (483.17,88.8) .. (488.55,88.8) .. controls (493.93,88.8) and (498.3,93.17) .. (498.3,98.55) .. controls (498.3,103.93) and (493.93,108.3) .. (488.55,108.3) .. controls (483.17,108.3) and (478.8,103.93) .. (478.8,98.55) -- cycle ;
\draw [line width=1.5]  [dash pattern={on 1.69pt off 2.76pt}]  (215.55,159.55) -- (282.55,195.55) ;
\draw [line width=1.5]  [dash pattern={on 1.69pt off 2.76pt}]  (214.55,215.55) -- (282.55,195.55) ;
\draw [line width=1.5]  [dash pattern={on 1.69pt off 2.76pt}]  (366.55,200.55) -- (420.55,175.55) ;
\draw [line width=1.5]  [dash pattern={on 1.69pt off 2.76pt}]  (363.55,105.55) -- (418.55,122.55) ;
\draw  [fill={rgb, 255:red, 255; green, 255; blue, 255 }  ,fill opacity=1 ][line width=1.5]  (353.8,105.55) .. controls (353.8,100.17) and (358.17,95.8) .. (363.55,95.8) .. controls (368.93,95.8) and (373.3,100.17) .. (373.3,105.55) .. controls (373.3,110.93) and (368.93,115.3) .. (363.55,115.3) .. controls (358.17,115.3) and (353.8,110.93) .. (353.8,105.55) -- cycle ;
\draw  [fill={rgb, 255:red, 255; green, 255; blue, 255 }  ,fill opacity=1 ][line width=1.5]  (408.8,122.55) .. controls (408.8,117.17) and (413.17,112.8) .. (418.55,112.8) .. controls (423.93,112.8) and (428.3,117.17) .. (428.3,122.55) .. controls (428.3,127.93) and (423.93,132.3) .. (418.55,132.3) .. controls (413.17,132.3) and (408.8,127.93) .. (408.8,122.55) -- cycle ;
\draw  [fill={rgb, 255:red, 255; green, 255; blue, 255 }  ,fill opacity=1 ][line width=1.5]  (356.8,200.55) .. controls (356.8,195.17) and (361.17,190.8) .. (366.55,190.8) .. controls (371.93,190.8) and (376.3,195.17) .. (376.3,200.55) .. controls (376.3,205.93) and (371.93,210.3) .. (366.55,210.3) .. controls (361.17,210.3) and (356.8,205.93) .. (356.8,200.55) -- cycle ;
\draw  [fill={rgb, 255:red, 255; green, 255; blue, 255 }  ,fill opacity=1 ][line width=1.5]  (410.8,175.55) .. controls (410.8,170.17) and (415.17,165.8) .. (420.55,165.8) .. controls (425.93,165.8) and (430.3,170.17) .. (430.3,175.55) .. controls (430.3,180.93) and (425.93,185.3) .. (420.55,185.3) .. controls (415.17,185.3) and (410.8,180.93) .. (410.8,175.55) -- cycle ;
\draw  [fill={rgb, 255:red, 255; green, 255; blue, 255 }  ,fill opacity=1 ][line width=1.5]  (205.8,159.55) .. controls (205.8,154.17) and (210.17,149.8) .. (215.55,149.8) .. controls (220.93,149.8) and (225.3,154.17) .. (225.3,159.55) .. controls (225.3,164.93) and (220.93,169.3) .. (215.55,169.3) .. controls (210.17,169.3) and (205.8,164.93) .. (205.8,159.55) -- cycle ;
\draw  [fill={rgb, 255:red, 255; green, 255; blue, 255 }  ,fill opacity=1 ][line width=1.5]  (204.8,215.55) .. controls (204.8,210.17) and (209.17,205.8) .. (214.55,205.8) .. controls (219.93,205.8) and (224.3,210.17) .. (224.3,215.55) .. controls (224.3,220.93) and (219.93,225.3) .. (214.55,225.3) .. controls (209.17,225.3) and (204.8,220.93) .. (204.8,215.55) -- cycle ;
\draw  [fill={rgb, 255:red, 255; green, 255; blue, 255 }  ,fill opacity=1 ][line width=1.5]  (272.8,195.55) .. controls (272.8,190.17) and (277.17,185.8) .. (282.55,185.8) .. controls (287.93,185.8) and (292.3,190.17) .. (292.3,195.55) .. controls (292.3,200.93) and (287.93,205.3) .. (282.55,205.3) .. controls (277.17,205.3) and (272.8,200.93) .. (272.8,195.55) -- cycle ;

\end{tikzpicture}}
    \end{minipage}
         & 
    \begin{minipage}{0.4 \textwidth}
        \scalebox{0.6}{
    \tikzset{every picture/.style={line width=0.75pt}} 

\begin{tikzpicture}[x=0.75pt,y=0.75pt,yscale=-1,xscale=1]

\draw [line width=1.5]    (328.3,117.55) -- (372.3,73.55) ;
\draw [line width=1.5]    (418.55,122.55) -- (372.3,73.55) ;
\draw [line width=1.5]    (418.55,122.55) -- (329.6,120.3) ;
\draw  [fill={rgb, 255:red, 255; green, 255; blue, 255 }  ,fill opacity=1 ][line width=1.5]  (362.55,73.55) .. controls (362.55,68.17) and (366.92,63.8) .. (372.3,63.8) .. controls (377.68,63.8) and (382.05,68.17) .. (382.05,73.55) .. controls (382.05,78.93) and (377.68,83.3) .. (372.3,83.3) .. controls (366.92,83.3) and (362.55,78.93) .. (362.55,73.55) -- cycle ;
\draw  [fill={rgb, 255:red, 255; green, 255; blue, 255 }  ,fill opacity=1 ][line width=1.5]  (319.85,120.3) .. controls (319.85,114.92) and (324.22,110.55) .. (329.6,110.55) .. controls (334.98,110.55) and (339.35,114.92) .. (339.35,120.3) .. controls (339.35,125.68) and (334.98,130.05) .. (329.6,130.05) .. controls (324.22,130.05) and (319.85,125.68) .. (319.85,120.3) -- cycle ;
\draw [line width=1.5]    (338.91,194.49) -- (371.44,149.21) ;
\draw [line width=1.5]    (420.9,175.56) -- (371.44,149.21) ;
\draw  [fill={rgb, 255:red, 255; green, 255; blue, 255 }  ,fill opacity=1 ][line width=1.5]  (362.09,151.96) .. controls (360.57,146.79) and (363.52,141.37) .. (368.69,139.85) .. controls (373.85,138.33) and (379.27,141.29) .. (380.79,146.45) .. controls (382.32,151.62) and (379.36,157.04) .. (374.2,158.56) .. controls (369.03,160.08) and (363.61,157.13) .. (362.09,151.96) -- cycle ;
\draw [line width=1.5]    (288.71,165.07) -- (338.39,195.03) ;
\draw [line width=1.5]    (387.71,218.43) -- (420.9,175.56) ;
\draw  [fill={rgb, 255:red, 255; green, 255; blue, 255 }  ,fill opacity=1 ][line width=1.5]  (280.28,160.18) .. controls (282.98,155.53) and (288.94,153.94) .. (293.6,156.64) .. controls (298.26,159.34) and (299.85,165.3) .. (297.15,169.96) .. controls (294.45,174.62) and (288.48,176.21) .. (283.82,173.51) .. controls (279.16,170.81) and (277.58,164.84) .. (280.28,160.18) -- cycle ;
\draw [line width=1.5]    (286.92,222.62) -- (338.18,195.48) ;
\draw [line width=1.5]    (338.91,194.49) -- (387.71,218.43) ;
\draw [line width=1.5]    (420.55,175.55) -- (418.55,122.55) ;
\draw [line width=1.5]    (418.55,122.55) -- (488.55,98.55) ;
\draw [line width=1.5]    (420.55,175.55) -- (493.55,191.55) ;
\draw [line width=1.5]    (493.55,191.55) -- (541.55,141.55) ;
\draw [line width=1.5]    (488.55,98.55) -- (541.55,141.55) ;
\draw [line width=1.5]    (488.55,98.55) -- (420.55,175.55) ;
\draw [line width=1.5]    (488.55,98.55) -- (493.55,191.55) ;
\draw  [fill={rgb, 255:red, 255; green, 255; blue, 255 }  ,fill opacity=1 ][line width=1.5]  (378.36,221.18) .. controls (376.84,216.02) and (379.79,210.6) .. (384.96,209.08) .. controls (390.12,207.56) and (395.54,210.51) .. (397.06,215.67) .. controls (398.59,220.84) and (395.63,226.26) .. (390.47,227.78) .. controls (385.3,229.3) and (379.88,226.35) .. (378.36,221.18) -- cycle ;
\draw  [fill={rgb, 255:red, 255; green, 255; blue, 255 }  ,fill opacity=1 ][line width=1.5]  (278.38,227.33) .. controls (275.78,222.61) and (277.49,216.68) .. (282.21,214.08) .. controls (286.92,211.48) and (292.85,213.19) .. (295.45,217.91) .. controls (298.05,222.62) and (296.34,228.56) .. (291.62,231.16) .. controls (286.91,233.76) and (280.98,232.04) .. (278.38,227.33) -- cycle ;
\draw  [fill={rgb, 255:red, 255; green, 255; blue, 255 }  ,fill opacity=1 ][line width=1.5]  (483.8,191.55) .. controls (483.8,186.17) and (488.17,181.8) .. (493.55,181.8) .. controls (498.93,181.8) and (503.3,186.17) .. (503.3,191.55) .. controls (503.3,196.93) and (498.93,201.3) .. (493.55,201.3) .. controls (488.17,201.3) and (483.8,196.93) .. (483.8,191.55) -- cycle ;
\draw  [fill={rgb, 255:red, 255; green, 255; blue, 255 }  ,fill opacity=1 ][line width=1.5]  (531.8,141.55) .. controls (531.8,136.17) and (536.17,131.8) .. (541.55,131.8) .. controls (546.93,131.8) and (551.3,136.17) .. (551.3,141.55) .. controls (551.3,146.93) and (546.93,151.3) .. (541.55,151.3) .. controls (536.17,151.3) and (531.8,146.93) .. (531.8,141.55) -- cycle ;
\draw  [fill={rgb, 255:red, 255; green, 255; blue, 255 }  ,fill opacity=1 ][line width=1.5]  (478.8,98.55) .. controls (478.8,93.17) and (483.17,88.8) .. (488.55,88.8) .. controls (493.93,88.8) and (498.3,93.17) .. (498.3,98.55) .. controls (498.3,103.93) and (493.93,108.3) .. (488.55,108.3) .. controls (483.17,108.3) and (478.8,103.93) .. (478.8,98.55) -- cycle ;
\draw  [fill={rgb, 255:red, 155; green, 155; blue, 155 }  ,fill opacity=1 ][line width=1.5]  (408.8,122.55) .. controls (408.8,117.17) and (413.17,112.8) .. (418.55,112.8) .. controls (423.93,112.8) and (428.3,117.17) .. (428.3,122.55) .. controls (428.3,127.93) and (423.93,132.3) .. (418.55,132.3) .. controls (413.17,132.3) and (408.8,127.93) .. (408.8,122.55) -- cycle ;
\draw  [fill={rgb, 255:red, 155; green, 155; blue, 155 }  ,fill opacity=1 ][line width=1.5]  (411.55,178.31) .. controls (410.03,173.15) and (412.98,167.72) .. (418.15,166.2) .. controls (423.31,164.68) and (428.73,167.64) .. (430.25,172.8) .. controls (431.78,177.97) and (428.82,183.39) .. (423.66,184.91) .. controls (418.49,186.43) and (413.07,183.48) .. (411.55,178.31) -- cycle ;
\draw  [fill={rgb, 255:red, 155; green, 155; blue, 155 }  ,fill opacity=1 ][line width=1.5]  (329.95,190.14) .. controls (332.65,185.48) and (338.62,183.89) .. (343.28,186.59) .. controls (347.94,189.29) and (349.52,195.26) .. (346.82,199.92) .. controls (344.12,204.57) and (338.16,206.16) .. (333.5,203.46) .. controls (328.84,200.76) and (327.25,194.8) .. (329.95,190.14) -- cycle ;

\end{tikzpicture}}
    \end{minipage}
    \end{tabular}    
    \caption{The five graphs with plain edges on the left picture are $K_4$-minor free. Organizing them into a tree by identifying the nodes linked by dotted edges makes a larger $K_4$-minor-free graph. }
    \label{fig:block-cut-tree}
\end{figure}
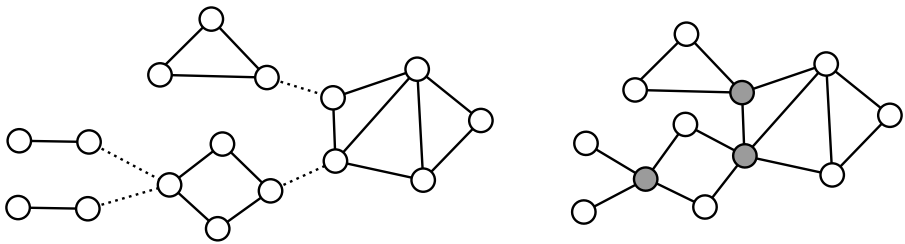

To see that, suppose that the graph we created has a $K_4$-minor. 
Then there exist bags and paths as described above. 
If the bags and paths are all contained in the same former $F_i$, then this $F_i$ would not be $K_4$-minor-free, which is a contradiction. 
If it is not the case, then the bags and paths use vertices that belong to different subgraphs $F_i$ and $F_j$. And because of connectivity, they should use a vertex $v$ that connects two such subgraphs (grey vertices in Figure~\ref{fig:block-cut-tree}). 
Then the bags and paths cannot be vertex-disjoint as required, because at least two of them should use the vertex $v$.

As a consequence of the observation above, a classic way to study $K_4$-minor-free graphs (as well as other classes) is to decompose the graph into maximal 2-connected components organized into a tree. 
This is called the \emph{block-cut tree} of the graph, where every maximal 2-connected component is called a \emph{block}. (Figure~\ref{fig:block-cut-tree} actually show the block-cut structure of the right-most graph.)
This is relevant here because $2$-connected $K_4$-minor-free graphs have a specific structure; we will come back to this later.

Now, from the certification point of view, there is a natural strategy: first certify the structure of the block-cut tree, and then certify the special structure of each block. There are several challenges to face with this approach. First, to certify the block-cut tree, it is essential to be able to certify the connectivity of the blocks. 
Second, we need to avoid what we call certificate congestion, which is the issue of having too large certificates because we use too many layers of certification on some nodes. We now detail these two aspects, starting with the latter.

\paragraph*{Avoiding certificate congestion}
In the block-cut tee of a graph, the blocks are attached to each other by shared vertices, the \emph{cut vertices}. There is no bound on the number of blocks that are attached to a given cut vertex, and this is problematic for certification. 
Indeed, we cannot give to every node the list of the blocks it belongs to, as we aim for $O(\log n)$ certificates, and such a list could contain $\Omega(n)$ blocks. 
And even if we could fix the certification of the block-cut tree, the same problem would appear with the certification of the specific structure of each block: the cut vertices would have to hold a piece of certification for each block. 

We basically have two tools to deal with this problem. The first one is not new, it is a degeneracy argument that already appeared in \cite{FeuilloleyFMRRT20, FeuilloleyFMRRT21}. 
A graph is $k$-degenerate if in every subgraph there exists a vertex that has degree at most $k$. Intuitively (and a bit incorrectly), this means that when we need to put a large certificate on a vertex, we can spread it on its some of its neighbors that have lower degree.
A more precise statement is that, for $k$-degenerate graphs, we can transform a certification with $O(f(n))$ labels \emph{on the edges of the graphs}, into a classic certification with $O(k \cdot f(n))$ labels on the vertices. 
This is relevant for our problem, as a priori there is less congestion on the edges, and minor-free classes have bounded degeneracy.
Unfortunately, this is not enough for our purpose. We then build a second, more versatile tool. 
It consists in proving that it is possible to transform in mechanical way any certification of a graph or subgraph, into a certification that would put an empty certificate on some given vertex. 
Once we have this tool, we can adapt the certification of the blocks to work well in the block-cut tree: build the block-cut tree by adding blocks iteratively, making sure that the connecting node has an empty label in the certification of the newly added block.

See Section~\ref{sec:avoiding-congestion} for the details on this topic.

\paragraph*{Certifying connectivity properties}

Connectivity properties have been studied before in distributed certification. 
Specifically, certifying that for two given vertices $s$ and $t$, the $st$-connectivity is at least $k$ has been studied in \cite{KormanKP10} and~\cite{GoosS16}. 
But here we are interested in the connectivity of the graph itself, or in other words, in the $st$-connectivity between any pair of vertices. 
Clearly, proving $st$-connectivity for any pair using the schemes of the literature would lead to huge certificates. 
Instead, we use the characterizations of $k$-connected graphs that are known for small values of $k$. There are various such characterizations, but they are all based on the same idea of \emph{ear decomposition}. 

To explain ear decompositions, consider a graph that we can build the following way (see Figure~\ref{fig:ear-decompo}). 
Start from an edge, and iteratively apply the following process: take two different nodes of the current graph and link them by a path whose internal nodes are new nodes of the graph. It is not hard to see that such a graph is always 2-connected. 
Remarkably, the converse is also true: any 2-connected graph can be built (or decomposed this way). This is called an open ear decomposition, and similar constructions characterize 2-edge connected graphs and 3-vertex-connected graphs.

\begin{figure}[!h]
    \centering
    \begin{tabular}{cc}
    \begin{minipage}{0.4 \textwidth}
        \scalebox{0.7}{
    \tikzset{every picture/.style={line width=0.75pt}} 

\begin{tikzpicture}[x=0.75pt,y=0.75pt,yscale=-1,xscale=1]

\draw [line width=1.5]    (227.65,51.65) -- (167.3,117.65) ;
\draw [line width=1.5]    (328.65,51.65) -- (227.65,51.65) ;
\draw [line width=1.5]    (328.65,187.65) -- (227.65,187.65) ;
\draw [line width=1.5]    (227.65,187.65) -- (167.3,117.65) ;
\draw [line width=1.5]    (384.65,117.3) -- (367.82,136.99) -- (328.65,187.65) ;
\draw [line width=1.5]    (384.65,117.3) -- (328.65,51.65) ;
\draw [line width=1.5]    (328.65,187.65) -- (328.65,51.65) ;
\draw [line width=1.5]    (227.65,187.65) -- (328.65,51.65) ;
\draw  [fill={rgb, 255:red, 255; green, 255; blue, 255 }  ,fill opacity=1 ][line width=1.5]  (216,51.65) .. controls (216,45.22) and (221.22,40) .. (227.65,40) .. controls (234.08,40) and (239.3,45.22) .. (239.3,51.65) .. controls (239.3,58.08) and (234.08,63.3) .. (227.65,63.3) .. controls (221.22,63.3) and (216,58.08) .. (216,51.65) -- cycle ;
\draw  [fill={rgb, 255:red, 255; green, 255; blue, 255 }  ,fill opacity=1 ][line width=1.5]  (317,51.65) .. controls (317,45.22) and (322.22,40) .. (328.65,40) .. controls (335.08,40) and (340.3,45.22) .. (340.3,51.65) .. controls (340.3,58.08) and (335.08,63.3) .. (328.65,63.3) .. controls (322.22,63.3) and (317,58.08) .. (317,51.65) -- cycle ;
\draw  [fill={rgb, 255:red, 255; green, 255; blue, 255 }  ,fill opacity=1 ][line width=1.5]  (317,187.65) .. controls (317,181.22) and (322.22,176) .. (328.65,176) .. controls (335.08,176) and (340.3,181.22) .. (340.3,187.65) .. controls (340.3,194.08) and (335.08,199.3) .. (328.65,199.3) .. controls (322.22,199.3) and (317,194.08) .. (317,187.65) -- cycle ;
\draw  [fill={rgb, 255:red, 255; green, 255; blue, 255 }  ,fill opacity=1 ][line width=1.5]  (373,117.3) .. controls (373,110.87) and (378.22,105.65) .. (384.65,105.65) .. controls (391.08,105.65) and (396.3,110.87) .. (396.3,117.3) .. controls (396.3,123.73) and (391.08,128.95) .. (384.65,128.95) .. controls (378.22,128.95) and (373,123.73) .. (373,117.3) -- cycle ;
\draw  [fill={rgb, 255:red, 255; green, 255; blue, 255 }  ,fill opacity=1 ][line width=1.5]  (216,187.65) .. controls (216,181.22) and (221.22,176) .. (227.65,176) .. controls (234.08,176) and (239.3,181.22) .. (239.3,187.65) .. controls (239.3,194.08) and (234.08,199.3) .. (227.65,199.3) .. controls (221.22,199.3) and (216,194.08) .. (216,187.65) -- cycle ;
\draw  [fill={rgb, 255:red, 255; green, 255; blue, 255 }  ,fill opacity=1 ][line width=1.5]  (155.65,117.65) .. controls (155.65,111.22) and (160.87,106) .. (167.3,106) .. controls (173.73,106) and (178.95,111.22) .. (178.95,117.65) .. controls (178.95,124.08) and (173.73,129.3) .. (167.3,129.3) .. controls (160.87,129.3) and (155.65,124.08) .. (155.65,117.65) -- cycle ;

\end{tikzpicture}}
    \end{minipage}
         & 
    \begin{minipage}{0.4 \textwidth}
        \scalebox{0.7}{
    \tikzset{every picture/.style={line width=0.75pt}} 

\begin{tikzpicture}[x=0.75pt,y=0.75pt,yscale=-1,xscale=1]

\draw [color={rgb, 255:red, 0; green, 0; blue, 0 }  ,draw opacity=1 ][line width=1.5]    (227.65,51.65) -- (167.3,117.65) ;
\draw [color={rgb, 255:red, 0; green, 0; blue, 0 }  ,draw opacity=1 ][line width=1.5]    (328.65,51.65) -- (227.65,51.65) ;
\draw [line width=3.75]    (328.65,187.65) -- (227.65,187.65) ;
\draw [color={rgb, 255:red, 0; green, 0; blue, 0 }  ,draw opacity=1 ][fill={rgb, 255:red, 155; green, 155; blue, 155 }  ,fill opacity=1 ][line width=1.5]    (227.65,187.65) -- (167.3,117.65) ;
\draw [line width=1.5]  [dash pattern={on 1.69pt off 2.76pt}]  (384.65,117.3) -- (367.82,136.99) -- (328.65,187.65) ;
\draw [line width=1.5]  [dash pattern={on 1.69pt off 2.76pt}]  (384.65,117.3) -- (328.65,51.65) ;
\draw [color={rgb, 255:red, 0; green, 0; blue, 0 }  ,draw opacity=1 ][line width=1.5]    (328.65,187.65) -- (328.65,51.65) ;
\draw [line width=1.5]    (227.65,187.65) .. controls (227.3,185.32) and (228.3,183.98) .. (230.63,183.64) .. controls (232.96,183.29) and (233.96,181.95) .. (233.61,179.62) .. controls (233.26,177.29) and (234.26,175.95) .. (236.59,175.61) .. controls (238.92,175.26) and (239.92,173.92) .. (239.57,171.59) .. controls (239.23,169.26) and (240.23,167.92) .. (242.56,167.58) .. controls (244.89,167.24) and (245.89,165.9) .. (245.54,163.57) .. controls (245.19,161.24) and (246.19,159.9) .. (248.52,159.55) .. controls (250.85,159.21) and (251.85,157.87) .. (251.5,155.54) .. controls (251.15,153.21) and (252.15,151.87) .. (254.48,151.52) .. controls (256.81,151.18) and (257.81,149.84) .. (257.46,147.51) .. controls (257.11,145.18) and (258.11,143.84) .. (260.44,143.49) .. controls (262.77,143.15) and (263.77,141.81) .. (263.42,139.48) .. controls (263.07,137.15) and (264.07,135.81) .. (266.4,135.47) .. controls (268.73,135.12) and (269.73,133.78) .. (269.39,131.45) .. controls (269.04,129.12) and (270.04,127.78) .. (272.37,127.44) .. controls (274.7,127.09) and (275.7,125.75) .. (275.35,123.42) .. controls (275,121.09) and (276,119.75) .. (278.33,119.41) .. controls (280.66,119.07) and (281.66,117.73) .. (281.31,115.4) .. controls (280.96,113.07) and (281.96,111.73) .. (284.29,111.38) .. controls (286.62,111.04) and (287.62,109.7) .. (287.27,107.37) .. controls (286.92,105.04) and (287.92,103.7) .. (290.25,103.35) .. controls (292.58,103.01) and (293.58,101.67) .. (293.23,99.34) .. controls (292.88,97.01) and (293.88,95.67) .. (296.21,95.33) .. controls (298.54,94.98) and (299.54,93.64) .. (299.2,91.31) .. controls (298.85,88.98) and (299.85,87.64) .. (302.18,87.3) .. controls (304.51,86.95) and (305.51,85.61) .. (305.16,83.28) .. controls (304.81,80.95) and (305.81,79.61) .. (308.14,79.27) .. controls (310.47,78.92) and (311.47,77.58) .. (311.12,75.25) .. controls (310.77,72.92) and (311.77,71.58) .. (314.1,71.24) .. controls (316.43,70.9) and (317.43,69.56) .. (317.08,67.23) .. controls (316.73,64.9) and (317.73,63.56) .. (320.06,63.21) .. controls (322.39,62.87) and (323.39,61.53) .. (323.04,59.2) .. controls (322.7,56.87) and (323.7,55.53) .. (326.03,55.18) -- (328.65,51.65) -- (328.65,51.65) ;
\draw  [fill={rgb, 255:red, 255; green, 255; blue, 255 }  ,fill opacity=1 ][line width=1.5]  (216,51.65) .. controls (216,45.22) and (221.22,40) .. (227.65,40) .. controls (234.08,40) and (239.3,45.22) .. (239.3,51.65) .. controls (239.3,58.08) and (234.08,63.3) .. (227.65,63.3) .. controls (221.22,63.3) and (216,58.08) .. (216,51.65) -- cycle ;
\draw  [fill={rgb, 255:red, 255; green, 255; blue, 255 }  ,fill opacity=1 ][line width=1.5]  (317,51.65) .. controls (317,45.22) and (322.22,40) .. (328.65,40) .. controls (335.08,40) and (340.3,45.22) .. (340.3,51.65) .. controls (340.3,58.08) and (335.08,63.3) .. (328.65,63.3) .. controls (322.22,63.3) and (317,58.08) .. (317,51.65) -- cycle ;
\draw  [fill={rgb, 255:red, 255; green, 255; blue, 255 }  ,fill opacity=1 ][line width=1.5]  (317,187.65) .. controls (317,181.22) and (322.22,176) .. (328.65,176) .. controls (335.08,176) and (340.3,181.22) .. (340.3,187.65) .. controls (340.3,194.08) and (335.08,199.3) .. (328.65,199.3) .. controls (322.22,199.3) and (317,194.08) .. (317,187.65) -- cycle ;
\draw  [fill={rgb, 255:red, 255; green, 255; blue, 255 }  ,fill opacity=1 ][line width=1.5]  (373,117.3) .. controls (373,110.87) and (378.22,105.65) .. (384.65,105.65) .. controls (391.08,105.65) and (396.3,110.87) .. (396.3,117.3) .. controls (396.3,123.73) and (391.08,128.95) .. (384.65,128.95) .. controls (378.22,128.95) and (373,123.73) .. (373,117.3) -- cycle ;
\draw  [fill={rgb, 255:red, 255; green, 255; blue, 255 }  ,fill opacity=1 ][line width=1.5]  (216,187.65) .. controls (216,181.22) and (221.22,176) .. (227.65,176) .. controls (234.08,176) and (239.3,181.22) .. (239.3,187.65) .. controls (239.3,194.08) and (234.08,199.3) .. (227.65,199.3) .. controls (221.22,199.3) and (216,194.08) .. (216,187.65) -- cycle ;
\draw  [fill={rgb, 255:red, 255; green, 255; blue, 255 }  ,fill opacity=1 ][line width=1.5]  (155.65,117.65) .. controls (155.65,111.22) and (160.87,106) .. (167.3,106) .. controls (173.73,106) and (178.95,111.22) .. (178.95,117.65) .. controls (178.95,124.08) and (173.73,129.3) .. (167.3,129.3) .. controls (160.87,129.3) and (155.65,124.08) .. (155.65,117.65) -- cycle ;

\end{tikzpicture}}
    \end{minipage}
    \end{tabular}       
    \caption{Illustration of an open ear decomposition. The graph on the left can be built with the ear decomposition described on the right. First, put the bold edge. Then add the path of plain edges. Finally, add the dotted path, and the wavy path, which is just one edge.}
    \label{fig:ear-decompo}
\end{figure}
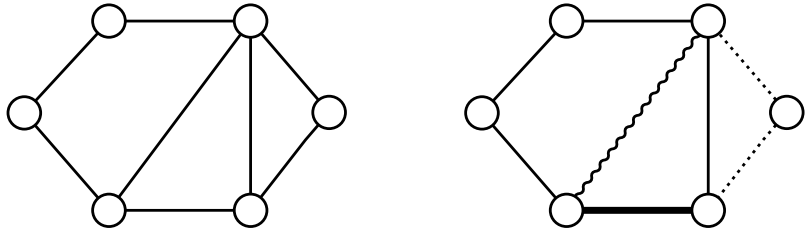

The good thing about these constructions is that we can certify them, by describing and certifying every step. 
This requires some care, as when certifying a new path, we could increase the size of the certificates of the endpoints, that are already in the graph. Fortunately, the tools developed to avoid certificate congestions allow us to control the certificate size.
 
The details about the connectivity certification can be found in Section~\ref{sec:connectivity}.

\paragraph*{Putting things together}
Combining these techniques, we can prove the following theorem.

\begin{restatable}{theorem}{ThmTwoConnected}
\label{thm:2-connected-to-general}
For any 2-connected graph $H$, if the 2-connected $H$-minor-free graphs can be certified with $f(n)$ bits, then the $H$-minor-free graphs can be certified with $O(f(n)+\log n)$ bits.
\end{restatable}

Going back to our example, $K_4$-minor-free graphs, given Theorem~\ref{thm:2-connected-to-general}, we are left with certifying the 2-connected $K_4$-minor-free graphs. 
As said above, these have a specific shape. 
More precisely, 2-connected $K_4$-minor-free graphs have a nested ear decomposition, which is yet another type of ear decomposition, this time with additional constraints related to outerplanarity. 
We can certify this structure by adapting a construction from \cite{FeuilloleyFMRRT20} for outerplanar graphs. 

More generally the 2-connected graphs corresponding to most of the classes of Figure~\ref{fig:our-results-certification} have specific shapes that we can certify quite easily, which imply our compact certification schemes. 
We do this in Section~\ref{sec:diamond-etc}.
A special case is $K_{2,4}$, that has a more complicated structure, requiring to consider 3-connected components, and some more complicated substructures. We study this case in Section~\ref{sec:K24}.

Finally, in Section~\ref{sec:4vertices}, we study all the minors on at most 4 vertices, and in Section~\ref{sec:5vertices} all the minors on 5 vertices of some simple form. 
For these, we do not need new techniques on the certification side, but we need to work on the graph theory side to establish new characterizations, as for these minors the literature does not help. 
The work we do in Section~\ref{sec:5vertices} might be of independent interest as we study the natural notion of $H$-minimal graph, which are the graph that have $H$ as a minor, but for which any vertex deletion would remove this property. 

\paragraph*{Lower bounds}

Towards the end of the paper, we show that $\Omega(\log n)$-bit labels are necessary to certify (2-connected) minor-free graph classes. 
When it comes to $\Omega(\log n)$ lower bounds in our model, there are basically two complementary techniques (called \emph{cut-and-plug techniques} in \cite{Feuilloley19}). 
Both techniques basically show that paths cannot be differentiated from cycles, if the certificates use $o(\log n)$ bits.
First, in~\cite{GoosS16}, the idea is to use many correct path instances, and to prove that we can plug them into an incorrect cycle instance, thanks to a combinatorial result from extremal graph theory. 
Second, in~\cite{FeuilloleyH18}, the idea is to consider a path, to cut it into small pieces, and to show via Sterling formula, that there exists a shuffle of these pieces that can be closed into a cycle.

Previous lower bounds for minor-free graphs in \cite{FeuilloleyFMRRT20} followed the same kind of strategies as \cite{GoosS16} and \cite{FeuilloleyH18}, with the same type of counting arguments, more complicated constructions, and tackled only minors that were cliques or bicliques. 

In this paper, we are able to do a black-box reduction between the path/cycle problem and the $H$-minor-freeness for any 2-connected $H$. This way we avoid explicit counting arguments, and get a more general result with a simpler proof.

\subsection{Related work}

Local certification first appeared under the name of \emph{proof-labeling schemes} in \cite{KormanKP10}, inspired by works on self-stabilizing algorithms (see \cite{Dolev2000} for a book on self-stabilization).
It has then been generalized under the name of \emph{locally checkable proofs} in \cite{GoosS16}, and the field has been very active since these seminal papers. 
In the following, we will focus on the papers about local certification of graph classes, but we refer to \cite{Feuilloley19} and \cite{FeuilloleyF16} for an introduction and a survey of local certification in general.

As said earlier, certification was first mostly about checking that the solution to an algorithmic problem was correct, a typical example being the verification of a spanning tree~\cite{KormanKP10}. 
Some graph properties have also been studied, for example symmetry in \cite{GoosS16}, or bounded diameter in \cite{Censor-HillelPP20}. 
Very recently, classes that are more central in graph theory have attracted attention. 
It was first proved in~\cite{NaorPY20}, as an application of a more general method, that planar graphs can be certified with $O(\log n)$ bits in the more general model of distributed interactive proofs. 
Then it was proved in \cite{FeuilloleyFMRRT20} that these graphs can actually be certified with $O(\log n)$ bits in the classic model, that is, without interaction. This result was extended to bounded-genus graphs in \cite{FeuilloleyFMRRT21}.
Later, \cite{EsperetL21} provided a simpler proof of both results via different techniques. It was also proved in \cite{MontealegreRR20} that cographs and distance-hereditary graphs have compact distributed  interactive proofs. 

Still in distributed computing, but outside local certification, the networks with some forbidden structures have attracted a lot of attention recently. 
A popular topic is the distributed detection of some subgraph $H$, which consists, in the CONGEST (or CONGEST-CLIQUE) model to decide whether the graph contains $H$ as a subgraph or not (see \cite{Censor-HillelFG20} and the references therein).
A related task is $H$-freeness testing, which is the similar but easier task consisting in deciding whether the graph is $H$-free or far from being $H$-free (in terms of the number of edges to modify to get a $H$-free graph). This line of work was formalized by \cite{Censor-HillelFS19} after the seminal work of \cite{BrakerskiP11} (see \cite{FraigniaudO19} and the references therein).
To our knowledge, no detection/testing algorithm or lower bounds have been designed for $H$-minor-freeness.

Finally, we have mentioned in the introduction that certifying that the graph belongs to some given class is important because some algorithms are specially designed to work on some specific classes. 
For example, there is a large and growing literature on approximation algorithms for \emph{e.g.} planar, bounded-genus, minor-free graphs. We refer to \cite{Feuilloley20} for a bibliography of this area.
There are also interesting works for exact problems in the CONGEST model, \emph{e.g.} in planar graphs~\cite{GhaffariH16}, graphs of bounded treewidth or genus~\cite{HaeuplerIZ16} and minor-free graphs~\cite{HaeuplerLZ18}. In particular the authors of~\cite{HaeuplerLZ18} justify the focus on minor-free graphs by the fact that this class allows for significantly better results than general graphs, while being large enough to capture many interesting networks. 
Very recently, \cite{GhaffariH21} proved general tight results on low-congestion short-cuts (an essential tool for algorithms in the CONGEST model) for graphs excluding a dense minor.

\section{Preliminaries}
\label{sec:preliminaries}

In this section, we define formally the notions we use and describe some useful known certification building blocks.

\subsection{Graphs and minors}

Let $G=(V,E)$ be a graph. 
Let $X \subseteq V$. The \emph{subgraph of $G$ induced by $X$} is the graph with vertex set $X$ and edge set $E \cap X^2$. The graph $G \setminus X$ is the subgraph of $G$ induced by $V \setminus X$. A graph $G'$ is a \emph{subgraph} of $G$ if $V' \subseteq V$ and $E' \subseteq E$.
For every $v \in V$, $N(v)$ denotes the \emph{neighborhood of $v$} that is the set of vertices adjacent to $v$. The graph $G$ is \emph{$d$-degenerate} if there exists an ordering $v_1,\ldots,v_n$ of the vertices such that, for every $i$, $N(v_i) \cap \{ v_{i+1},\ldots,v_n \}$ has size at most $d$. It refines the notion of maximum degree since any graph of maximum degree $\Delta$ are indeed $\Delta$-degenerate (but the gap between $\Delta$ and the degeneracy can be arbitrarily large). 
Let $u,v \in V$, a \emph{path} from $u$ to $v$ is a sequence of vertices $v_0=u,v_1,\ldots,v_\ell=v$ such that for every $i \le \ell-1$, $v_iv_{i+1}$ is an edge. It is a \emph{cycle} if $v_\ell v_0$ also exists.

A graph $G$ is \emph{connected} if there exists a path from $u$ to $v$ for every pair $u,v \in V$. All along the paper, we only consider connected graphs. Indeed, in certification, the nodes can only communicate with their neighbors, so no node can communicate with nodes of another connected component. 

A vertex $v$ is a \emph{cut-vertex} if $G \setminus \{ v \}$ is not connected. If $G$ does not contain any cut-vertex, $G$ is \emph{2-(vertex)-connected}. If the removal of any edge does not disconnect the graph, we say that $G$ is \emph{2-edge-connected}. A graph is \emph{$k$-(vertex)-connected} if there does not exist any set $X$ of size $k-1$ such that $G \setminus X$ is not connected. To avoid cumbersome notations, we will simply write $k$-connected for $k$-vertex-connected.

A graph $H$ is a \emph{minor of $G$} if $H$ can be obtained from $G$ by deleting vertices, deleting edges and contracting edges. Equivalently, it means that, if $G$ is connected, there exists a partition of $V$ into connected sets $V_1,\ldots,V_{|H|}$ such that there is (at least) an edge between $V_i$ and $V_j$ if $h_ih_j$ is an edge of $H$. We say that $V_1,\ldots,V_{|H|}$ is a \emph{model of $H$}. The graph $G$ is $H$-minor-free if it does not contain $H$ as a minor.

\subsection{Local computation and certification}
We assume that the graph is equipped with unique identifiers in polynomial range $[1, n^k]$, thus these identifiers can be encoded on $O(\log n)$ bits.

Local certification is a mechanism for verifying properties of labeled or unlabeled graphs. In this paper we will use a local certification at distance 1, which is basically the model called \emph{proof-labeling  schemes} \cite{KormanKP10}. 
A convenient way to describe a local certification is with a prover and a verifier. 
The \emph{prover} is an external entity that assigns to every node $v$ a certificate $c(v)$.  The \emph{verifier} is a distributed algorithm, in which every node $v$ acts as follows: $v$ collects the identifiers and the certificates of its neighbor and itself, and outputs a decision \emph{accept} or \emph{reject}. 
A local certification certifies a graph class $\mathcal{C}$ if the following two conditions are verified:
\begin{enumerate}
    \item For every graph of $\mathcal{C}$, the prover can find a certificate assignment such that the verifier accepts, that is, all nodes output \emph{accept}.
    \item For every graph not in $\mathcal{C}$, there is no certificate assignment that makes the verifier accept, that is for every assignment, there is at least one node that rejects.
\end{enumerate}
The size of the certificate of $\mathcal{C}$ is the largest size of a certificate assigned to a node of a graph of $\mathcal{C}$. 

Note that to describe a local certification, the only essential part is the verifier algorithm, the prover is just a way to facilitate the description of a scheme.

In this paper, we are going to use a variant of the model above, called \emph{edge certification}, where the certificates can be assigned on both the nodes and the edges. See Subsection~\ref{subsec:edge-certification}.

\subsection{Known building blocks for graph certification}
\label{subsec:building-blocks}

There are few known certification schemes that we are going to use intensively as building blocks in the paper. 
\begin{lemma}[\cite{KormanKP10,AfekKY90}]
\label{lem:acyclicity}
Acyclicity can be certified in $O(\log n)$ bits.
\end{lemma}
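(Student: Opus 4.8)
The plan is to certify that a connected graph $G$ is acyclic, i.e.\ that it is a tree, by having the prover designate a root and encode the canonical rooted-tree structure via distances to the root. Concretely, the prover picks an arbitrary vertex $r$ and assigns to every node $v$ a certificate $c(v) = (\mathrm{id}(r), d(v))$, where $\mathrm{id}(r)$ is the identifier of the root and $d(v)$ is the hop-distance from $v$ to $r$ in $G$. Since distances in an $n$-vertex graph are at most $n-1$ and identifiers live in $[1,n^k]$, each certificate fits in $O(\log n)$ bits. The verifier at each node $v$ inspects the certificates of $v$ and of its neighbors and checks local consistency of this distance labeling.

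The key steps for the verifier, carried out at every node $v$, are as follows. First, $v$ checks that the claimed root identifier agrees across $v$ and all its neighbors (so that everyone refers to the same root). Second, $v$ checks the distance constraints: if $v$ is the root (its own identifier equals $\mathrm{id}(r)$) then it verifies $d(v)=0$; otherwise it verifies $d(v)\ge 1$ and that it has \emph{exactly one} neighbor $u$ with $d(u) = d(v)-1$ (its parent), while every other neighbor $w$ satisfies $d(w) = d(v)+1$. This last condition is the crux: each non-root node having a unique neighbor one step closer to the root forces the edge set to be exactly the parent-pointers of a spanning tree, and forbidding any neighbor at the same distance or more than one neighbor at distance $d(v)-1$ rules out the extra edges that a cycle would require.

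For correctness I would argue both directions. If $G$ is a tree, the prover's honest distance labeling passes all checks, so the verifier accepts. Conversely, suppose all nodes accept; I would show $G$ is acyclic. The uniqueness of a smaller-distance neighbor lets each non-root node select a canonical parent, yielding $n-1$ parent edges forming a spanning structure in which following parents strictly decreases $d$, hence terminates at the root with no cycle among parent edges. The remaining point is that there can be \emph{no other} edges: any edge $vw$ of $G$ must, by the distance checks, satisfy $|d(v)-d(w)|=1$ and be counted as the parent edge of its farther endpoint, so $G$ has exactly $n-1$ edges and, being connected, is a tree.

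The main obstacle I expect is ruling out spurious global configurations that are locally consistent but encode no genuine tree — for instance, a labeling whose distance values increase around a cycle without a real root, or in which the designated root identifier is not actually present in the graph. The distance-to-root encoding handles this because strict monotonicity of $d$ toward the root is incompatible with a cycle, and the requirement that some node witnesses $d=0$ together with the global agreement on $\mathrm{id}(r)$ guarantees the root genuinely exists; making this pinning-down argument airtight is the delicate part, and it is precisely where the classical constructions of \cite{KormanKP10, AfekKY90} apply.
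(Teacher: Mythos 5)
Your proposal is correct and follows essentially the same route the paper sketches for Lemma~\ref{lem:acyclicity}: the prover roots the tree and writes each node's distance to the root, and the verifier checks local consistency of the distance labeling (every non-root node having a unique neighbor one step closer, all others one step farther). Your closing argument that every accepted edge must be a parent edge, giving $n-1$ edges in a connected graph, is a clean way to make the standard "distances are consistent" check airtight.
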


The classic way to certify that the graph is acyclic, is for the prover to choose a root node, and then to give to every node as its certificate its distance to the root. The nodes can simply check that the distances are consistent. 

The same idea can be used to certify a \emph{spanning tree} of the graph, encoded locally at each node by the pointer to its parent, which is simply the ID of this parent. 
The scheme is the same, except that the prover, in addition to the distances, gives the ID of the root, and the verification algorithm  checks that all nodes have been given the same root-ID, and only takes into account the edges that correspond to pointers (also the root checks that its ID is the root-ID).
A spanning tree is a very useful tool to broadcast the \emph{existence of a vertex satisfying a locally checkable property}: simply choose a spanning tree rooted at the special vertex, encode it locally with pointers and certify it. Then the root can check that indeed it has the right property, and all the other vertices know that such a vertex exists.

Finally, with the same ideas, one can easily deduce $O(\log n)$ certification for paths.
We just add to the acyclicity scheme the verification that the degree of every node is at most 2. 
Note that cycles do not need certificates to be verified: every node just checks that it has degree exactly 2.

Let us now define a graph class that will appear in several decompositions.

\begin{definition}
A \emph{path-outerplanar} graph is a graph that admits a  path $P$ that can be drawn on a horizontal line, such that all the edges that do not belong to $P$ can be drawn above that line without crossings. The edges are said to be \emph{nested}.
\end{definition}

We are going to use the following result as a black box.

\begin{lemma}[\cite{FeuilloleyFMRRT20}]\label{lem:path-outerplanar}
Path-outerplanar graphs can be certified with $O(\log n)$-bit certificates.
\end{lemma}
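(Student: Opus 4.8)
The plan is to decompose the task into two certifications that are then glued together: first certifying the Hamiltonian path $P$ and, on top of it, certifying that the non-path edges are nested with respect to the linear order induced by $P$.

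First I would certify $P$. Since a Hamiltonian path is exactly a spanning tree in which every vertex has degree at most $2$, I can reuse the spanning-tree building block from Subsection~\ref{subsec:building-blocks}: the prover roots $P$ at one of its two endpoints and gives each vertex its distance to that root, which is precisely its \emph{position} $p(v)\in\{1,\dots,n\}$ on the line. The verifier performs the usual spanning-tree checks (consistent distances, a unique certified root, parent pointers realized by graph edges) plus the extra condition that the tree has maximum degree $2$. This certifies a genuine Hamiltonian path and equips every vertex with a globally consistent position, all in $O(\log n)$ bits.

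Once positions are certified, each non-path edge $e$ becomes an \emph{arc} $(a_e,b_e)$ with $a_e<b_e$, and both of its endpoints know $a_e$ and $b_e$ since they are adjacent. The nesting (non-crossing) condition is equivalent to a stack discipline: sweeping the positions left to right, each arc is ``opened'' at $a_e$ and ``closed'' at $b_e$, and the drawing stays planar above the line exactly when arcs close in last-opened-first-closed order. I would certify this discipline \emph{succinctly}, without ever storing the linearly many arcs passing over a vertex. Concretely, the prover assigns to each arc a \emph{level} $\ell(e)$ equal to its nesting depth, written at both endpoints of $e$ and checked equal across the edge; and it records, for the gap immediately to the right of each position $i$, the number $k_i$ of arcs crossing that gap (stored on the path edge, using edge certification, or equivalently at its endpoints). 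At a vertex of position $i$ the verifier reads $k_{i-1}$ and $k_i$ from the two incident path edges together with the levels of its incident arcs, and checks the inductive invariant that the arcs crossing a gap carry exactly the levels $\{1,\dots,k_i\}$: the closing arcs $R$ (those with right endpoint $i$) must occupy the contiguous top $\{k_{i-1}-|R|+1,\dots,k_{i-1}\}$ of the incoming stack, and the opening arcs receive the next consecutive levels, with $k_i=k_{i-1}-|R|+|L|$ where $L$ is the set of opening arcs. All these checks are at distance $1$ and manipulate only $O(\log n)$-bit values.

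The main obstacle is exactly the non-crossing verification: crossing is an intrinsically \emph{global}, pairwise condition between arcs whose endpoints may be arbitrarily far apart in the graph, so a naive local test cannot detect it, while recording at each vertex the full set of arcs passing over it would cost $\Omega(n\log n)$ bits. The crux is therefore to prove that the local ``contiguous top of the stack'' check is both necessary and sufficient: if two arcs crossed, then at the position where the inner one closes it would fail to sit at the top of the incoming stack and some vertex would reject, whereas a genuine nested drawing yields levels and counts that pass every local check. Establishing this equivalence cleanly, and handling the corner cases where several arcs share an endpoint or open and close at the same vertex, is where the real work lies; the path certification and the level/count bookkeeping are routine once this is in place.
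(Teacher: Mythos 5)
The first thing to note is that this paper does not prove the lemma at all: it is imported as a black box from \cite{FeuilloleyFMRRT20} (see the sentence just before its statement), so your proposal is a self-contained replacement rather than a variant of an internal argument. On its merits, your route is sound. The path certification is exactly the paper's building blocks from Subsection~\ref{subsec:building-blocks} (a spanning tree with distances, plus degree-$2$ and root-endpoint checks), and the stack-discipline certification of non-crossing does work, including the equivalence you defer: by induction on positions, your local checks (closing arcs carry the contiguous top levels $\{k_{i-1}-|R|+1,\dots,k_{i-1}\}$, opening arcs take $\{k_{i-1}-|R|+1,\dots,k_i\}$, and $k_i=k_{i-1}-|R|+|L|$) force the arcs crossing the $i$-th gap to carry levels exactly $\{1,\dots,k_i\}$ with no duplicates. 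Then a crossing pair $(a,b)$, $(c,d)$ with $a<c<b<d$ is impossible: at $c$ the arc $(c,d)$ opens above the passing arc $(a,b)$, giving $\ell(c,d)>\ell(a,b)$, while at $b$ the arc $(a,b)$ closes above the passing arc $(c,d)$, giving $\ell(a,b)>\ell(c,d)$. Completeness holds with level $=$ nesting depth, because arcs over a common gap are totally ordered by inclusion and the closing (resp.\ opening) arcs at a vertex are exactly the innermost ones there; your worry about arcs sharing an endpoint dissolves, since no local ordering check among co-endpoint arcs is needed — the global bookkeeping forces the prover to order them correctly.

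There is one concrete bug in the size accounting: writing the level of each arc ``at both endpoints of $e$'' does not give $O(\log n)$-bit vertex certificates. A vertex can be an endpoint of $\Theta(n)$ arcs (take the fan where $v_1$ is joined to every vertex of the path — a path-outerplanar graph), so it would store $\Theta(n\log n)$ bits, even though you correctly avoid storing the arcs merely \emph{passing over} a vertex. The repair is already in the paper's toolbox and you half-invoke it for the gap counts: store each arc's level on the arc itself as an edge certificate (the verification you describe only ever reads levels of \emph{incident} edges, which is free in the edge-certification model), and then convert to vertex certificates via Theorem~\ref{thm:degeneracy}, since path-outerplanar graphs are outerplanar and hence $2$-degenerate; the counts $k_i$ are harmless either way, as each vertex touches at most two path edges. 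With that single adjustment your scheme is a correct $O(\log n)$-bit certification.
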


The following classic result will also be useful at some point of the paper.  

\begin{lemma}[\cite{KormanKP10}]
\label{lem:universal}
Every graph class can be certified with $O(n^2)$ bits.
\end{lemma}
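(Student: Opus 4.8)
The plan is to have the prover write the entire graph into the certificate of every node, and let each node decide membership by brute force on this global description. Concretely, fix the true graph $G=(V,E)$ on $n$ vertices. The prover encodes $G$ as a pair: the sorted list of the $n$ identifiers, together with the $n \times n$ adjacency matrix $M$ indexed by this list. Since each identifier fits in $O(\log n)$ bits and the matrix has $n^2$ entries of one bit each, this description has size $O(n \log n) + O(n^2) = O(n^2)$ bits, and the prover assigns this same string to every node. This string encodes a graph $G'$ whose vertex set $V'$ is the listed set of identifiers and whose edges are given by $M$.

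The verifier at a node $v$ performs four local checks, all of which are pure computations on the received string and on the identifiers of $v$ and its neighbors (recall that in this model the verifier's local computation is unbounded, only the certificate size is constrained). First, $v$ checks that the string it received is identical to the string received by each of its neighbors. Second, $v$ checks that its own identifier occurs in $V'$, and that the set of identifiers marked adjacent to $v$ in $M$ is exactly the set of identifiers of the actual neighbors of $v$ in $G$. Third, $v$ checks that $G'$ is connected. Fourth, $v$ runs the (unbounded) decision procedure for the class $\mathcal{C}$ on the explicit graph $G'$, and accepts only if $G' \in \mathcal{C}$.

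For completeness, if $G \in \mathcal{C}$ the prover supplies the honest encoding with $V'=V$ and $M$ the true adjacency matrix; then $G'=G$, and all four checks pass at every node. For soundness, suppose every node accepts. Since $G$ is connected, the first check forces all nodes to hold the same description, hence the same $G'$. I claim $G=G'$. By the second check, every real vertex lies in $V'$ and has the same neighborhood in $G$ and in $G'$; in particular no edge of $G'$ leaves $V$, so $V$ is a union of connected components of $G'$. By the third check $G'$ is connected and $V \neq \emptyset$, so $V=V'$ and the two edge sets coincide, giving $G=G'$. The fourth check then yields $G=G' \in \mathcal{C}$.

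This argument is essentially routine; the only point that requires care — and the reason the connectivity check is included — is ruling out a dishonest description in which $G'$ contains extra ``phantom'' vertices or components for which no real node is responsible. Without forcing $G'$ to be connected, the prover could embed the real graph as a single component of a disconnected $G' \in \mathcal{C}$ and thereby fool the scheme; the connectivity check, combined with the standing assumption that we only ever certify connected graphs, closes this gap and makes $G=G'$ the only consistent possibility.
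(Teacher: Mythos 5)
Your proposal is correct and takes essentially the same route as the paper: the prover gives every node the full adjacency-matrix map of the graph, each node checks that it holds the same map as its neighbors and that the map is consistent with its actual neighborhood, and membership of the described graph in the class is decided by unbounded local computation. Your explicit connectivity check on $G'$ is a sound and careful way of ruling out phantom vertices or components, a point the paper's one-line sketch leaves implicit under its standing assumption that all graphs considered are connected.
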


The idea of the scheme is that the prover gives to every node $v$ the map of the graph, \emph{e.g.} as an adjacency matrix, along with the position of $v$ in this map. Then every node can check that it has been given the same map as its neighbors, and that the map is consistent with its neighborhood in the network.

\section{Avoiding certificate congestion}
\label{sec:avoiding-congestion}

One can obtain many structured graph classes like minor free graphs with "gluing" operations, for instance, by identifying vertices of two graphs of the class. If we have a certification for both graphs, we would like to simply take both certificate assignments to certify the new graph. However, for the vertex on which the two graphs are glued, the size of the certificate might have doubled. While it is not a problem for bounded degree graphs, it can become problematic if many gluing operations occur around the same vertex, since this vertex would get an additional certificate from each operation. In this section, we present two ways to tackle these issues, that will be used in the forthcoming sections. 

The first one consists in shifting the certification on edges instead of vertices, which helps in the sense that when gluing on vertices the edge certificate can remain unchanged. 
As we will see, the edge setting is equivalent to the usual vertex certification for nice enough classes. The second option uses that one can (almost) freely assume that a given vertex has an empty label in a correct certification.

\subsection{Edge certification and degeneracy}
\label{subsec:edge-certification}
Transforming a node certification into an edge certification can always be done without additional asymptotic costs: just copy on every edge the certificate of the two endpoints, and adapt the verification algorithm accordingly.
Transforming an edge certification into a node certification is also always possible, by giving a copy of the edge label to each of its endpoint. 
But this transformation can drastically increase the certificate size: if an edge certification uses $\Omega(f(n))$-bit labels, the associated node certification might use $\Omega(n \cdot f(n))$-bit  if the maximum degree of the graph is linear.
The following theorem ensures that in degenerate graph classes there is a more efficient transformation that permits to drastically reduce the size of the certificate.

\begin{theorem}[\cite{FeuilloleyFMRRT21}]\label{thm:degeneracy}
Consider an edge certification of a graph class $\mathcal{C}$ where the edges are labeled with $f(n)$-bit certificates. 
If $\mathcal{C}$ is $d$-degenerate, then there exists a (node) certification with $d\cdot f(n)$-bit certificates.
\end{theorem}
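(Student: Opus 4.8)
The plan is to exploit the degeneracy ordering to orient the edges so that each vertex is responsible for only $d$ of its incident edge-labels, and to let every vertex recover the labels of its remaining incident edges by reading them off its neighbors' certificates. This is exactly what circumvents the naive $\Delta\cdot f(n)$ blow-up: instead of copying every incident edge-label onto a vertex, each edge-label is stored at a single well-chosen endpoint.

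For the completeness direction I would have the prover fix a degeneracy ordering $v_1,\dots,v_n$ of $G$ (which exists since $G\in\mathcal{C}$ is $d$-degenerate) and orient every edge $v_iv_j$ from its endpoint of smaller index to the one of larger index. By definition of the ordering, each vertex then has at most $d$ out-edges, since the out-neighbours of $v_i$ are exactly $N(v_i)\cap\{v_{i+1},\dots,v_n\}$. Starting from an accepting edge-certificate assignment (which exists because $G\in\mathcal{C}$), the node-certificate of a vertex $v$ consists of the list of its at most $d$ out-edges, each stored as a pair: the identifier of its head together with the $f(n)$-bit edge-label. This uses $d\cdot f(n)+O(d\log n)$ bits, which is $O(d\cdot f(n))$ whenever $f(n)=\Omega(\log n)$, and equals $d\cdot f(n)$ up to the identifier overhead; node-labels of the edge certification, if any, are simply kept on the nodes.

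The verifier at a vertex $v$ proceeds as follows. It first checks that it stores at most $d$ labels, all tagged with identifiers of genuine neighbours and no two tagged identically. Then, for every neighbour $u$, it inspects both its own certificate and that of $u$ and checks that the edge $uv$ is held by \emph{exactly one} of the two endpoints; this is the key consistency check, and both $u$ and $v$ perform it symmetrically so they always agree. Having passed it, $v$ can reconstruct the label of every incident edge $uv$ --- either it is stored by $v$ itself, or it is read off $u$'s certificate, both available at distance $1$ --- so $v$ knows the labels of all its incident edges. It then simulates the original edge-verifier at $v$ on this reconstructed labelling and accepts if and only if that simulation accepts.

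For soundness, the point I would emphasise is that the orientation itself is never certified: acyclicity and the validity of the degeneracy order are irrelevant to the verifier, so no extra certification machinery is needed for them. Indeed, whenever all nodes accept, the \emph{exactly one holder} checks guarantee that assigning each edge the unique label held by one of its endpoints yields a well-defined global edge-labelling, and each node's simulation is run on precisely the restriction of this labelling to its incident edges. Hence the original edge-verifier accepts everywhere on this labelling, so by soundness of the edge certification $G\in\mathcal{C}$. The main step to get right is therefore not any deep combinatorics but the bookkeeping of the consistency check: ensuring that each edge-label is held by exactly one endpoint and correctly associated with the right neighbour, so that both endpoints reconstruct the same label and the soundness of the edge scheme is inherited without loss.
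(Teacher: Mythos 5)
Your proof is correct and matches the intended argument: the paper itself gives no proof of this statement (it imports it as a black box from \cite{FeuilloleyFMRRT21}), and your construction---orienting edges along a degeneracy ordering, storing each edge label at its unique tail together with the head's identifier, and having each endpoint check that every incident edge is held by exactly one of its two endpoints before simulating the edge verifier on the reconstructed labelling---is precisely how that result is established. The only cosmetic caveat is the $O(d\log n)$ identifier overhead you already flag, which is harmless in every application in this paper since $f(n)=\Omega(\log n)$ there.
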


Note that $H$-minor free graphs have degeneracy $O(h \sqrt{\log h})$ where $h=|V(H)|$~\cite{kostochka1982minimum,thomason84}. Therefore, we can freely put labels on edges when certifying classes defined by forbidden minors.

\subsection{Certification with one empty label}

In this part, our goal is to erase the certificate of a node. To this end, we first consider certification of spanning trees and strengthen both Lemma~\ref{lem:acyclicity} and the discussion that followed in Subsection~\ref{subsec:building-blocks}. We then extend this intermediate step to every graph class in Lemma~\ref{lem:pointed}.

\begin{lemma}\label{lem:ST-with-no-root-label}
Let $T$ be a spanning tree of $G$. There exists a certification of $T$ that does not assign a label to the root, and uses the same certificate as the classic tree certification (cf. Subsection~\ref{subsec:building-blocks}) on the other nodes.
\end{lemma} 

\begin{proof}
On \emph{yes}-instances, the prover assigns the labels as in the classic scheme, and removes the label of the root.
Then the verification proceeds like in the classic scheme except for a node that has no label or a node that has a neighbor with no label. 
If two adjacent nodes have been given an empty label, then they reject. 
If a node with no label sees that two of its neighbors have been given different root-ID, then it rejects.
Otherwise, every node simulates the computation where the node with empty label has been given distance 0, and the same root-ID as its neighbors. 
Because of the previous checks, the labels used in the simulation are consistent, and on correct instance are the same as the one used in the classic certification.
Thus, the correctness follows from the correctness of the classic scheme.
\end{proof}

A \emph{pointed graph} is a graph with one selected node. Given a class, one can build its pointed version by taking for each graph all the pointed versions of it.

\begin{lemma}\label{lem:pointed}
Consider a class $\mathcal{C}$ that can be certified with certificates of size $f(n)$. 
One can certify the pointed class of $\mathcal{C}$ with  $O(f(n)+\log n)$ bits, without having to put certificates on the selected node. 
\end{lemma}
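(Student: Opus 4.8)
The plan is to start from a certification of $\mathcal{C}$ with certificates of size $f(n)$ and to \emph{relocate} the certificate that the selected node $s$ would normally receive, so that $s$ itself can be left empty while every node that actually needs to read $c(s)$ can still recover it. The only nodes whose verification depends on $c(s)$ are $s$ and its neighbors, since the verifier works at distance $1$; thus it suffices to make $c(s)$ available to $N(s)$. This mirrors Lemma~\ref{lem:ST-with-no-root-label}, where the root's implicit label (distance $0$ and the common root-ID) was reconstructed by its neighbors; the difference is that here $c(s)$ is arbitrary and cannot be guessed, so it must actually be stored somewhere.

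Concretely, on a \emph{yes}-instance I would have the prover run the original scheme for $\mathcal{C}$, keeping $c(v)$ for every $v \neq s$, and then erase $c(s)$. To compensate, the prover stores on each neighbor $u \in N(s)$ a private copy $c_s^{(u)}$ of the value $c(s)$. Each such $u$ therefore carries $c(u)$ together with one copy of $c(s)$, for a total of $2f(n) = O(f(n))$ bits, while all remaining nodes keep their $f(n)$-bit certificate. For the verification: a node outside $N(s) \cup \{s\}$ does not see $s$ and simply runs the original verifier unchanged; a neighbor $u$ of $s$ runs the original verifier but substitutes its stored copy $c_s^{(u)}$ for the (now empty) certificate of $s$, knowing which of its neighbors is $s$ from the input marking; and $s$ first checks that all the copies held by its neighbors coincide, and then simulates the original verifier at $s$ using this common value as its own certificate.

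For soundness I would reconstruct a genuine certificate assignment for the original scheme: if every node accepts then, in particular, $s$ accepts, so all copies $c_s^{(u)}$ equal a common value $c^\star$; setting $c'(s)=c^\star$ and $c'(v)=c(v)$ otherwise yields an assignment under which every node's distance-$1$ view is identical to the view it had in our scheme, hence every node would accept in the original scheme, and correctness of the latter forces $G \in \mathcal{C}$. Completeness is immediate, since on a \emph{yes}-instance all copies agree and all simulated views match the honest execution of the original verifier. The key point is that the potential inconsistency of the copies placed around $s$ is harmless: the single node $s$ suffices to enforce their agreement in one local check, so no extra machinery is needed to keep the neighbors synchronized.

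Finally, to certify that the marked node is the \emph{unique} selected vertex while still leaving it empty, I would superimpose a spanning tree rooted at $s$ certified with no label on the root, using Lemma~\ref{lem:ST-with-no-root-label}; this costs $O(\log n)$ additional bits on the non-root nodes, broadcasts the identity of $s$, and guarantees uniqueness of the selection, all without assigning any label to $s$. Combining the two layers gives certificates of size $O(f(n)+\log n)$ with an empty label on $s$, as required. I expect the main subtlety to be exactly this interaction --- ensuring that $s$ carries no label in \emph{either} layer simultaneously, while its neighbors can both recover $c(s)$ and be forced to agree on it --- rather than any single step being technically hard.
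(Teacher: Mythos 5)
Your proposal is correct and matches the paper's proof essentially step for step: erase the selected node's certificate, duplicate it onto its neighbors, have the selected node enforce agreement among the copies while every affected node simulates the original verifier, and certify uniqueness of the selection with a root-label-free spanning tree via Lemma~\ref{lem:ST-with-no-root-label}. No meaningful difference in approach.
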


\begin{proof}
First, to certify that exactly one node is pointed, we can simply find a spanning tree rooted on the pointed vertex and assign to each node the spanning tree certification of Lemma~\ref{lem:ST-with-no-root-label} which uses $O(\log n)$ bits.
For the rest of the certification, on a \emph{yes}-instance, the prover first assigns the certificates following the original certification. 
Then it removes the certificate of the selected node and appends copies of it to the certificates of its neighbors. 

Every node $v$ runs the following verification. If $v$ is not the selected node, nor one of its neighbors, then it does the same verification as before. 
If $v$ is the selected node, it checks that its neighbors have been given the same label as "label of the selected node", and then takes this label as its own, and runs the previous verification algorithm. 
If $v$ is a neighbor of the selected node, it runs the same verification algorithm as before, but simulating that the selected node has been given the certificate that was appended to its own certificate.

All nodes are simulating the computation in the graph where the selected node would have been given its certificates, thus the correctness of this new certification follows from the correctness of the original certification.
\end{proof}

Observe that the previous results can be easily iterated: one can always remove the labels of $k$ nodes (as long as they are pairwise non-adjacent) to the cost of a factor $k$ in the size of the certificates. Therefore, the result extends to the case of $k$-independent pointed classes (i.e. where an independent set of size at most $k$ is selected instead of only one vertex).

\begin{corollary}
\label{coro:pointed}
Consider a class that can be certified with certificates of size $f(n)$. 
One can certify the $k$-independent pointed class with  $O(kf(n)+k\log n)$ bits, without having to put certificates on the selected nodes. 
\end{corollary}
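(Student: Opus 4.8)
The plan is to prove Corollary~\ref{coro:pointed} by iterating the construction of Lemma~\ref{lem:pointed} exactly $k$ times, once for each vertex we wish to empty. Let $S = \{s_1, \ldots, s_k\}$ be the selected independent set. The key observation, already flagged in the remark preceding the corollary, is that the operation of Lemma~\ref{lem:pointed} only touches the certificates of the selected node and its neighbors: it strips the label off $s_i$ and appends a copy to each neighbor. Since $S$ is independent, no $s_j$ is a neighbor of $s_i$, so emptying $s_i$ never reintroduces a label onto any other $s_j$. This independence is precisely what makes the operations commute cleanly and lets us apply them one after another without interference.

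First I would set up the certification of the class $\mathcal{C}$ with its original $f(n)$-bit scheme. To certify that the selected set is an independent set of size at most $k$, I would use $k$ spanning trees, one rooted at each $s_i$, each certified with the root-free scheme of Lemma~\ref{lem:ST-with-no-root-label} (at $O(\log n)$ bits each, hence $O(k \log n)$ in total), together with a check at each selected node that none of its neighbors is itself selected. Then, proceeding vertex by vertex, for each $s_i$ the prover removes its label and appends a copy of that label to every neighbor of $s_i$, exactly as in Lemma~\ref{lem:pointed}. Because the $s_i$ are pairwise non-adjacent, the appended copies land only on non-selected vertices, so after all $k$ steps every $s_i$ has an empty certificate and each non-selected vertex carries its original certificate plus at most one appended copy for each selected neighbor it has.

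The verification mirrors Lemma~\ref{lem:pointed}: a non-selected vertex $v$ reconstructs, for each selected neighbor $s_i$, the certificate that $s_i$ ``should have'' from the copy appended to $v$ (checking consistency across all neighbors of $s_i$, as before), and then simulates the original verifier on the graph where every selected node has been restored to its reconstructed certificate. Since all nodes simulate the same restored assignment, correctness reduces to correctness of the original scheme for $\mathcal{C}$. For the size bound, each non-selected vertex stores its own $f(n)$-bit certificate, plus one appended copy per selected neighbor; a single vertex could in principle be adjacent to all $k$ selected vertices, giving at most $k$ appended copies of size $f(n)$, hence $O(kf(n))$ from the class certification and $O(k\log n)$ from the $k$ spanning trees, for a total of $O(kf(n) + k\log n)$ as claimed.

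I do not expect a serious obstacle here; the result is genuinely a routine iteration. The one point requiring care is ensuring the appended certificates do not collide or compound: since the operations empty distinct non-adjacent vertices, no vertex ever receives more than one copy on behalf of the same selected node, and a given vertex's total overhead is bounded by its number of selected neighbors, which is at most $k$. The independence hypothesis is exactly what guarantees this and keeps the blow-up at a clean factor of $k$ rather than something larger.
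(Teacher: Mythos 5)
Your proposal is correct and follows exactly the route the paper takes: the paper proves this corollary only by the remark that Lemma~\ref{lem:pointed} can be iterated over the $k$ selected nodes, with independence guaranteeing that emptying one selected node never re-labels another, at the cost of a factor $k$. Your accounting of the $O(kf(n))$ overhead from up to $k$ appended copies per vertex and $O(k\log n)$ from the $k$ spanning trees matches the stated bound, so nothing is missing.
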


Moreover, with more constraints on the structure of the set of pointed vertices (for instance if they are all at distance at least 3), one could even obtain certificate of size $O(f(n)+k\log n)$ (since every node receives the certificate of at most one selected node).

\section{Compositions of certifications}
\label{sec:composition}

In this section, we show how to combine certification algorithms for several classes to certify larger ones, and we illustrate this idea on two constructions. The first one considers classes defined by the existence of some subgraph: we settle the intuition stating that it is often easier to test the existence of a structure rather than its absence, since we can pinpoint which nodes/edges lie in the structure. 

The second construction mimics a natural operation on graphs, consisting in replacing some vertex/edge by another graph. This operation occurs quite often in the literature: many classes, especially the ones defined by forbidden minors, get a characterization using this operation.

Some results of this section will not be used to certify minor-free classes later in the paper. 
They are proved here for completeness.

\subsection{Subgraphs}

\begin{proposition}\label{prop:subgraph}
Let $\mathcal{C}$ be a graph class that can be certified with $f(n)$-bit labels.
Let $\mathcal{C}'$ be the class of the graphs that contain a graph of $\mathcal{C}$ as subgraph. 
Then $\mathcal{C}'$ can be certified with certificates of size $O(f(n)+\log n)$ on the nodes and $O(1)$ on the edges.
\end{proposition}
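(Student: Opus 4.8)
The plan is to certify the class $\mathcal{C}'$ by having the prover pinpoint which nodes and edges form the embedded subgraph $H_0 \in \mathcal{C}$, and then certify $H_0$ using the given scheme for $\mathcal{C}$. The key observation is the one emphasized in the introduction: certifying the \emph{existence} of a structure is easy because the prover can explicitly mark the witnessing substructure. So first I would have the prover select a subgraph $H_0$ of $G$ that belongs to $\mathcal{C}$, and mark it locally: put a single bit on each edge of $G$ indicating whether that edge belongs to $H_0$, and a single bit on each node indicating whether it is one of the vertices of $H_0$. This already accounts for the $O(1)$ edge certificate and $O(1)$ of the node certificate.

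Next I would run the certification scheme for $\mathcal{C}$ \emph{on the marked subgraph $H_0$}. The prover assigns to each node $v$ of $H_0$ the certificate $c(v)$ that the $\mathcal{C}$-scheme would give it, viewing $H_0$ as the relevant graph; nodes not in $H_0$ receive no such certificate. This contributes the $O(f(n))$ term. The crucial subtlety is that the verifier for $\mathcal{C}$ must be simulated on $H_0$, not on $G$: each marked node collects from its neighbors only the information coming along marked edges to marked neighbors, and runs the $\mathcal{C}$-verifier on that restricted view. Because the $\mathcal{C}$-verifier is a distance-$1$ algorithm, each marked node sees exactly its $H_0$-neighborhood through the marked edges, so this simulation is faithful. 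Each marked node accepts iff the simulated $\mathcal{C}$-verifier accepts; unmarked nodes simply check local consistency of the marking.

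I expect the main obstacle — really the only nontrivial point — to be ensuring that the marked edges and node-flags actually describe a legitimate subgraph on which the $\mathcal{C}$-verifier can be run coherently. In particular the verifier must enforce that every marked edge has both endpoints marked as $H_0$-vertices, and that a node flagged as an $H_0$-vertex behaves consistently (it cannot, say, claim to be in $H_0$ while all its incident edges are unmarked in a way that breaks the $\mathcal{C}$-certificate). These are all local checks at distance $1$, so they cost no extra certificate size. For soundness: if $G \in \mathcal{C}'$ then by definition a subgraph $H_0 \in \mathcal{C}$ exists and the prover marks it and applies the honest $\mathcal{C}$-certificate, so everything accepts. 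Conversely, if all nodes accept, the marked edges and nodes induce a subgraph $H_0$ of $G$, and the accepting simulation of the $\mathcal{C}$-verifier certifies $H_0 \in \mathcal{C}$ by soundness of the $\mathcal{C}$-scheme; hence $G$ contains a member of $\mathcal{C}$ as a subgraph and so $G \in \mathcal{C}'$. Summing the contributions gives $O(f(n) + \log n)$ on nodes — where the $\log n$ slack covers any auxiliary identifiers the $\mathcal{C}$-scheme needs relayed — and $O(1)$ on edges, as claimed.
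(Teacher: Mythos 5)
Your scheme is missing the one ingredient that actually consumes the $O(\log n)$ node budget and that makes the soundness argument go through: a certified witness that the marked subgraph is \emph{non-empty}. In your verification, only marked nodes run the $\mathcal{C}$-verifier, and unmarked nodes merely check local consistency of the marking. On a \emph{no}-instance $G \notin \mathcal{C}'$, a dishonest prover can simply mark nothing: every edge and node flag is $0$, all your local consistency checks (marked edges have marked endpoints, etc.) pass vacuously, no node ever simulates the $\mathcal{C}$-verifier, and every node accepts. Your soundness paragraph silently assumes that an accepting execution yields a subgraph $H_0$ on which the $\mathcal{C}$-verifier was actually run, but the empty marking defeats this, and no purely local check at an unmarked node can detect global emptiness of the marking.

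The paper closes this hole by additionally having the prover describe and certify a spanning tree of $G$ rooted at a node carrying the special label; the root checks that it is indeed marked, so any accepting labeling forces at least one marked node to exist, and the connected marked component containing it then passes the $\mathcal{C}$-verification and witnesses membership in $\mathcal{C}'$. This spanning tree is exactly where the $+\,O(\log n)$ in the node certificate comes from (root identifier and distances), not the vague ``auxiliary identifiers relayed for the $\mathcal{C}$-scheme'' you invoke. The rest of your construction (marking bits on nodes and edges, simulating the $\mathcal{C}$-verifier on the restricted view through marked edges) matches the paper's proof; adding the rooted spanning tree to a marked node repairs the argument.
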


\begin{proof}
On a \emph{yes}-instance $G$, the prover assigns the certificates on nodes and edges in the following way. 
First, it chooses a subgraph $H$ that belongs to $\mathcal{C}$ and assigns the certificates that certify that $H$ is in $\mathcal{C}$, as if the rest of the graph did not exist. This takes at most $f(n)$ bits.
Second to every node and edge that belongs to $H$, the prover assigns a special label.
Third, the prover describes and certifies a spanning tree pointing to a node that has the special label.

The verification algorithm is the following. 
The nodes that have the special label, run the verification algorithm for $\mathcal{C}$, taking into account only the nodes and edges that have the special label. 
The nodes also check the spanning tree structure, and the root of the tree checks that it does have the special label.
    
Because of the spanning tree, there must exist a node with the special label, thus there are nodes that run the verification algorithm for $\mathcal{C}$, and if they succeed it means that a graph of $\mathcal{C}$ appears as a subgraph in the graph $G$.    
\end{proof}

The edge certificates in Proposition~\ref{prop:subgraph} can be inconvenient if we want a classic certification (without edge certificates) and if the graph is not assumed to be degenerate, which prevents us from using Theorem~\ref{thm:degeneracy}. However, observe that we give non-empty certificates only to the edges of the subgraph, hence we can obtain a vertex-certification when the class $\mathcal{C}$ is degenerate. 

\begin{corollary}
\label{cor:subgraph_deg}
Let $\mathcal{C}$ be a $d$-degenerate graph class that can be certified with $f(n)$-bit labels.
Let $\mathcal{C}'$ be the class of the graphs that contain a graph of $\mathcal{C}$ as subgraph. 
Then $\mathcal{C}'$ can be certified with certificates of size $O(f(n)+d\log n)$ on the nodes.
\end{corollary}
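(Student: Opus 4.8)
The plan is to start from the edge certification produced by Proposition~\ref{prop:subgraph} and to fold its edge labels into the nodes while exploiting the degeneracy of $\mathcal{C}$. Recall that in that certification the node labels already have size $O(f(n)+\log n)$, and the only non-empty edge labels are the $O(1)$-bit special markers, which live exclusively on the edges of the chosen subgraph $H\in\mathcal{C}$. The crucial observation is that, although the ambient graph in $\mathcal{C}'$ need not be degenerate, the subgraph $H$ \emph{is} $d$-degenerate, so its edges admit an orientation in which every vertex has out-degree at most $d$.

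First I would have the prover fix such an orientation of $H$ and attach to every vertex $v$, on top of its original node label, the list of pairs $(\mathrm{id}(w),\ell_{vw})$ over the out-neighbours $w$ of $v$, where $\ell_{vw}$ is the (constant-size) label of the edge $vw$. Since $v$ has at most $d$ out-neighbours, this adds only $O(d\log n)$ bits per node, giving a total of $O(f(n)+d\log n)$ as required. The verifier then reconstructs, for each incident edge $vw$, its label by reading it either from $v$'s own list or from the list stored by $w$ (matching against $v$'s own identifier); to make this recovery well defined, each node checks that every incident special edge is stored by exactly one of its two endpoints, rejecting otherwise. Once every node has recovered the labels of all its incident edges, it simply runs the verifier of Proposition~\ref{prop:subgraph} on these reconstructed edge labels together with the unchanged node labels.

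Completeness is clear: on a yes-instance the prover uses a genuine orientation of $H$ with out-degree at most $d$, so every vertex stores at most $d$ pairs and the reconstructed labels coincide with the original edge labels, making the simulated verifier accept. For soundness, the exactly-one-endpoint check guarantees that the reconstructed edge labelling is a well-defined edge labelling of $G$; hence if $G\notin\mathcal{C}'$ the soundness of Proposition~\ref{prop:subgraph} forces some node to reject. I expect the only genuinely delicate point to be the one flagged above: Theorem~\ref{thm:degeneracy} cannot be applied verbatim, since it presupposes that the whole certified class is degenerate, whereas $\mathcal{C}'$ is not. The fix is precisely to restrict the degeneracy orientation to the subgraph $H$ carrying the non-trivial edge labels, which is exactly the mechanism underlying Theorem~\ref{thm:degeneracy}; everything else is routine bookkeeping, and in particular no certification of the orientation itself (e.g.\ its acyclicity) is needed, since the size bound holds automatically on yes-instances and soundness only relies on the consistency of the recovered labels.
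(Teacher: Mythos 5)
Your proposal is correct and follows exactly the route the paper intends: the paper proves this corollary with the one-line observation that non-empty edge labels occur only on the subgraph $H\in\mathcal{C}$, so the degeneracy transfer of Theorem~\ref{thm:degeneracy} can be applied to $H$ alone rather than to the (possibly non-degenerate) ambient graph. Your write-up simply makes explicit the out-degree-$d$ orientation and the consistency checks that this observation relies on.
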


Observe also that when considering induced subgraphs, we only have to specify which vertices are special, hence we do not need edge certificates either. Note that, since we do not need to label edges, we do not need the class $\mathcal{C}$ to be degenerate.

\begin{corollary}
\label{cor:subgraph_induced}
Let $\mathcal{C}$ be a graph class that can be certified with $f(n)$-bit labels. 
Let $\mathcal{C}'$ be the class of the graphs that contain a graph of $\mathcal{C}$ as an induced subgraph. 
Then $\mathcal{C}'$ can be certified with certificates of size $O(f(n)+\log n)$ on the nodes.
\end{corollary}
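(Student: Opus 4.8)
The plan is to adapt the proof of Proposition~\ref{prop:subgraph} while exploiting the fact that for an \emph{induced} subgraph, the set of vertices already determines which edges belong to $H$: an edge of $G$ is an edge of $H$ precisely when both of its endpoints are marked as special. This is exactly what removes the need for edge certificates.

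Concretely, on a \emph{yes}-instance $G$, the prover first selects an induced subgraph isomorphic to a graph $H\in\mathcal{C}$, and assigns to every node of $H$ the certificate witnessing that $H$ belongs to $\mathcal{C}$, computed as if the rest of $G$ were absent; this costs $f(n)$ bits. It also marks each vertex of $H$ with a special label (one extra bit), and certifies a spanning tree of $G$ rooted at some special vertex using $O(\log n)$ bits, exactly as in Proposition~\ref{prop:subgraph}. No certificate is placed on any edge.

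For the verifier, each special node reconstructs its induced neighborhood inside $H$ by retaining only those incident edges whose other endpoint is \emph{also} special, and then runs the verification algorithm for $\mathcal{C}$ on this restricted view. Since $H$ is induced, the subgraph seen in this way is precisely the induced subgraph on the special vertices, so the simulated input is genuinely a graph of the candidate class, and the soundness of the $\mathcal{C}$-scheme applies. In parallel, every node performs the spanning-tree verification, and the root checks that it carries the special label; this guarantees (as before) that at least one special vertex exists, hence that the checked induced subgraph is nonempty. The total certificate size is $O(f(n)+\log n)$ on the nodes and nothing on the edges.

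The only delicate point is verifying that the restricted neighborhood each special node reconstructs is exactly the neighborhood it has inside the induced subgraph, and this is immediate: an edge $uv$ of $G$ lies in the induced subgraph on the special set if and only if both $u$ and $v$ are special, which each endpoint can check locally from the neighbors' labels. Consequently we never need to distinguish edges of $H$ from other edges via an edge certificate, so the degeneracy hypothesis required in Corollary~\ref{cor:subgraph_deg} is unnecessary here. Correctness then follows from the correctness of the $\mathcal{C}$-scheme together with the spanning-tree argument, just as in Proposition~\ref{prop:subgraph}.
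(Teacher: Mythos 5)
Your proposal is correct and matches the paper's argument: the paper derives this corollary as an observation on Proposition~\ref{prop:subgraph}, noting that for an induced subgraph it suffices to mark the special vertices, since an edge belongs to the induced subgraph exactly when both endpoints are special, so no edge certificates (and hence no degeneracy assumption) are needed. Your write-up simply spells out this same adaptation in full detail.
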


\subsection{Expansions}

Two common operations in characterizations of graph classes are what we call node and edge expansions.

\begin{definition}
Consider two graph classes $\mathcal{C}_1$ and $\mathcal{C}_2$.
\begin{itemize}
    \item The \emph{node expansion of $\mathcal{C}_1$ by $\mathcal{C}_2$} is the class of graphs obtained by the following operation. Take a graph $G$ in $\mathcal{C}_1$ and replace every node $v$ by a graph $H(v)$ in $\mathcal{C}_2$, in such a way that for every edge $uv\in E(G)$, there is (at least) one edge between $H(u)$ and $H(v)$ in $G$ (and no such edge if $uv\notin E(G)$). 
    \item The \emph{edge expansion of $\mathcal{C}_1$ by $\mathcal{C}_2$} is the class of graphs obtained by the following operation. Take a graph $G$ in $\mathcal{C}_1$ and replace every edge $uv$ by a graph $H(u,v)$ from $\mathcal{C}_2$, in such a way that the nodes of the original graph that are contained in $H(u,v)$ are exactly $u$ and $v$.
\end{itemize}
\end{definition}

We would like to have results of the form: if $\mathcal{C}_1$ and $\mathcal{C}_2$ can be certified with $f(n)$ and $g(n)$-bit labels respectively, then the expansion can be certified with $O(f(n)+g(n))$-bit labels. 
While the natural approach (almost) works for edge-expansion, it does not give such a result for node-expansion. However, we can actually make it work with a bound that takes into account the maximum degree of the expanded graph.

\begin{proposition}
\label{prop:node-expansion}
Consider two graph classes $\mathcal{C}_1$ and $\mathcal{C}_2$ that can be certified with $f(n)$-bit and $g(n)$-bit labels respectively, where all the graphs of $\mathcal{C}_1$ have maximum degree $\Delta$. Then the node-expansion of $\mathcal{C}_1$ by $\mathcal{C}_2$ can be certified with $O(\Delta \cdot f(n)+g(n)+\Delta\log n)$-bit certificates.
\end{proposition}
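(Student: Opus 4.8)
The plan is to have the prover reveal, through the certificates, three layers of structure of an expansion $G'$: the partition of $V(G')$ into the blobs $H(v)$, a certificate that each blob lies in $\mathcal{C}_2$, and a certificate that the quotient graph $G$ (one vertex per blob, an edge whenever a cross-edge joins two blobs) lies in $\mathcal{C}_1$. For the first layer I would certify a spanning tree of each blob rooted at some vertex and use the identifier of that root as the \emph{blob-ID} carried by every vertex of the blob, together with a parent pointer and the distance to the root as in Lemma~\ref{lem:acyclicity}; comparing blob-IDs across an edge then lets every vertex decide whether that edge is \emph{intra-blob} or a \emph{cross-edge}, at a cost of $O(\log n)$ bits per vertex. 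For the second layer I would run the $\mathcal{C}_2$-certification inside each blob, giving each vertex its $g(n)$-bit label and having it verify the $\mathcal{C}_2$-rule using only its intra-blob neighbours, so that each induced blob is certified to belong to $\mathcal{C}_2$.

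The crux is the third layer, simulating the distance-$1$ verifier of $\mathcal{C}_1$ on the quotient $G$, whose vertices are blobs rather than vertices of $G'$. Here I would use the blob-IDs as the identifiers of the quotient and run the $\mathcal{C}_1$-prover on $G$ to obtain an $f(n)$-bit label $\ell(v)$ for each blob. The key device is to store on \emph{every} vertex of $H(v)$ the entire \emph{view} of $v$ in $G$: its own label $\ell(v)$ and, for each of its at most $\Delta$ neighbours $u$ in $G$, the triple consisting of the blob-ID of $u$, the label $\ell(u)$, and the identifier of a \emph{witness} vertex $a \in H(v)$ incident to a cross-edge towards $H(u)$. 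This is exactly where the degree bound enters and where we accept a controlled amount of certificate congestion: since $G$ has maximum degree $\Delta$, a view has size $O(\Delta(f(n)+\log n))$, and replicating it on all vertices of the blob keeps the total certificate size at $O(\Delta f(n)+g(n)+\Delta\log n)$. Every vertex then checks that its view agrees with that of its intra-blob neighbours (so that, by connectivity, all vertices of a blob share one common view), runs the $\mathcal{C}_1$-verifier on this view, and rejects if that verifier rejects.

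It remains to tie the simulated quotient to the actual one, and this soundness step is the main obstacle. From an accepting assignment the blobs and the labels $\ell(\cdot)$ reconstruct a labelled graph $\hat G$ (with an edge $uv$ whenever $u$ appears in $v$'s view), and I must argue that $\hat G$ is \emph{exactly} the quotient of the blob partition, since otherwise $\hat G\in\mathcal{C}_1$ would not imply that $G'$ is a node-expansion of a $\mathcal{C}_1$-graph. The easy inclusion is that every cross-edge is recorded: a cross-edge from $w\in H(v)$ to $w'\in H(u)$ is checked by $w$, which reads $u$'s blob-ID and label $\ell(u)$ off $w'$ and verifies that $u$ occurs in $v$'s view with label $\ell(u)$, so a missing neighbour forces a reject. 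The delicate converse, that no \emph{phantom} neighbour is listed, is precisely what the witness identifiers handle: the designated witness $a$ of each listed neighbour $u$ recognises its own identifier inside the view and verifies that it does have a cross-edge to a vertex of blob $u$, so every listed edge is realised by a genuine cross-edge. Together these checks give the equivalence ``cross-edge present $\iff$ edge of $\hat G$'', whence $G'$ is the node-expansion of $\hat G\in\mathcal{C}_1$ by blobs in $\mathcal{C}_2$; conversely, on a yes-instance the honest prover produces all these certificates and every check passes. I expect the only real care to lie in this reconstruction, in the symmetry and label-matching of views across cross-edges, and in the (standard) assumption that blobs are connected so that a single spanning tree certifies each of them.
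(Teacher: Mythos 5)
Your scheme follows essentially the same route as the paper's proof: replicate on every vertex of a blob $H(v)$ the full $O(\Delta(f(n)+\log n))$-bit view of the contracted vertex $v$, check view-consistency inside the (connected) blob, run the $\mathcal{C}_1$-verifier on that view, and use designated witnesses to certify that every claimed quotient edge is realised by a cross-edge. You also correctly isolate the one delicate point, namely ruling out phantom neighbours. But the device you use there has a hole: you name the witness $a$ of a listed neighbour $u$ only by writing its identifier into the view, and rely on $a$ to ``recognise its own identifier'' and perform the cross-edge check. A dishonest prover can simply write an identifier that belongs to no vertex of $H(v)$; then no vertex ever executes that check, the phantom edge $uv$ survives in the simulated quotient $\hat G$, and every other test still passes (in particular the prover can give $u$'s blob a view that does not mention $v$, and there is no physical edge across which this inconsistency would be detected). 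Since graph classes are not closed under edge deletion, $\hat G\in\mathcal{C}_1$ then says nothing about the true quotient, so a graph outside the node-expansion class can be accepted. Note the asymmetry with the direction you do handle: an actual cross-edge provides a place where the two endpoints' views are compared, whereas a phantom edge provides no such place, which is exactly why witness \emph{existence} must itself be certified.

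The paper closes this hole by spending the $O(\Delta\log n)$ term differently: for each claimed neighbour $v_j$ of $v_i$, the prover certifies a spanning tree of $H_i$ rooted at the chosen witness $w_j$, and each vertex checks that the number of such trees equals the claimed degree of $v_i$. The spanning-tree certification (as in Lemma~\ref{lem:acyclicity} and the discussion following it) forces the root to exist inside the blob, and the root then verifies that it really has a neighbour in $H_j$. Replacing your bare witness identifiers by these $\Delta$ rooted spanning trees --- or by any other device that certifies the existence of the witness inside the blob rather than merely its name --- repairs the soundness argument without changing anything else or the stated bound.
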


\begin{proof}
Consider a graph $G\in \mathcal{C}_1$ on $\ell$ nodes $v_1,\ldots,v_\ell$ of maximum degree $\Delta$, and let $H_1,\ldots,H_\ell$ be graphs of $\mathcal{C}_2$. We consider the node expansion of $G$ where every $v_i$ is replaced by $H_i$.

On a \emph{yes}-instance, the prover assigns the certificates the following way. 
First it assigns to every node the index $i$ corresponding to the graph $H_i$ it belongs to and the certification of the fact that $H_i$ belongs to $\mathcal{C}_2$ (without taking into accounts the other nodes and edges).
This takes at most $g(n)+\log n$ bits per node.
Second, the prover gives to each vertex of $H_i$ the original certificate of $v_i$ that $G$ belongs to $\mathcal{C}_1$ as well as the original certificate of all the vertices in $N(v_i)$ in $G$ together with their names, which takes $O(\Delta f(n))$ bits. Finally, for every $v_j\in N(v_i)$, the prover chooses a vertex $w_j$ in $H_i$ adjacent to a vertex in $H_j$, and certifies a spanning tree of $H_i$ rooted at $w_j$. This takes $O(\Delta\log n)$ bits.
    
The verification algorithm is the following. 
Every node (labeled as) in $H_i$ checks that the number of trees corresponds to the degree of $v_i$ in $G$.
Every node checks the correctness of the different trees. Moreover, every root $v$ of a spanning tree in $H_i$ checks that it has a neighbor in the corresponding $H_j$.
All the nodes of $H_i$ check that their neighbors are in $H_i$ or in some $H_j$ with $v_j$ incident to $v_i$ in $G$.
Every node of each $H_i$ runs the verification algorithm to check that $H_i$ does belong to $\mathcal{C}_2$.
Finally, every node of $H_i$ simulates the verification of the original node $v_i$, which is possible since every vertex of $H_i$ receives the certificate of $v_i$ and all its neighbors in $G$. And every vertex $w_i \in H_i$ incident to $w_j \in H_j$ checks that $v_j\in N_G(v_i)$ and that the certificate of $w_j$ indeed contains the certificates of $v_i$ and $v_j$ given for $G$. 
\end{proof}

\begin{proposition}
\label{prop:edge-expansion}
Consider two graph classes $\mathcal{C}_1$ and $\mathcal{C}_2$ that can be certified with $f(n)$-bit and $g(n)$-bit labels respectively. 
Then the edge-expansion of $\mathcal{C}_1$ by $\mathcal{C}_2$ can be certified with $O(f(n)+g(n)+\log n)$-bit certificates on the edges. 
\end{proposition}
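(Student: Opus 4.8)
Consider two graph classes $\mathcal{C}_1$ and $\mathcal{C}_2$ certified with $f(n)$- and $g(n)$-bit labels. Then the edge-expansion of $\mathcal{C}_1$ by $\mathcal{C}_2$ can be certified with $O(f(n)+g(n)+\log n)$-bit certificates on the edges.

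Let me think about how to prove this.

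**Setup.** We have a graph $G \in \mathcal{C}_1$ with nodes, and each edge $uv$ of $G$ is replaced by a graph $H(u,v) \in \mathcal{C}_2$. The key constraint is that the only original vertices contained in $H(u,v)$ are exactly $u$ and $v$. So the original vertices $u, v$ of $G$ are shared between the various $H(u,v)$ graphs — each original vertex $u$ appears in all the $H(u,v)$ for edges $uv$ incident to $u$. This is exactly the "gluing on vertices" situation described in the congestion discussion.

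The edge-expansion structure: we have the original nodes of $G$ (call them "terminal" nodes), and each $H(u,v)$ is glued to the rest only at $u$ and $v$. The whole final graph is the union of all these $H(u,v)$'s, overlapping precisely at terminal vertices.

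**Why edge certificates help.** The reason we want edge certificates here is the congestion problem. A terminal node $u$ might have very high degree in $G$, so it belongs to many $H(u,v)$'s. If we tried to put the $\mathcal{C}_2$-certificate of each $H(u,v)$ at the vertices, then $u$ would collect one certificate per incident edge — potential congestion $\Delta \cdot g(n)$. By pushing certificates onto edges, the certificate for the internal structure of $H(u,v)$ lives on the edges of $H(u,v)$, and $u$'s own label stays small. This is exactly the benefit stated in the introduction to Section~\ref{sec:avoiding-congestion}: "when gluing on vertices the edge certificate can remain unchanged."

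**Two layers to certify.**

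Layer 1 (the $\mathcal{C}_2$ structure of each $H(u,v)$): For each edge $uv$ of $G$, we certify that $H(u,v) \in \mathcal{C}_2$. We use edge certification for this. By the remark in Subsection~\ref{subsec:edge-certification}, any node certification can be converted to an edge certification at no asymptotic cost (copy the two endpoint labels onto the edge). So we have a $g(n)$-bit edge certification of $H(u,v)$ as a member of $\mathcal{C}_2$. We mark each edge with which $H(u,v)$ it belongs to, so that the verification of the $\mathcal{C}_2$-membership of each piece is done "as if the rest of the graph did not exist."

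Layer 2 (the $\mathcal{C}_1$ structure of $G$): We need to certify that the terminal nodes, with the adjacency induced by "there is a path through some $H(u,v)$," form a graph in $\mathcal{C}_1$. The original certificate of $G \in \mathcal{C}_1$ is $f(n)$ bits per terminal node. But terminal nodes are spread apart — two terminals $u,v$ are at distance $>1$ in the expanded graph (since $H(u,v)$ can be large). So the verifier at a terminal node $u$ cannot directly see the certificates of its $G$-neighbors. We need to propagate this information through each $H(u,v)$.

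**The plan.** First I would handle Layer 1 straightforwardly via the node-to-edge conversion of Subsection~\ref{subsec:edge-certification}, marking each edge with an identifier of the $\mathcal{C}_2$-piece it belongs to (this is $O(\log n)$ bits) and with a $g(n)$-bit $\mathcal{C}_2$-certificate. Each $H(u,v)$ should know its two terminals $u,v$ (by their IDs), so we also store these two IDs on every edge of $H(u,v)$; this lets every node and edge of $H(u,v)$ recognize the terminals. Second, for Layer 2, I would put on every edge of $H(u,v)$ the $\mathcal{C}_1$-certificate of $u$ and the $\mathcal{C}_1$-certificate of $v$ (the two original $f(n)$-bit labels), together with their IDs. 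This way, propagation is free: the information is simply copied on every edge of the piece, so every edge already carries the two endpoint certificates. A terminal node $u$ then collects, from its incident edges, the $\mathcal{C}_1$-certificates of all $G$-neighbors $v$ (one per piece $H(u,v)$ it is in), and simulates the original $\mathcal{C}_1$-verification at $u$. Third, I would add consistency checks: every edge of $H(u,v)$ checks that it agrees with adjacent edges of the same piece on the stored terminal IDs and on the stored $\mathcal{C}_1$-certificates; a terminal node $u$ checks that all edges incident to it within a given piece $H(u,v)$ carry the same $v$-data; and we verify that the $H(u,v)$ pieces meet only at terminals (each piece's non-terminal nodes have all their incident edges inside the same piece).

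**Where the certificate sizes come from.** Each edge carries: a $g(n)$-bit $\mathcal{C}_2$-certificate, two $f(n)$-bit $\mathcal{C}_1$-certificates, and $O(\log n)$ bits of IDs/piece-labels. That totals $O(f(n)+g(n)+\log n)$ per edge, as claimed. No terminal node ever stores more than $O(\log n)$ of its own label (its ID and possibly its own $\mathcal{C}_1$-certificate, which can also be read off its incident edges), so congestion at high-degree terminals is avoided — the $\mathcal{C}_1$-certificates are spread across edges rather than piled up at the vertex.

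**Correctness sketch.** On a yes-instance, honest certificates clearly pass. For soundness: if all checks pass, the $\mathcal{C}_2$-checks on each marked piece guarantee each $H(u,v) \in \mathcal{C}_2$; the terminal-meeting checks guarantee the global graph really is an edge-expansion of the graph $G'$ on terminals defined by "$uv$ is an edge iff there is a piece $H(u,v)$"; and the propagated $\mathcal{C}_1$-certificates, being consistent across every piece and consumed at the terminals, certify $G' \in \mathcal{C}_1$ by soundness of the original $\mathcal{C}_1$-scheme. Hence the whole graph is a genuine edge-expansion of a $\mathcal{C}_1$-graph by $\mathcal{C}_2$-graphs.

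**Main obstacle.** The hard part will be the correct propagation of the $\mathcal{C}_1$-certificates through each piece without congestion, and making the consistency checks airtight so that a cheating prover cannot "mix" terminals — e.g. claim two distinct $\mathcal{C}_1$-neighbors through the same piece, or attach the same piece to the wrong terminals. The "copy on every edge" trick makes propagation trivial in cost but the verifier must still check, locally, that the copied data is globally consistent within each piece and that pieces are genuinely glued only at their declared terminals. This is exactly why the statement only promises edge certificates: converting back to node certificates would reintroduce the $\Delta$-factor congestion at terminals (unless we invoke Theorem~\ref{thm:degeneracy} on a degenerate class).
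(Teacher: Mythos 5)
Your proposal is correct and follows essentially the same route as the paper's proof: convert both certifications to edge certifications, copy the $\mathcal{C}_1$-certificates of the two terminals $u,v$ (together with their identifiers) and the $\mathcal{C}_2$-certificate of the edge onto every edge of $H(u,v)$, and have terminal nodes recover their $G$-neighbors' certificates from incident edges to simulate the $\mathcal{C}_1$ verifier while every node runs the $\mathcal{C}_2$ verifier per piece. Your write-up is somewhat more explicit about the intra-piece consistency checks and the terminal-gluing verification, but the construction and the $O(f(n)+g(n)+\log n)$ accounting are the same.
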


\begin{proof}
We use a similar reasoning as for the proof of Proposition~\ref{prop:node-expansion}, except that we first transform the node certifications of $\mathcal{C}_1$ and $\mathcal{C}_2$ into edge certifications (by putting the label of a node on all the edges incident with it). 

Consider a graph $G\in \mathcal{C}_1$. We consider the edge expansion of $G$ where every $uv$ is replaced by $H(u,v)$. Each edge $e$ from $H(u,v)$ receives the labels of $u$ and $v$, the certificate of $uv$ in $G$ for $\mathcal{C}_1$, and the certificate of $e$ in $H(u,v)$ for $\mathcal{C}_2$. Therefore, the certificates have size $O(f(n)+g(n)+\log n)$.

Now each vertex can check that all the edges labeled in some $H(u,v)$ share the same certificate for $uv$. There are two kinds of nodes: some where all incident edges are labeled as in the same $H(u,v)$, and the others (the original vertices of $G$). All of them run the verification algorithm for $\mathcal{C}_2$ by considering each group of incident edges labeled as in the same $H(u,v)$. The latter also recover the certificates of  their neighbors in $G$ from the edge labeling, and run the verification algorithm for $\mathcal{C}_1$. 
\end{proof}

Before giving deeper applications of these results in future sections, let us prove that the \emph{existence} of a minor in the graph is easy to certify. 
This was already mentioned in previous papers without formal proofs \cite{FeuilloleyFMRRT20, FeuilloleyFMRRT21}. 
We prove it here to show a simple application of our techniques, and we think it is a meaningful illustration of the fact that certifying that a structure is present or absent are two very different tasks in our model.

\begin{corollary}
Given a graph $H$, one can certify that a graph has $H$ as a minor in $O(\log n)$ bits.
\end{corollary}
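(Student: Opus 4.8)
The plan is to certify that $H$ is a minor of $G$ by directly certifying the existence of a model of $H$ inside $G$. Recall that a model of $H$ consists of $|V(H)|$ disjoint connected vertex sets (the branch sets) $V_1,\ldots,V_{|H|}$, together with the requirement that whenever $h_ih_j\in E(H)$, there is at least one edge of $G$ between $V_i$ and $V_j$. The key observation is that, since $H$ is fixed, the number of branch sets is a constant, so we can afford to use a constant number of labels of size $O(\log n)$ per node.

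First, I would have the prover commit to a model: it assigns to every vertex $v$ that lies in some branch set $V_i$ the index $i\in\{1,\ldots,|V(H)|\}$ (and a special marker for vertices used by no branch set). This takes $O(1)$ bits. The heart of the scheme is certifying the two defining properties of a model, namely that each branch set is connected and that the required inter-set edges exist.

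For connectivity of each branch set $V_i$, I would reuse the spanning-tree building block from Subsection~\ref{subsec:building-blocks}: the prover chooses a root in $V_i$ and certifies a spanning tree of the subgraph of $G$ induced by the vertices labeled $i$, using distance-to-root and parent pointers. Each vertex verifies the tree consistency considering only same-labeled neighbors, which costs $O(\log n)$ bits. For the adjacency requirements, for each edge $h_ih_j$ of $H$ the prover must guarantee a witnessing edge between $V_i$ and $V_j$; the natural way is, as in Proposition~\ref{prop:subgraph}, to mark one such witnessing edge (or endpoint pair) and then certify, via an additional rooted spanning tree over all of $G$ (or over $V_i\cup V_j$), that such a marked witness exists for that pair. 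Since $H$ is fixed there are only $O(1)$ such pairs, so the total cost stays $O(\log n)$; alternatively, one can phrase the whole thing as recognizing that $G$ contains, as a subgraph, a suitable graph built from $H$ by node-expansion (replacing each $h_i$ by an arbitrary connected graph) and invoke Corollary~\ref{cor:subgraph_deg} or Proposition~\ref{prop:subgraph} as a black box.

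The main obstacle I anticipate is the potential certificate congestion at individual vertices: a single vertex of $G$ could serve as an endpoint of many witnessing edges, or lie in a branch set that is the target of many adjacency requirements, and naively stacking a spanning-tree certificate for each requirement could blow up a vertex's label. However, because $H$ is of constant size, each vertex participates in at most $|V(H)|$ branch-set trees and at most $|E(H)|$ witness-existence certificates, both $O(1)$; hence the per-vertex load is $O(\log n)$ without needing the more delicate anti-congestion machinery of Section~\ref{sec:avoiding-congestion}. Finally, I would verify soundness: if $G$ genuinely has $H$ as a minor such a labeling exists and all checks pass, whereas any accepted labeling exhibits disjoint connected branch sets with the required connections, which is exactly a model of $H$, so $G$ contains $H$ as a minor.
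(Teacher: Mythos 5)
Your proposal is correct and is in substance the paper's own proof: the paper certifies the model by observing that $G$ contains as a subgraph a node-expansion of $H$ by trees (with one witness edge per edge of $H$) and then invokes Proposition~\ref{prop:node-expansion} together with Corollary~\ref{cor:subgraph_deg}, which is exactly the packaged form of your direct construction with branch-set spanning trees, $O(1)$ marked witness edges, and existence-certifying rooted trees --- an alternative you explicitly note yourself. Your congestion analysis (each vertex carries $O(1)$ certificates of $O(\log n)$ bits because $|V(H)|$ and $|E(H)|$ are constants) is precisely why the paper's bound goes through.
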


\begin{proof}
As we already observed, a graph $G$ has $H$ as minor if and only if $V(G)$ can be partitioned into $|H|$ connected sets such that there is an edge between $V_i$ and $V_j$ when the corresponding vertices in $H$ are connected. Free to delete edges, we can assume that each $V_i$ is actually a spanning tree and there is a unique edge from $V_i$ to $V_j$ if and only if the corresponding vertices are connected in $H$.
In other words, $G$ has a subgraph that is a node expansion of $H$ by trees. Moreover, we can choose such a subgraph with degree at most $|H|-1=O(1)$ since $H$ is fixed.

Let us start from a certification of $H$ and build a certification of $G$. 
The structure of $H$ can be certified in a brute-force way, by providing to every node the complete map of the graph which takes constant space (since $H$ is fixed). 
Then, since trees can be certified in $O(\log n)$ bits, thanks to Proposition~\ref{prop:node-expansion}, any node-expansion of $H$ by trees can be certified with $O(\log n)$ certificates.

We finally get a node certification with certificates of size $O(\log n)$ using Corollary~\ref{cor:subgraph_deg}.
\end{proof}

\section{Connectivity and connectivity decompositions}
\label{sec:connectivity}

In this section, we explain how to certify connectivity properties and connectivity decompositions, in particular the block-cut tree mentioned in the introduction. 

An \emph{ear decomposition} is a way to build a graph by iteratively adding paths, the so-called \emph{ears}. 
Ear decompositions are central tools for decades in structural graph theory and are used in many decomposition or algorithmic results.
There exists various variants of this process, that characterize different classes and properties. 
For certification, these decompositions happen to be easier to manipulate than some other types of characterizations since they are based on iterative construction of the graph, and use paths, which are easy to certify. 
These paths are convenient since we can "propagate" some quantity of information on them as long as every vertex belongs to a bounded number of paths. 
In this section, we remind several such decompositions, and use them to certify various connectivity properties and decompositions.

\subsection{Connectivity properties}

Let us start with 2-connectivity. A graph $G$ has an \emph{open ear decomposition} if $G$ can be built, by starting from a single edge, and iteratively applying the following process: take two different nodes of the current graph and link them by a path whose internal nodes are new nodes of the graph (such a path is called \emph{an ear}). Note that this path can be a single edge, and then there is no new node.
Let an \emph{inner node} of an ear be a vertex that is created with this ear, and let a \emph{long ear} be an ear with at least one inner node.

\begin{theorem}[\cite{Whitney32} (reformulated)]
A graph is 2-connected if and only if it has an open ear decomposition.
\end{theorem}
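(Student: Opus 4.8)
The plan is to prove Whitney's theorem in both directions, starting with the easier forward implication and then handling the construction of the ear decomposition from 2-connectivity, which is the genuinely interesting direction.

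\textbf{The ``if'' direction (open ear decomposition implies 2-connectivity).} I would argue by induction on the number of ears. The base case is a single edge, but since a single edge on two vertices is technically not 2-connected under the usual convention (or is, depending on convention), I would take as the true base case the first cycle formed: starting from an edge $uv$ and adding a path between $u$ and $v$ gives a cycle, which is 2-connected. For the inductive step, suppose the current graph $G'$ is 2-connected and we add an ear, i.e. a path $P$ from $a$ to $b$ (with $a \neq b$ both in $G'$) whose internal vertices are new. I claim the resulting graph $G''$ is still 2-connected. Indeed, removing any single vertex $w$ must leave $G''$ connected: if $w$ is an internal vertex of $P$, then $G'$ is untouched and the remaining part of $P$ still attaches to $G'$ at $a$ and $b$; if $w \in V(G')$, then $G' \setminus \{w\}$ is connected by the induction hypothesis, and the internal vertices of $P$ remain attached through whichever of $a,b$ is not $w$ (here the condition $a \neq b$ is essential). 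This shows no single vertex is a cut-vertex.

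\textbf{The ``only if'' direction (2-connectivity implies an open ear decomposition).} This is the heart of the proof. The plan is to build the decomposition greedily. Since $G$ is 2-connected it contains a cycle $C$ (a 2-connected graph on at least three vertices has minimum degree at least $2$ and hence a cycle); take $C$ as the initial structure, realized as an edge plus one ear. Now suppose we have built a subgraph $G'$ of $G$ through ears, with $G' \neq G$, and I want to find the next ear. If there is an edge $e = xy$ of $G$ with both endpoints in $G'$ but $e \notin G'$, add it as a trivial ear. Otherwise, since $G$ is connected and $G' \neq G$, there is a vertex $z \notin V(G')$ adjacent to some $x \in V(G')$. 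Here I would invoke 2-connectivity of $G$: because $G \setminus \{x\}$ is connected, there is a path from $z$ to $V(G') \setminus \{x\}$ avoiding $x$; taking a shortest such path and prepending the edge $xz$ yields a path from $x$ to some other vertex $y \in V(G')$ whose internal vertices all lie outside $G'$, which is exactly a valid open ear. I would need to verify the endpoints are distinct (guaranteed since $y \neq x$) and that all internal vertices are new (guaranteed by choosing the path minimal so it re-enters $G'$ only at its end). Iterating exhausts all vertices and then all edges of $G$.

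\textbf{Main obstacle.} I expect the delicate point to be the existence of the next ear in the ``only if'' direction, specifically guaranteeing an endpoint $y \neq x$ in $V(G')$ reachable from $z$ while avoiding $x$. The clean way to extract this is precisely the 2-connectivity hypothesis via the fact that $G \setminus \{x\}$ is connected, which furnishes the required $x$-avoiding path; making sure one selects the path to meet $G'$ for the \emph{first} time at its far endpoint keeps all interior vertices genuinely new. A secondary bookkeeping concern is the treatment of the trivial (single-edge) ears and the degenerate initial edge, so that the statement matches the definition given in the excerpt, where an ear may reduce to a single edge with no new inner node.
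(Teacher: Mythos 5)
Your proposal is correct: it is the standard proof of Whitney's ear-decomposition theorem, and both directions are sound. The forward induction (each new ear with distinct endpoints preserves 2-connectivity, since a removed internal ear vertex leaves both stubs attached to the connected base, and a removed base vertex leaves the ear hanging on its other endpoint) is right, and the greedy construction in the converse direction correctly extracts the next ear by applying connectivity of $G \setminus \{x\}$ to reach $V(G') \setminus \{x\}$ and truncating the path at its first return to $G'$. Note, however, that the paper does not prove this statement at all: it imports it as a black box from Whitney (1932), explicitly labelled as a reformulation, and only uses it as the structural input to Lemma~\ref{lem:2-connected-graphs} (where the actual work is certifying the decomposition, not proving its existence). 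So there is no in-paper argument to compare against; your write-up simply supplies the classical proof that the citation stands in for. The only bookkeeping caveat, which you already flag, is aligning the degenerate cases (the initial single edge, trivial single-edge ears, and $K_2$ itself) with the paper's convention that 2-connected means cut-vertex-free; under that convention your base cases match the definition of an open ear decomposition given in the text.
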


We use this characterization to certify 2-connectivity.
\begin{lemma}\label{lem:2-connected-graphs}
2-connected graphs can be certified with $O(\log n)$ bits.
\end{lemma}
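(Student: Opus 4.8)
The plan is to certify the \emph{existence} of an open ear decomposition, relying on Whitney's characterization stated above. The key simplification I would make first is to separate the ears that create new vertices (the \emph{long ears}, together with the initial edge) from the \emph{trivial} ears, which are single edges joining two already-present vertices. Every vertex other than the two initial ones is an inner node of exactly one long ear, so the initial edge together with all long ears forms a \emph{spanning} subgraph $S$ of $G$; moreover $S$ is itself 2-connected, being built from a single edge by open ears. Since a graph containing a spanning 2-connected subgraph is 2-connected (deleting any vertex still leaves $S$ minus that vertex connected, hence $G$ minus that vertex connected), it suffices to certify that $G$ contains such a spanning skeleton $S$: completeness then follows from Whitney, and soundness from the fact that a non-2-connected $G$ cannot contain any spanning 2-connected subgraph. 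In particular the trivial ears (the chords) can simply be ignored by the verifier.

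To certify $S$, I would label each vertex $v$ with its \emph{creation time} $t(v)$ --- the index of the long ear of which it is an inner node, with $t(v)=0$ for the two initial vertices. Each inner node additionally stores its position $d(v)$ along its ear, pointers to its predecessor and successor on the ear, and the identifiers of the two endpoints of its ear; all of this is $O(\log n)$ bits. Using the spanning-tree and path building blocks from Subsection~\ref{subsec:building-blocks}, I would then have each node check locally that: (i) the predecessor/successor pointers, together with the strictly increasing distances $d$, exhibit the edges of each long ear as a simple path whose two extremities are the claimed endpoints; (ii) these endpoints are \emph{older}, i.e.\ have creation time strictly smaller than the ear's index; (iii) the two endpoint identifiers, which every inner node of the ear carries and which are checked to agree between consecutive inner nodes, are distinct (this enforces \emph{openness}); and (iv) there is a single initial edge, enforced by rooting a spanning tree at one initial vertex and broadcasting the pair of initial identifiers so that every node with $t=0$ can confirm it is one of the two.

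The soundness argument I would give is an induction following the total order on ears induced by $t$: the base is the single initial edge, and adding a long ear whose two distinct endpoints have strictly smaller creation time attaches an open ear between two vertices already present, preserving 2-connectivity. The strict decrease of $t$ at each ear's endpoints is also exactly what rules out \emph{floating} components: in any component not containing the base, the inner nodes of the ear of minimum creation time would need endpoints of even smaller time in the same component, a contradiction. Hence if all local checks pass, $S$ is a genuine spanning 2-connected subgraph of $G$.

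I expect the main obstacle to be the \emph{certificate congestion} at old vertices that serve as endpoints of many long ears: a naive scheme storing per-ear data there would blow up the certificate size. The resolution, which drives the whole design, is to place all ear-specific information (distance, pointers, endpoint identifiers) only on the \emph{inner} nodes --- each of which belongs to a single ear --- so that an endpoint never stores more than its creation time and the global spanning-tree data, keeping every certificate at $O(\log n)$ bits. The two genuinely global conditions, openness (distinct endpoints) and the single-initial-edge requirement, are the delicate points; I would handle the former by propagating the endpoint identifiers along each ear and the latter by the spanning-tree broadcast.
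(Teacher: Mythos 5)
Your proposal is correct and follows essentially the same route as the paper: both certify Whitney's open ear decomposition by reducing to the long ears (ignoring trivial chord ears), indexing each ear by its creation time, certifying each ear as a path whose data lives only on its inner nodes so that old vertices suffer no certificate congestion, checking that the two (distinct) endpoints are strictly older, and anchoring everything with a spanning tree to the unique initial edge. Your explicit observation that the long-ear skeleton is a \emph{spanning} 2-connected subgraph is, if anything, a slightly more careful phrasing of the reduction the paper performs via its subgraph corollary.
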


\begin{proof}
First observe that one can obtain a long-ear decomposition from an ear decomposition (and vice versa) by removing/adding short ears, i.e. edges. Therefore, having an open ear decomposition is equivalent to having a subgraph with an open long-ear decomposition. Note that if a graph $G$ has an open long-ear decomposition, then it is $2$-degenerate. Indeed, the vertices of the last added long-ear have degree two and their removal is still a graph with an open long-ear decomposition. So in order to get the conclusion, Corollary~\ref{cor:subgraph_deg} ensures that we only have to certify open long-ear decomposition with $O(\log n)$ bits per vertex.

The certification works as follows. 
First the prover gives to every node the identifiers of the very first edge, and describes and certifies a spanning tree pointing to one of the endpoints of this edge.
The nodes of this edge are given an \emph{index} 0.
Second, the prover gives to every node the information related to the step when it has been added, and only about this step. 
That is, the prover gives the \emph{index} of the addition (that is the number of the ear in which the vertex is created), along with two oriented paths spanning the path and pointing to the two extremities of the ear. 
By Corollary~\ref{coro:pointed}, these paths can be certified without certificates on the extremities of the paths.

Every node checks the correctness of the spanning tree pointing to the first edge, and the fact that only these nodes have index 0.
Then, every node also checks that the spanning paths it has been given are correct, that is: (1) the distances and root-ID are consistent (2) all nodes have the same index, and that (3) the declared endpoints are different.
Also, the two nodes that are adjacent to the endpoints of the paths check that the endpoints have a smaller index than their own.

Let us now prove the correctness of the certification.
Because of the spanning tree, the original edge exists, is unique, and is the only set of nodes with index 0. Because of the certified paths spanning the ears, one can also recover the path structure and the fact that a path is added after its endpoints. 
Note that in an instance where all nodes accept, there might be two different paths with the same index $i$, but this is not a problem: the only important feature is the precedence order.
\end{proof}

With similar construction we can certify the edge connectivity instead of the vertex connectivity.

\begin{corollary}\label{coro:2-edge-connected}
2-edge-connected graphs can be certified with $O(\log n)$ bits.
\end{corollary}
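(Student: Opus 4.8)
The plan is to mirror the proof of Lemma~\ref{lem:2-connected-graphs} almost verbatim, simply swapping the open ear decomposition for the analogous characterization of 2-edge-connectivity. Recall the classical counterpart of Whitney's theorem: a graph is 2-edge-connected if and only if it admits an ear decomposition starting from a cycle in which the ears are allowed to be \emph{closed}, that is, the two endpoints of an ear may coincide, so that such an ear is a cycle attached to the rest of the graph at a single vertex. The only structural difference with the 2-connected case is exactly this relaxation from open to possibly-closed ears.

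First I would reduce, as before, to certifying a \emph{long}-ear decomposition: adding or removing single-edge (short) ears does not affect 2-edge-connectivity, so it suffices to certify the existence of a subgraph admitting a closed long-ear decomposition. In a simple graph a closed ear has at least two internal vertices (a single internal vertex would force a multi-edge), so every closed ear is automatically long, and a graph with a closed long-ear decomposition is still $2$-degenerate: the internal vertices of the last ear have degree two, and removing them leaves a valid decomposition (the base cycle being $2$-regular). By Corollary~\ref{cor:subgraph_deg} it is then enough to certify a closed long-ear decomposition with $O(\log n)$ bits per vertex.

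The certification itself is the one of Lemma~\ref{lem:2-connected-graphs} with a single modification. The prover still fixes a seed edge of index $0$ (certified via a spanning tree pointing to one of its endpoints), gives every other vertex the index of the ear creating it together with the two oriented spanning paths of that ear pointing to its two extremities (certified without labels on the extremities thanks to Corollary~\ref{coro:pointed}), and every vertex checks the spanning tree, the path consistency, and that the extremities of its ear have strictly smaller index. The only change is that we \emph{drop} check~(3), the requirement that the two extremities be distinct, for every ear of index at least $2$: this is precisely how closed ears are encoded (both oriented paths then lead to the same attachment vertex). We keep the distinctness requirement only for the ear of index $1$, which forces it to join the two endpoints of the seed edge and hence ensures that the seed edge lies on a cycle, so that the whole construction starts from a genuine cycle. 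As in Lemma~\ref{lem:2-connected-graphs}, an accepted instance yields a precedence order on the ears from which one reads off a valid closed ear decomposition, and conversely any 2-edge-connected graph admits such an assignment.

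The main point to get right is that relaxing check~(3) in this controlled way is exactly the correct modification. I would verify that adding a closed ear to a 2-edge-connected graph keeps it 2-edge-connected (the new edges lie on the new cycle, and connectivity is maintained through the attachment vertex), and conversely that every accepted instance describes a closed ear decomposition rooted at a cycle and is therefore 2-edge-connected; the subtle case to rule out for soundness is a ``lollipop'', which is why the index-$1$ ear must remain open. The degeneracy bookkeeping and the handling of short ears are routine and identical to the 2-connected case.
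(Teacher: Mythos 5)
Your proposal is correct and follows essentially the same route as the paper: invoke Robbins' characterization of 2-edge-connectivity by ear decompositions starting from a cycle, and adapt the scheme of Lemma~\ref{lem:2-connected-graphs} by dropping the distinct-endpoints check (3). The only immaterial difference is that the paper certifies the base cycle directly as the set of index-0 vertices, whereas you encode it as the seed edge together with a mandatory open index-1 ear (your ``lollipop'' remark correctly justifying why that first ear must stay open).
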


\begin{proof}
Robbins proved in~\cite{Robbins39} that a graph $G$ is 2-edge connected if and only if $G$ has an ear decomposition. 
An ear decomposition is the same as an open ear decomposition, except that it starts from a cycle and that the two endpoints of an ear do not need to be different.
The proof above can thus be adapted to this class.

The only difference is that vertices with index $0$ form a cycle (which can be certified). 
Then during the verification procedure we simply do not have to check that the extremities of the path of the ear decomposition are distinct, in other words we do not have to check (3).
\end{proof}

A more refined type of ear decomposition characterizes the 3-vertex-connected graphs.

\begin{definition}[\cite{Schmidt16, Mondshein71, CheriyanM88}]
Let $ru$ and $rt$ be two edges of a graph $G$. A \emph{Mondshein sequence through $rt$, avoiding $u$} is an open ear decomposition of $G$ such that:
\begin{enumerate}
    \item $rt$ is in the first ear.
    \item the ear that creates node $u$ is the last long ear, $u$ is its only inner vertex, and it does not contain $ru$.
    \item the ear decomposition is non-separating, that is, for every long ear except the last one, every inner node has a neighbor that is created in a later ear.
\end{enumerate}
\end{definition}

\begin{theorem}[\cite{CheriyanM88, Schmidt16}]
Let $ru$ and $rt$ be two edges of a graph $G$. 
The graph $G$ is 3-vertex-connected if and only if it has a Mondshein sequence through $rt$ avoiding $u$, and there are three internally vertex-disjoint path between $t$ and $u$.
\end{theorem}

\begin{corollary}\label{coro:3-connected}
3-connected graphs can be certified with $O(\log n)$ bits on vertices and $O(1)$ bits on edges.
\end{corollary}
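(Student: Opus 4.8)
The plan is to mimic the proof of Lemma~\ref{lem:2-connected-graphs}: I would certify the open ear decomposition underlying a Mondshein sequence avoiding a chosen edge $ru$, and then bolt on local checks for the three extra conditions. By the characterization above (which holds for \emph{every} fixed edge $ru$), this certifies $3$-connectivity, since the prover may designate an arbitrary edge as $ru$: if $G$ is $3$-connected such a sequence exists, and if not, none exists for any choice.

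The first step is to reduce, exactly as in Lemma~\ref{lem:2-connected-graphs}, to certifying the \emph{long-ear backbone} of the sequence, namely the spanning subgraph formed by the first ear and all long ears. Since short ears are single edges between already-created vertices, they can always be postponed to the very end of the sequence; hence they need not be ordered, and it suffices to flag each edge of $G$ with a single bit indicating whether it belongs to the backbone. Every vertex is the inner node of some long ear or lies in the first ear, so the backbone is indeed spanning (the verifier rejects any vertex without a valid index). The backbone has an open long-ear decomposition and is therefore $2$-degenerate, so its structure is certified with $O(\log n)$ bits per vertex as in Lemma~\ref{lem:2-connected-graphs}: index $0$ and a spanning tree for the first edge, and for every inner node the index of its ear together with the two oriented paths spanning that ear (certified without labels on the endpoints via Corollary~\ref{coro:pointed}), checking consistency and that ear endpoints have strictly smaller index. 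Crucially, the heavy $O(\log n)$ data (indices, path pointers) lives on the vertices, each of which belongs to a bounded number of ears, while the edges carry only the constant-size backbone flag. This is what keeps the edge certificates at $O(1)$ even though a $3$-connected graph may be dense, which is precisely why we cannot invoke Corollary~\ref{cor:subgraph_deg} as a black box as was done for $2$-connectivity.

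It then remains to certify the three Mondshein conditions locally. For condition~1, the prover designates the vertex $r$ as an endpoint of the index-$0$ edge. For the non-separating condition~3, each vertex stores its ear index, so every inner node can check \emph{at distance $1$} whether it has a neighbor of strictly larger index; I would require this for every inner node except one distinguished vertex $u$, whose uniqueness is certified by a spanning tree pointing to it. The key observation handling condition~2 is that this single exemption forces $u$ to be the inner node of maximum index: the genuinely last-created inner node has no larger-index neighbor, so it must be the exempted one. Consequently $u$'s ear is automatically the last long ear, and the rest of condition~2 is local: $u$ checks that its ear has it as the only inner node, and that neither of its two ear-edges is the edge flagged (with one more $O(1)$ bit) as $ru$. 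A short correctness argument then reorders the long ears by index and appends the non-backbone edges as short ears, exhibiting a genuine Mondshein sequence avoiding $ru$ exactly on accepted instances.

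The main obstacle I anticipate is the ``last long ear'' requirement, which at first glance is global, being a maximum over all ears; the single-exemption trick above is what turns it into a purely local comparison of indices. The second delicate point is the congestion budget: because $3$-connected graphs can have linear degree, all non-constant information must be confined to the vertices (each in $O(1)$ ears), with the edges used only as a constant-size scaffold marking the backbone and the edge $ru$. Putting these together yields a certification whose accepted instances are exactly the graphs admitting a Mondshein sequence avoiding $ru$, i.e.\ the $3$-connected graphs, using $O(\log n)$ bits on vertices and $O(1)$ bits on edges.
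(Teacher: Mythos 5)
Your proposal follows essentially the same route as the paper: certify the underlying open (long-)ear decomposition exactly as in Lemma~\ref{lem:2-connected-graphs}, then bolt on local checks for the three Mondshein conditions, with $r$ placed on the initial edge and a spanning tree certifying the uniqueness of $u$. The only divergence is an implementation detail for condition~2 --- the paper identifies the last long ear by broadcasting its index to every node, whereas you derive it from a single-exemption version of the non-separating check --- and both mechanisms are sound.
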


\begin{proof}
On \emph{yes}-instances the prover chooses an arbitrary edge $ru$ and certifies the ear decomposition as in Lemma~\ref{lem:2-connected-graphs}.
The prover also adds a spanning tree pointing to the edges $ru$ and $rt$, and gives to every vertex the index of the last long ear created.
These new pieces of information allow the nodes to check that the ear decomposition is a Mondshein sequence.
The prover also encode the three vertex disjoint paths, by pointer on the nodes of these paths, and number them 1, 2 and 3, to allow the nodes to check disjointness. 
\end{proof}

\subsection{Block-cut tree}

Now that we can certify connectivity properties, we introduce a way to certify decomposition of graphs into parts of higher connectivity. Let us start with a few definitions.

A \emph{2-connected component} of a graph $G$ is a connected subgraph $H$ maximal by inclusion such that the removal of one node does not disconnect $H$. Observe that a 2-connected component can consists of just one edge in the case of a bridge, i.e. an edge whose removal disconnects the graph.

The intersection of any pair $C,C'$ of 2-connected components has size at most one. 
Indeed, if it had size at least two, then we could merge these into a larger 2-connected component, which would contradict the maximality. 
So we can define an auxiliary graph from $G$ where every node corresponds to a 2-connected component and there is an edge between two components if and only if they intersect on exactly one node. This graph is a tree, because a cycle would again create a larger 2-connected component, contradicting maximality. 
This tree is called the \emph{block-cut tree}. 

Let $T$ be  a block-cut tree of $G$, and $D$ a maximal 2-connected component chosen to be the root of this tree. (Note that if $G$ is $2$-connected then the graph is reduced to this component). Let $C$ be a component that is not the root of the tree.  The \emph{connecting node} of a component $C$ is the node lying both in $C$ and in its parent component. The \emph{interior} of $C$ is the set of nodes of $C$ minus the connecting node of $C$. Note that the interior of a component is always non-empty.

This section is devoted to proving the following result and apply it for certification:

\ThmTwoConnected*

\begin{proof}[Proof of Theorem~\ref{thm:2-connected-to-general}]
Since $H$ is 2-connected, a graph $G$ is $H$-minor-free if and only if each of its 2-connected components is. (This is basically the observation we made at the beginning of Subesction~\ref{subsec:our-techniques}.) 
This is the property we certify. On a \emph{yes}-instance, the prover will assign the certificates the following way. It first computes the block-cut tree and root it on some node $C$.  It then does the following:
\begin{enumerate}
    \item For each 2-connected component, the prover chooses a node from the interior of the component to be the \emph{leader} of this component.
    Every node of the interior of a component $C$ is given the identifier of the leader of $C$ as well as a spanning tree of $C$ pointing towards it. Since the component is $2$-connected, the component minus the leader of the component is connected and such a tree exists. 
    
    \item Every node is given a label stating whether it is a connecting node or not.

    \item Every node is given the identifier of the connecting node of its component closest from the root in the block-cut tree (called the \emph{component} of the node), as well as a spanning tree pointing to it, using the certification of Lemma~\ref{lem:ST-with-no-root-label} that uses an empty certificate on the root.
    
    \item In order to check acyclicity of the block-cut tree, every node is given the distance of its component to the root-component (in terms of number of components).
    
    \item The prover certifies the $2$-connectivity of each component using the certification of Lemma~\ref{lem:2-connected-graphs} and the fact that it is $H$-minor free using the certification with $f(n)$ bits of the theorem. By Lemma~\ref{lem:pointed} this can be done by only  assigning labels to the interior nodes of the component.
\end{enumerate}

Before we move on to the verification and the correctness of the scheme, note that every node is given a certificate of size $O(f(n)+\log n)$. Indeed, each piece of information we have given to the node is of size $O(\log n)$ or $f(n)$, and we have given a constant number of those to every node. 
In particular, a connecting node in the interior of a component $C$, received only labels that are related $C$, and not labels related to other components it belongs to (since we consider pointed components).

Now, every node does the following verification. 
Every node checks that the spanning tree pointing to the leader is correct. If this step succeeds, we have a partition of the nodes in components. Every node also checks the correctness of the spanning tree pointing to the connecting node.

Every node $v$ checks that, if it has an edge to a connecting node $w$ with a different leader, then $w$ is the connecting node of its own component. Every connecting node $v$ checks that it is connected to a single node in its parent component and that it is the claimed neighbor in that component.
If this step succeeds, we have a decomposition into components linked by connecting nodes. 
The consistency of the component distances are also checked by the nodes: this distance should be decremented at each connecting node, and only there. This ensures the acyclicity of the component structure.
Finally, every node checks that the $2$-connectivity and the $H$-minor-freeness of its component. 
Globally this verification ensures that the graph is $H$-minor-free.
\end{proof}

\section{Application to $C_4$, Diamond, $K_4$ and $K_{2,3}$ minor-free graphs}
\label{sec:diamond-etc}

This section is devoted to the certification of $C_4$-minor-free, diamond-minor-free graphs, $K_4$-minor-free graphs and $K_{2,3}$-minor-free graphs.
All the proofs will follow the same structure: prove that the 2-connected components, which are more structured, can be certified with small labels, and then use Theorem~\ref{thm:2-connected-to-general} to conclude for the general case. 

Before going to this proof let us describe how to certify \emph{series-parallel graphs}, which in addition to be  interesting network topologies~\cite{FlocchiniL03}, are closely related to $K_4$-minor-free graphs.

\begin{definition}
A \emph{(2-terminal) series-parallel graph} is a graph with two labeled vertices called the \emph{source} and the \emph{sink} that can be built recursively as follows. 
A single edge is a series-parallel graph where one endpoint is the source and the other is the sink.
Let $G_1,G_2$ be two series-parallel graphs. The \emph{series} of $G_1$ and $G_2$ which consists in merging the sink of $G_1$ and the source of $G_2$ is a series-parallel graph.
The \emph{parallel} of $G_1$ and $G_2$, which consists in merging the sources of $G_1$ and $G_2$ together and merging the sinks of $G_1$ and $G_2$ together, is a series-parallel graph.
\end{definition}

A \emph{nested ear decomposition} is an open ear decomposition that starts from a path, with two properties: (1) both ends of an ear have to be connected to the same ear, and (2) for every ear, the ears that are plugged onto it are nested. 
Eppstein proved the following in~\cite{Eppstein92} about series-parallel graphs.

\begin{theorem}[\cite{Eppstein92}]\label{thm:series-parallel-nested}
A 2-connected graph is series-parallel if and only if it has a \emph{nested ear decomposition}.
\end{theorem}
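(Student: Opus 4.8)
The plan is to prove both implications by induction, leveraging the fact that the recursive series/parallel structure and the nesting condition on ears are two views of the same laminar combinatorics.

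For the forward direction (series-parallel $\Rightarrow$ nested ear decomposition), I would prove a stronger statement about $2$-terminal series-parallel graphs and induct on their recursive construction, carrying the invariant: \emph{every $2$-terminal series-parallel graph $G$ with terminals $s,t$ admits a nested ear decomposition whose initial path is an $s$-$t$ path $P$, with $s$ and $t$ as its two endpoints}. The base case is a single edge $st$, where $P$ is that edge. For a series composition of $G_1$ (terminals $s_1,t_1$) and $G_2$ (terminals $s_2,t_2$) identifying $t_1=s_2$, I would concatenate the two initial paths into a single $s_1$-$t_2$ path and keep every other ear unchanged: since the two old initial paths now occupy disjoint stretches of the new one, each ear still attaches to the same (sub)ear and the two nested families never interact. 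For a parallel composition, I would keep $P_1$ as the initial path and add $P_2$ as a new ear whose two endpoints are exactly the merged terminals $s$ and $t$; this ear spans all of $P_1$, so it nests strictly outside every ear previously attached to $P_1$, and properties (1) and (2) are preserved.

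For the reverse direction (nested ear decomposition $\Rightarrow$ series-parallel), I would read a series-parallel construction off the given decomposition. Property (1) assigns to each ear a unique \emph{parent} ear (the earlier ear carrying both its endpoints), yielding a forest rooted at the initial path; property (2) says the ears sharing a common parent attach on pairwise non-crossing sub-intervals of that parent's path, i.e.\ they form a laminar family of intervals. I would then prove, by induction from the innermost ears outward, the claim that for every ear $E$ with path endpoints $a,b$ the subgraph $H_E$ consisting of $E$ together with all ears nested (recursively) on it is a $2$-terminal series-parallel graph with terminals $a,b$. Concretely, one recurses on the laminar interval forest living on the path $E$: each interval $I$ carries an ear $F$, and one forms a gadget by parallel-composing the ear $H_F$ (series-parallel with its own endpoints as terminals, by induction) with the $H$-value of the sub-path of $E$ spanned by $I$; reading the path $E$ from $a$ to $b$ as a series composition then splices all these gadgets together. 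Laminarity is exactly what lets each parallel gadget be inserted on its own segment without conflict. Applying the claim to the initial path shows that $G$ itself is series-parallel.

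The main obstacle, and the place where the nesting hypothesis does real work, is this reverse direction: one must verify that the laminar family of intervals on a common path translates faithfully into a well-founded interleaving of series and parallel operations, handling shared endpoints and chains of nested ears so that the innermost-to-outermost induction is sound. It is worth stressing that the theorem asserts only \emph{existence} of a nested decomposition (many ear decompositions of a series-parallel graph are not nested, as one already sees for the diamond), so the forward direction genuinely relies on building the decomposition from the series-parallel tree rather than from an arbitrary ear decomposition. The forward direction is otherwise routine once the invariant ``the initial path is an $s$-$t$ path'' is fixed, the only delicate point being that the spanning ear created by a parallel composition nests correctly around the ears already present.
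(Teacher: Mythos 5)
The paper does not prove this statement: it is imported verbatim from Eppstein's 1992 paper and used as a black box (Lemma~\ref{lem:path-outerplanar}-style), so there is no in-paper argument to compare against. Judged on its own, your sketch is a correct outline of the standard proof and essentially the one Eppstein gives: the forward direction by induction on the series-parallel construction tree with the invariant that the initial ear is an $s$--$t$ path (series composition concatenates initial paths, so the two ear families sit on internally disjoint stretches and remain non-crossing; parallel composition adds the second initial path as an ear spanning the whole first one, hence enclosing all previously attached ears), and the reverse direction by an innermost-to-outermost induction on the parent forest of ears, using laminarity to splice each maximal interval's parallel gadget into a series traversal of its parent ear. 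The two points that would need care in a full write-up are exactly the ones you flag: in the reverse direction, handling ears whose intervals share endpoints or coincide (equal intervals arise from repeated parallel compositions and must be treated as mutually nested), and making precise which ear is the ``parent'' when an ear's endpoint is a vertex shared by two earlier ears. Neither is a gap in the approach, only in the level of detail, which is appropriate for a cited classical result.
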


We will use this decomposition theorem for our certification.

\begin{theorem}\label{thm:2-connected-series-parallel}
2-connected series-parallel graphs can be certified with $O(\log n)$-bit labels.
\end{theorem}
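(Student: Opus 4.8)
The plan is to use the nested ear decomposition characterization (Theorem~\ref{thm:series-parallel-nested}) together with the ear-decomposition certification machinery already developed in Lemma~\ref{lem:2-connected-graphs}. Since a 2-connected series-parallel graph is exactly one admitting a nested ear decomposition, the strategy is to certify an open ear decomposition as before, and then additionally certify the two extra properties that make it nested: (1) both endpoints of each ear lie on a common ear, and (2) the ears plugged onto a given ear are properly nested. Just as in Lemma~\ref{lem:2-connected-graphs}, I would first argue that such a graph is $2$-degenerate (the inner nodes of the last ear have degree 2), so by Corollary~\ref{cor:subgraph_deg} it suffices to certify the decomposition with $O(\log n)$ bits per vertex in the edge-certification setting.

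First I would reuse the base certification: the prover gives each node its ear index, two oriented spanning paths pointing to the two extremities of its ear (certified without labels on the extremities via Corollary~\ref{coro:pointed}), and a spanning tree pointing to the first ear. This already certifies an open ear decomposition in $O(\log n)$ bits. The new ingredient is to certify that this decomposition is \emph{nested}. For property~(1), each node learns the identifier of the \emph{host ear} onto which its own ear is plugged, and the two extremity nodes of each ear check locally that they both attach to the same host ear (they can verify this because each stores the index/identifier of the host ear, and the adjacency to the host ear is witnessed by the edge joining the extremity to its attachment point).

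The technical heart is property~(2): certifying that the ears plugged onto a fixed host ear $P$ are \emph{nested}, i.e. their attachment intervals along $P$ form a laminar family. The natural way to encode this is to linearly order the vertices of $P$ (which we already have, via the oriented spanning path structure of $P$), and to observe that the union of $P$ with all the ears nested onto it is precisely a \emph{path-outerplanar} graph in the sense of the definition given earlier: $P$ is the Hamiltonian path drawn on a line, and the nesting condition is exactly the statement that the non-path edges (the ``chords'' realized by the attachment points of the plugged ears) can be drawn above the line without crossings. Thus I would invoke Lemma~\ref{lem:path-outerplanar} to certify, for each host ear, the path-outerplanar structure formed by that ear and the attachment points of the ears nested on it, contracting each nested ear to a single chord. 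The main obstacle will be doing this \emph{locally and without certificate congestion}: a single host ear may carry many nested ears, so I must ensure the path-outerplanarity certificate is distributed along the host ear (each vertex of $P$ storing only $O(\log n)$ bits of the path-outerplanar scheme) rather than concentrated on one node, and that the attachment information of each plugged ear is checked at its two extremities only.

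Finally I would assemble the verification: every node checks its ear-decomposition data as in Lemma~\ref{lem:2-connected-graphs}, checks the host-ear consistency of property~(1), and participates in the path-outerplanarity check of property~(2) for the host ear it lies on. Because each vertex belongs to exactly one ear and each ear is plugged onto exactly one host ear, every vertex carries only a constant number of $O(\log n)$-bit pieces of information, giving total certificate size $O(\log n)$. Correctness in the forward direction follows by having the prover honestly encode a genuine nested ear decomposition; in the backward direction, acceptance forces a valid open ear decomposition satisfying both nesting conditions, which by Theorem~\ref{thm:series-parallel-nested} certifies that each $2$-connected graph so decomposed is series-parallel.
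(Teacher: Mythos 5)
Your proposal follows essentially the same route as the paper: reuse the open-ear-decomposition certification of Lemma~\ref{lem:2-connected-graphs} (adapted to start from a path), check via announced identifiers that both endpoints of each ear attach to the same host ear, and certify the nesting condition by treating each host ear with its plugged ears contracted to chords as a path-outerplanar graph via Lemma~\ref{lem:path-outerplanar}, transferring the relevant information along each nested ear between its two endpoints. Your treatment of the congestion issue is in fact slightly more explicit than the paper's, but the argument is the same.
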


\begin{proof}
The prover certifies the decomposition of Theorem~\ref{thm:series-parallel-nested}.
We have already described how to certify an open ear decomposition in the proof of Lemma~\ref{lem:2-connected-graphs}.
We can easily adapt it so that it starts from a path instead of an edge: there is a spanning tree pointing to one of the endpoints of the paths, and the path itself is certified with distances, the usual way.

It is also easy to certify that each ear $e$ has both of its endpoints on the same older ear $e'$: just give to each vertex of $e$ the identifiers of the endpoints of $e$ and $e'$.
The endpoints of an ear can check the consistency of these announced identifiers with the identifiers of their paths.
A more tricky part is to certify that the ears are nested. 
Remember that Lemma~\ref{lem:path-outerplanar} states that a path with nested edges (a path-outerplanar graph) can be certified with $O(\log n)$-bit labels. 
This is exactly what we need except that we would like to have nested paths instead of nested edges. 
But then we can transfer the information from one endpoint of the paths to the other endpoint. 
\end{proof}

\begin{lemma}\label{lem:C5minor}
$2$-connected $C_5$-minor free graphs are either graphs of size at most $4$ or $K_{2,p}$ or $K'_{2,p}$ which is the complete bipartite graph $K_{2,p}$ plus an edge between the two vertices on the set of size $2$.
\end{lemma}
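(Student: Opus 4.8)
The plan is to first translate the forbidden minor into a condition on cycle lengths, and then to read off the global structure from a longest cycle. Since $C_5$ has maximum degree $2$, containing $C_5$ as a minor is equivalent to containing a subdivision of $C_5$, that is, to containing a cycle of length at least $5$: one direction contracts a long cycle down to five vertices, and the other uses that a subdivision of a cycle is again a (longer) cycle. Hence a graph is $C_5$-minor-free if and only if every cycle has length at most $4$. It therefore suffices to prove that a $2$-connected graph $G$ with at least $5$ vertices in which all cycles have length at most $4$ is isomorphic to $K_{2,p}$ or $K'_{2,p}$; the remaining graphs of the statement are exactly those on at most $4$ vertices.

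Next I would pin down a longest cycle. As $G$ is $2$-connected it contains a cycle, so there is a cycle of length $3$ or $4$. If the longest cycle were a triangle $\{x,y,z\}$, then for any vertex $w$ outside it (which exists since $|V|\ge 5$) the fan version of Menger's theorem gives two internally disjoint paths from $w$ to two vertices of the triangle, and closing them through the length-$2$ arc of the triangle produces a cycle of length at least $4$, a contradiction. So I fix a $4$-cycle $C=c_1c_2c_3c_4$. For a vertex $w\notin C$, the fan lemma again yields two paths from $w$ ending at two distinct vertices $c_i,c_j$ of $C$ and otherwise disjoint from $C$. If $c_i,c_j$ are consecutive on $C$, closing through the complementary arc of length $3$ gives a cycle of length at least $5$; so $c_i,c_j$ must be an opposite pair, and closing through the arc of length $2$ forces both paths to be single edges. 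Thus every $w\notin C$ is adjacent to exactly one opposite pair, either $\{c_1,c_3\}$ or $\{c_2,c_4\}$, and to no two consecutive vertices of $C$ (two such neighbours would already yield a $5$-cycle $w\,c_1\,c_4\,c_3\,c_2\,w$).

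It then remains to force all attachments onto a single pair and to exclude chords. If some $w$ attached to $\{c_1,c_3\}$ while some $w'$ attached to $\{c_2,c_4\}$, then $w\,c_1\,c_2\,w'\,c_4\,c_3\,w$ would be a $6$-cycle; hence all outside vertices share a common opposite pair, which I name $\{a,b\}$. The other two vertices of $C$ are adjacent to both $a$ and $b$ along the cycle, so in fact every vertex other than $a,b$ is adjacent to both of them; writing these as $x_1,\dots,x_p$ with $p=|V|-2\ge 3$, an edge $x_ix_j$ would close the $5$-cycle $a\,x_i\,x_j\,b\,x_k\,a$ using a third vertex $x_k$ (available precisely because $p\ge 3$). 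So the $x_i$ form an independent set, the only remaining possible edge is $ab$, and $G$ is $K_{2,p}$ when $ab$ is absent and $K'_{2,p}$ when it is present.

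The main obstacle is the case analysis around the longest cycle, namely the two paragraphs above that must rule out every attachment pattern and every chord by exhibiting a cycle of length at least $5$; the initial reduction to circumference at most $4$ and the treatment of the small cases are routine.
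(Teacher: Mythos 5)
Your proof is correct, and it takes a noticeably different route from the paper's. The paper never isolates the statement ``$C_5$-minor-free $\Leftrightarrow$ circumference at most $4$''; instead it applies Menger's theorem to arbitrary pairs, chooses the pair $(u,v)$ maximizing $|N(u)\cap N(v)|$, and derives the bipartite structure from extremality of that pair (showing $N(u)\cap N(v)$ separates $u$ from $v$, then $N(u)=N(u)\cap N(v)$, then a case analysis on vertices outside $N(u)\cup\{u,v\}$). You instead make the circumference reduction explicit, fix a longest cycle, show it must be a $4$-cycle, and use the fan version of Menger's theorem to classify how outside vertices attach, identifying the size-$2$ side of $K_{2,p}$ as an opposite pair of that $4$-cycle. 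What your version buys is transparency: each forbidden configuration is killed by exhibiting a concrete cycle of length at least $5$, and the reduction to circumference $\le 4$ (valid because $C_5$ has maximum degree $2$, so minor containment coincides with topological containment) does all the translation work up front. The paper's version is terser but rougher — for instance its opening claim that the two internally disjoint $xy$-paths each have length at most $2$ is not immediate (one could a priori have lengths $3$ and $1$), and the argument that $N(u)=N(u)\cap N(v)$ is stated with some slips — whereas your fan-based case analysis is fully spelled out and checks cleanly (the $6$-cycle $w\,c_1\,c_2\,w'\,c_4\,c_3\,w$ ruling out mixed attachments, and the $5$-cycle $a\,x_i\,x_j\,b\,x_k\,a$ ruling out chords, both verify). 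Either proof establishes the lemma; yours would be the easier one to referee.
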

\begin{proof}
Since $G$ is $2$-connected, by Menger's theorem, for every pair $x,y$ of vertices, there exist at least two vertex disjoint $xy$-paths. Since $G$ is $C_5$-minor free, these paths have size at most $2$, in particular $x,y$ are at distance at most 2. 

Let $u,v$ be two non-adjacent vertices. Then the removal of $N(u) \cap N(v)$ disconnects $u$ from $v$ since otherwise we can find two vertex disjoint $uv$-paths, one being of size at least $3$, which provides a $C_5$. In particular, it implies that $|N(u) \cap N(v)| \ge 2$ since $G$ is $2$-connected.

Let $x\in N(u)\setminus (N(v)\cup \{v\})$. Since $x,v$ are non-adjacent, there must be an edge between $x$ and $N(v)$. But this creates a $C_5$ since $|N(u)\cap N(v)|\geqslant 2$. Therefore non-adjacent vertices are twins. 

Let $I$ be a maximum independent set in $G$. Note that all the vertices of $I$ are twins. Therefore, by maximality, if $u\notin I$ then $u$ is complete to $I$. Now either vertices of $I$ have degree at least 3, and $G$ contains $K_{3,3}$ hence a $C_5$-minor, or vertices of $I$ have degree 2 and $G$ is $K_{2,p}$ or $K'_{2,p}$.
\end{proof}

We can now prove easily the claimed certifications.

\begin{corollary}\label{coro:diamond-etc}
The following classes of graphs can be certified with $O(\log n)$ bit certificates: $C_4$-minor-free graphs,$C_5$-minor free graphs, diamond-minor-free graphs, house-minor free graphs\footnote{The \emph{house} being a $C_4$ plus a vertex connected to two consecutive vertices of the $C_4$.}, outerplanar graphs (that is $(K_{2,3}, K_4)$-minor-free graphs), $K_{2,3}$-minor-free and $K_4$-minor-free graphs.
\end{corollary}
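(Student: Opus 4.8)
The plan is to prove each class in Corollary~\ref{coro:diamond-etc} by the recipe announced at the start of Section~\ref{sec:diamond-etc}: certify the 2-connected components with $O(\log n)$ bits, then invoke Theorem~\ref{thm:2-connected-to-general} (which requires the forbidden minor to be 2-connected — and indeed $C_4$, $C_5$, the diamond, the house, $K_{2,3}$ and $K_4$ are all 2-connected) to lift the certification to general graphs at an additive $O(\log n)$ cost. So it suffices to certify the 2-connected members of each class with $O(\log n)$ bits.

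First I would dispatch the cycle cases. For $C_4$-minor-free and $C_5$-minor-free graphs, Lemma~\ref{lem:C5minor} (and its easy $C_4$ analogue) tells us that the 2-connected members are either bounded-size graphs or graphs of the form $K_{2,p}$ or $K'_{2,p}$. Bounded-size graphs are certified trivially (the prover broadcasts the whole map in $O(1)$ bits, or $O(\log n)$ with identifiers). For $K_{2,p}$ and $K'_{2,p}$, the prover marks the two vertices of the size-2 side, and every other vertex checks that its only neighbors are these two marked vertices; a spanning-tree broadcast of the existence and identity of the two special vertices, certified as in Subsection~\ref{subsec:building-blocks}, makes this locally checkable in $O(\log n)$ bits. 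The diamond and house cases can be reduced to these: a 2-connected diamond-free or house-free graph is again of a very restricted shape close to $K_{2,p}$, so the same marking-plus-broadcast argument applies after establishing the structural description (which I would state as a short lemma paralleling Lemma~\ref{lem:C5minor}).

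Next I would handle the $K_4$-related cases, which are the substantive ones. A graph is $K_4$-minor-free if and only if each of its 2-connected components is series-parallel; Theorem~\ref{thm:2-connected-series-parallel} certifies 2-connected series-parallel graphs with $O(\log n)$ bits, so $K_4$-minor-freeness follows immediately via Theorem~\ref{thm:2-connected-to-general}. For outerplanar graphs, I would use the classical characterization that outerplanar $=$ $(K_4, K_{2,3})$-minor-free, together with the fact that 2-connected outerplanar graphs are exactly the 2-connected series-parallel graphs having in addition a Hamiltonian-cycle outer face; these are certifiable by combining the nested-ear certification of Theorem~\ref{thm:2-connected-series-parallel} with the path-outerplanar building block of Lemma~\ref{lem:path-outerplanar}. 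Finally, for $K_{2,3}$-minor-free graphs I would invoke the structure theorem of Dirac/Duffin stating that 2-connected $K_{2,3}$-minor-free graphs are either outerplanar or of bounded size (specifically $K_4$ or a wheel-like exception); each piece is then certified by one of the schemes above.

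The main obstacle I anticipate is not the lifting step — Theorem~\ref{thm:2-connected-to-general} makes that mechanical — but rather assembling the correct structural classification for each 2-connected class and checking that each resulting family (the $K_{2,p}$-type graphs, the series-parallel graphs, the bounded-size exceptions) genuinely admits an $O(\log n)$ certificate. The delicate point is that a class like ``2-connected diamond-free'' or ``2-connected $K_{2,3}$-free'' may be a finite union of qualitatively different families, and the prover must first certify which family the instance belongs to (a single $O(1)$- or $O(\log n)$-bit flag broadcast along a spanning tree) before applying the appropriate sub-scheme; ensuring the verifier can locally reject inputs claiming the wrong family is where the care is needed. Once each structural dichotomy is pinned down, the certifications are short and the $O(\log n)$ bound is immediate.
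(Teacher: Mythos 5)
Your overall route is exactly the paper's: reduce to the 2-connected case via Theorem~\ref{thm:2-connected-to-general} (all six forbidden minors are indeed 2-connected), pin down the structure of the 2-connected members of each class, and certify each structured family with the building blocks of Section~\ref{sec:preliminaries}. The $C_4$, $C_5$, $K_4$ and outerplanar items match the paper's proof essentially verbatim (the paper certifies 2-connected outerplanar graphs as path-outerplanar graphs plus one extra edge located by a spanning tree, which is a slightly more direct phrasing of your nested-ear argument, but the ingredients are the same).

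There is, however, one concrete error in the structural claims you assert. For the diamond you say the 2-connected members are ``of a very restricted shape close to $K_{2,p}$'' and propose the same marking-plus-broadcast scheme as for $C_5$. This is wrong: $K_{2,3}$ itself contains a diamond minor (contract one edge of $K_{2,3}$ and you obtain $K_4$ minus an edge), so a scheme that accepts $K_{2,p}$-shaped 2-connected components is unsound for diamond-minor-freeness. The correct statement, which the paper uses, is that 2-connected diamond-minor-free graphs are exactly induced cycles (and 2-connected house-minor-free graphs are induced cycles or graphs on at most four vertices); cycles are certified with $O(1)$ bits by a degree check, so the fix is easy, but the lemma you deferred must be the cycle lemma, not a $K_{2,p}$ lemma. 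A smaller slip of the same kind: your ``wheel-like exception'' for 2-connected $K_{2,3}$-minor-free graphs does not exist --- every wheel on at least five vertices contains $K_{2,3}$ as a subgraph. The correct dichotomy, which the paper proves in a few lines (a fifth vertex attached to a $K_4$ in a 2-connected graph forces a $K_{2,3}$ minor), is: outerplanar, or exactly $K_4$. With these two structural statements corrected, your proof goes through and coincides with the paper's.
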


\begin{proof}
By Theorem~\ref{thm:2-connected-to-general}, if we can certify the 2-connected graphs of these classes we obtain the conclusion. So we simply have to prove that for each class we can certify the 2-connected graph of the class. 
\begin{itemize}
    \item 2-connected $C_4$-minor-free graphs are $K_2$ and $K_3$~\cite{chimani2019cut}, which can be certified with $O(1)$ bits.
    \item 2-connected $C_5$-minor-free graphs are either graphs of size at most $4$ or a complete bipartite graph $K_{2,p}$ (with a potential edge between the two vertices in the set of size $2$ by Lemma~\ref{lem:C5minor}. Since such graphs can be certified with $O(\log n)$ bits, the conclusion follows. 
    \item 2-connected diamond-minor-free graphs are induced cycles. Cycles can be certified with $O(1)$ bits (see the discussion after Lemma~\ref{lem:acyclicity}).
    \item 2-connected house-minor-free graphs are either induced cycles or graphs of size at least four. Indeed, assume that there is a cycle of length at least $5$. 
    Then it should be induced, since otherwise it contains a house as a minor. Moreover, it should contain all the vertices of the graph otherwise there is an ear starting from this cycle and the cycle plus the ear provides a house. Since induced cycles can be easily certified with $O(\log n)$ bits, the conclusion follows.
    \item 2-connected outerplanar graphs are exactly path-outerplanar graphs with an edge between the first and the last node. Indeed, by 2-connectivity, the outer face must be a cycle, and removing any edge from it yields a path-outerplanar graph. One can then certify the existence and uniqueness of this edge using a spanning tree, and then certify that the rest of the graph is path-outerplanar. This yields a $O(\log n)$-bit certification by Theorem~\ref{lem:path-outerplanar}.
    \item Let $G$ be a 2-connected $K_{2,3}$-minor-free graph. If $G$ does not contain $K_4$, then it is a 2-connected outerplanar graph and the result follows from the previous item. Otherwise, if $G$ is not restricted to $K_4$, then it contains a fifth vertex $u$. Since $G$ is connected, there is a shortest path from $u$ to the $K_4$ ending at $v$. Since $v$ is not a cut-vertex, there should be another path between $u$ and the $K_4$ avoiding $v$, but this creates a $K_{2,3}$ minor. Therefore, $G$ is $K_4$, which can be certified easily.
    \item The 2-connected $K_4$-minor-free graphs are exactly the 2-connected series-parallel graphs. Then the results follow directly from Theorem~\ref{thm:2-connected-series-parallel}.\qedhere
\end{itemize}
\end{proof}

\section{Application to $K_{2,4}$-minor free graphs}
\label{sec:K24}

When the size of the minors are increasing (and for most of the decomposition theorems known in structural graph theory), 2-connectivity is not enough. In this example we will illustrate how to use the certificate of $3$-connectivity to conclude.

Let us illustrate it for this section on the characterization of $K_{2,4}$-minor-free graphs from~\cite{EllinghamMOT16}. 
It is more involved than the other characterizations we have seen so far. 
We will follow the structure of \cite{EllinghamMOT16}, restricting first to 3-connected graphs, then to 2-connected graphs, and finally all $K_{2,4}$-minor-free graphs. 

\subsection{3-connected case}

Let us start with the definition of a graph class. We use notations similar to \cite{EllinghamMOT16} (Section~2.1). See Figure~\ref{fig:Gnrsp}.

\begin{definition}\label{def:Gnrsp}
Let $n$, $r$, $s$ be  three integers, and $p$ a Boolean. The graph $G_{n,r,s,p}$ consists of 
a path  $v_1,...,v_n$, 
the edges $v_1v_{n-i}$ for $1\leq i\leq r$, 
the edges $v_nv_{1+j}$ for $1\leq j\leq s$, 
and the edge $v_1v_n$ if $p=1$. 
For a function $f(n,r,s,p)$ that associate a Boolean with each combination of parameters, let $\mathcal{G}[f]$ be the set of graphs $G_{n,r,s,p}$ such that $f(n,r,s,p)=1$.
\end{definition}

\begin{figure}[!h]
    \centering
    \begin{tikzpicture}[x=0.75pt,y=0.75pt,yscale=-1,xscale=1]

\draw [line width=1.5]    (123.45,121) -- (339.45,121) ;
\draw  [fill={rgb, 255:red, 0; green, 0; blue, 0 }  ,fill opacity=1 ] (332.9,121) .. controls (332.9,117.38) and (335.83,114.45) .. (339.45,114.45) .. controls (343.07,114.45) and (346,117.38) .. (346,121) .. controls (346,124.62) and (343.07,127.55) .. (339.45,127.55) .. controls (335.83,127.55) and (332.9,124.62) .. (332.9,121) -- cycle ;
\draw  [fill={rgb, 255:red, 0; green, 0; blue, 0 }  ,fill opacity=1 ] (153,120.55) .. controls (153,116.93) and (155.93,114) .. (159.55,114) .. controls (163.17,114) and (166.1,116.93) .. (166.1,120.55) .. controls (166.1,124.17) and (163.17,127.1) .. (159.55,127.1) .. controls (155.93,127.1) and (153,124.17) .. (153,120.55) -- cycle ;
\draw  [fill={rgb, 255:red, 0; green, 0; blue, 0 }  ,fill opacity=1 ] (123.45,121) .. controls (123.45,117.38) and (126.38,114.45) .. (130,114.45) .. controls (133.62,114.45) and (136.55,117.38) .. (136.55,121) .. controls (136.55,124.62) and (133.62,127.55) .. (130,127.55) .. controls (126.38,127.55) and (123.45,124.62) .. (123.45,121) -- cycle ;
\draw  [fill={rgb, 255:red, 0; green, 0; blue, 0 }  ,fill opacity=1 ] (183,120.55) .. controls (183,116.93) and (185.93,114) .. (189.55,114) .. controls (193.17,114) and (196.1,116.93) .. (196.1,120.55) .. controls (196.1,124.17) and (193.17,127.1) .. (189.55,127.1) .. controls (185.93,127.1) and (183,124.17) .. (183,120.55) -- cycle ;
\draw  [fill={rgb, 255:red, 0; green, 0; blue, 0 }  ,fill opacity=1 ] (214,120.55) .. controls (214,116.93) and (216.93,114) .. (220.55,114) .. controls (224.17,114) and (227.1,116.93) .. (227.1,120.55) .. controls (227.1,124.17) and (224.17,127.1) .. (220.55,127.1) .. controls (216.93,127.1) and (214,124.17) .. (214,120.55) -- cycle ;
\draw  [fill={rgb, 255:red, 0; green, 0; blue, 0 }  ,fill opacity=1 ] (244,121.55) .. controls (244,117.93) and (246.93,115) .. (250.55,115) .. controls (254.17,115) and (257.1,117.93) .. (257.1,121.55) .. controls (257.1,125.17) and (254.17,128.1) .. (250.55,128.1) .. controls (246.93,128.1) and (244,125.17) .. (244,121.55) -- cycle ;
\draw  [fill={rgb, 255:red, 0; green, 0; blue, 0 }  ,fill opacity=1 ] (273,121.55) .. controls (273,117.93) and (275.93,115) .. (279.55,115) .. controls (283.17,115) and (286.1,117.93) .. (286.1,121.55) .. controls (286.1,125.17) and (283.17,128.1) .. (279.55,128.1) .. controls (275.93,128.1) and (273,125.17) .. (273,121.55) -- cycle ;
\draw  [fill={rgb, 255:red, 0; green, 0; blue, 0 }  ,fill opacity=1 ] (303,120.45) .. controls (303,116.83) and (305.93,113.9) .. (309.55,113.9) .. controls (313.17,113.9) and (316.1,116.83) .. (316.1,120.45) .. controls (316.1,124.07) and (313.17,127) .. (309.55,127) .. controls (305.93,127) and (303,124.07) .. (303,120.45) -- cycle ;
\draw    (130,121) .. controls (152.1,70.75) and (271.1,60.75) .. (309.55,120.45) ;
\draw    (130,121) .. controls (163.1,81.75) and (232.1,72.75) .. (279.55,121.55) ;
\draw    (130,121) .. controls (170,91) and (213.1,87.75) .. (250.55,121.55) ;
\draw    (189.55,120.55) .. controls (228.1,152.75) and (299.45,151) .. (339.45,121) ;
\draw    (159.55,120.55) .. controls (199.1,167.75) and (309.1,164.75) .. (339.45,121) ;
\end{tikzpicture}
    \caption{Example for Definition~\ref{def:Gnrsp}. This graph is $G_{8,3,2,0}$: it has 8 nodes, edges $v_1v_{n-i}$ for $i\in \{1,2,3\}$, edges $v_nv_{1+j}$ for $j \in \{1,2\}$, and it does not have the edge $v_1v_n$.}
    \label{fig:Gnrsp}
\end{figure}
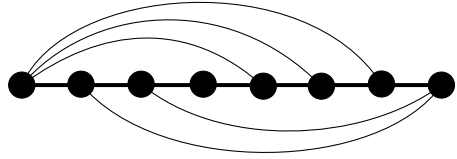

\begin{theorem}[Theorem 2.12 in~\cite{EllinghamMOT16} (adapted)]\label{thm:K24-3-connected}
There exists an $f$ such that the set of 3-connected $K_{2,4}$-minor-free graphs is $\mathcal{G}[f]$, plus nine graphs on at most 8 vertices.
\end{theorem}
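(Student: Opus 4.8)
Since the statement is an adaptation of Theorem~2.12 of \cite{EllinghamMOT16}, the plan is to extract from their structure theorem a description phrased in the notation of Definition~\ref{def:Gnrsp}. Concretely, I would set $f(n,r,s,p)=1$ precisely when $G_{n,r,s,p}$ is 3-connected and $K_{2,4}$-minor-free; with this choice the inclusion $\mathcal{G}[f]\subseteq\{\text{3-connected $K_{2,4}$-minor-free}\}$ holds by definition, and the whole content of the theorem becomes the reverse inclusion: every 3-connected $K_{2,4}$-minor-free graph on enough vertices is isomorphic to some $G_{n,r,s,p}$, the finitely many remaining graphs forming the nine exceptions. Verifying that the admissible parameter combinations are genuinely captured by some Boolean $f$ is immediate once the reverse inclusion is in hand, so that is where I would focus.

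The core of the argument is to show that a 3-connected $K_{2,4}$-minor-free graph $G$ on at least nine vertices has two ``apex'' vertices $a,b$ such that $G\setminus\{a,b\}$ is a path and every edge not on the resulting Hamiltonian path is incident to $a$ or $b$. This is the main obstacle, and the part that genuinely requires the detailed analysis of \cite{EllinghamMOT16}: since $K_{2,4}$ is planar, $K_{2,4}$-minor-free graphs are highly path-like (of bounded treewidth), and one combines this linear shape with 3-connectivity---which forces each internal vertex to carry a chord to an apex---to rule out a third vertex bearing chords. Small graphs escape this forcing, which is exactly why sporadic exceptions (such as the triangular prism) appear and must be listed separately.

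Once the apex-plus-path skeleton is established, the remaining analysis is local. Writing $S$ and $T$ for the neighborhoods of $a$ and $b$ among the internal path vertices, a $K_{2,4}$ minor with branch sets $\{a\}$ and $\{b\}$ corresponds exactly to four pairwise-disjoint subpaths each meeting both $S$ and $T$. Forbidding four such subpaths forces $S$ and $T$ to be unmixed along the path: $S$ a suffix $\{v_{n-r},\dots,v_{n-1}\}$, $T$ a prefix $\{v_2,\dots,v_{1+s}\}$, with bounded overlap $|S\cap T|\le 3$, together with the optional edge $ab$ recorded by $p$. This is exactly the shape $G_{n,r,s,p}$, and the parameter combinations for which the graph is actually 3-connected and $K_{2,4}$-minor-free are what $f$ records.

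Finally I would dispose of the graphs below the threshold needed for the forcing argument by a finite check: enumerate the 3-connected graphs on at most eight vertices, test $K_{2,4}$-minor-freeness, and collect those not already of the form $G_{n,r,s,p}$---these are the nine exceptional graphs. Among these steps, the nesting argument and the finite enumeration are routine; the genuine difficulty, and the reason to invoke \cite{EllinghamMOT16} rather than reprove it, is establishing the two-apex skeleton, where 3-connectivity and $K_{2,4}$-minor-freeness must be combined globally.
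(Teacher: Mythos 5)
The paper offers no proof of this statement: it is imported directly from \cite{EllinghamMOT16} (their Theorem~2.12), with the only ``adaptation'' being that the admissible parameter combinations are packaged into an unspecified Boolean $f$ --- deliberately left unspecified, since Lemma~\ref{lem:Gf} works for any $f$. Your proposal does the same thing at its core (defer the structural classification to the cited theorem and let $f$ simply record which $G_{n,r,s,p}$ are 3-connected and $K_{2,4}$-minor-free), so it matches the paper's treatment; your extra sketch of the two-apex skeleton and the nesting argument inside \cite{EllinghamMOT16} is a reasonable gloss on the cited proof but not something the paper attempts or needs.
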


In \cite{EllinghamMOT16}, the authors give an explicit description of $f$  but we can avoid going into details here because of the following general lemma. 

\begin{lemma}\label{lem:Gf}
For all $f$, $\mathcal{G}[f]$ can be certified with $O(\log n)$-bit labels.
\end{lemma}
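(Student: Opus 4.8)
The plan is to exploit the fact that each graph $G_{n,r,s,p}$ is \emph{rigidly determined} by its four parameters: once a vertex knows its position $i$ on the Hamiltonian path $v_1,\dots,v_n$ together with the tuple $(n,r,s,p)$, the entire incidence pattern of $v_i$ in $G_{n,r,s,p}$ is fixed by Definition~\ref{def:Gnrsp}. So certification reduces to (a) certifying a correct indexing of the Hamiltonian path, (b) propagating and anchoring the global tuple $(n,r,s,p)$, (c) letting every vertex check locally that its neighbourhood is exactly the one prescribed, and (d) having every vertex evaluate the (fixed) Boolean $f$ on the shared tuple. Note that a single application of Lemma~\ref{lem:path-outerplanar} does not suffice: the fan at $v_1$ and the fan at $v_n$ generally sit on opposite sides of the path (they interleave, as already in $G_{8,3,2,0}$), so the graph need not be path-outerplanar; the direct index-based scheme circumvents this.

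For step (a), the prover roots the path at $v_1$ and hands each vertex its index $i$, realised as $(\text{distance to }v_1)+1$. The path edges are exactly the pairs of neighbours whose indices differ by one, and I would certify this skeleton as a spanning tree using the distance/root-ID scheme behind Lemma~\ref{lem:acyclicity} (or Lemma~\ref{lem:ST-with-no-root-label}), augmented with the check from the paths discussion that every vertex has at most one successor (index $i+1$). Being a spanning tree forces the indexing to cover all vertices; the ``at most one successor'' condition forces it to be a simple path, so the indices are exactly $1,\dots,n$ each occurring once. For step (b), each vertex receives $(n,r,s,p)$ and checks that all its neighbours carry the same tuple; since the graph is connected, this enforces global agreement. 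The parameters are then \emph{anchored to the structure} by local tests: the path endpoint must have index equal to the claimed $n$; $v_1$ must have exactly $r$ ``backward'' fan-neighbours and $v_n$ exactly $s$ ``forward'' fan-neighbours; and the edge $v_1v_n$ must be present if and only if $p=1$.

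For step (c), each vertex $v_i$, knowing $i$ and $(n,r,s,p)$, verifies that its neighbour-index set is \emph{exactly} $\{i-1,i+1\}\cap\{1,\dots,n\}$, together with index $1$ when $n-r\le i\le n-1$, index $n$ when $2\le i\le 1+s$, and the corresponding adjustments for $v_1$ and $v_n$ themselves; this rules out both missing and extra edges, pinning the graph down to be precisely $G_{n,r,s,p}$. Finally, for step (d), every vertex outputs \emph{accept} only if $f(n,r,s,p)=1$; since $f$ is a fixed parameter of the class, the verifier may depend on it, and local certification imposes no computational restriction on the decision rule, so even an arbitrary (possibly non-computable) $f$ causes no difficulty. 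Every certificate consists of $O(\log n)$ bits (the index, the tuple, and the spanning-tree data), giving the claimed bound. I expect the main point requiring care to be the correctness argument of step (a)---namely that the local spanning-tree and single-successor checks genuinely force a Hamiltonian indexing with distinct indices $1,\dots,n$---while the only other delicate issue, verifying the high-degree hubs $v_1$ and $v_n$, is harmless because a vertex reads its neighbours' indices for free and merely compares the resulting set against the set prescribed by $(n,r,s,p)$, so no certificate congestion arises.
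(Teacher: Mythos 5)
Your proposal is correct and follows essentially the same route as the paper: certify the Hamiltonian path rooted at $v_1$ via distances, broadcast the tuple $(n,r,s,p)$, anchor $r,s,p$ and $n$ by neighbourhood checks at $v_1$ and $v_n$, and have every node test $f(n,r,s,p)=1$. Your step (c), where \emph{every} vertex compares its full neighbour-index set against the one prescribed by $(n,r,s,p)$, just makes explicit a soundness check (no spurious chords between internal vertices) that the paper's terser proof leaves implicit.
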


\begin{proof}
On a \emph{yes}-instance, the prover certifies the spanning paths with root $v_1$, with last node $v_n$, and writes in each certificate the values $n,r,s$ and $p$.
The nodes check the structure of the path and the fact that $n,r,s$ and $p$ are the same on all nodes. Second $v_1$ checks the structure of its neighborhood, and in particular the values $r$ and $p$. Similarly, $v_n$ checks  the structure of its neighborhood, and in particular the values $s$, and the fact that its distance to the root is indeed $n$. Finally, all nodes check that $f(n,r,s,p)=1$.
The correctness of the scheme is straightforward.
\end{proof}

This directly yields the following lemma.
\begin{lemma}\label{lem:3-connected-K24}
3-connected $K_{2,4}$-minor-free graphs can be certified with $O(\log n)$-bit labels.
\end{lemma}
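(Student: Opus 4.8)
The plan is to derive this lemma directly by combining the structural characterization of Theorem~\ref{thm:K24-3-connected} with the certification of Lemma~\ref{lem:Gf} and a generic union-of-classes argument. By Theorem~\ref{thm:K24-3-connected}, the class of 3-connected $K_{2,4}$-minor-free graphs is exactly $\mathcal{G}[f]$ for some function $f$, together with nine fixed exceptional graphs $G_1,\dots,G_9$, each on at most $8$ vertices. Since I will show that each of these two parts is certifiable with $O(\log n)$ bits, the whole class will be certifiable with $O(\log n)$ bits as well.

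First I would certify $\mathcal{G}[f]$ with $O(\log n)$-bit labels: this is precisely the content of Lemma~\ref{lem:Gf}, and the specific $f$ produced by Theorem~\ref{thm:K24-3-connected} is irrelevant since Lemma~\ref{lem:Gf} holds for \emph{every} $f$. Second, each exceptional graph $G_i$ has bounded size, so the singleton class $\{G_i\}$ can be certified with $O(1)$ bits: following the idea of Lemma~\ref{lem:universal}, the prover hands every node the full adjacency structure of $G_i$ together with the node's identity inside $G_i$, and every node checks that this map agrees with the maps of its neighbors and with its own neighborhood. Because $|V(G_i)|\le 8$, this map has constant size.

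Finally I would merge these ten certifications (the one for $\mathcal{G}[f]$ and the nine singletons) by a standard union argument. On a \emph{yes}-instance, the prover writes on every node a constant-size flag naming which of the ten subclasses the graph belongs to, together with the corresponding certificate; every node checks that all its neighbors carry the same flag and then runs the verifier associated with that flag. On a \emph{no}-instance, either the flags disagree and, since the graph is connected, two adjacent nodes see different flags and reject; or all nodes agree on one flag but the graph lies in none of the ten subclasses, in which case the verifier attached to that flag rejects at some node. The flag costs $O(1)$ bits and each subclass is certified with $O(\log n)$ bits, so the total certificate size is $O(\log n)$.

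I expect essentially no obstacle here: the substantial work is already carried out in the (cited) structural Theorem~\ref{thm:K24-3-connected} and in Lemma~\ref{lem:Gf}. The only point requiring care is the soundness of the union argument, namely that an adversarial prover cannot cheat by assigning inconsistent flags; this is ruled out by the connectivity of the graph, which forces any flag discrepancy to be witnessed across some edge.
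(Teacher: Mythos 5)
Your proposal is correct and follows essentially the same route as the paper: invoke Theorem~\ref{thm:K24-3-connected} to split the class into $\mathcal{G}[f]$ (handled by Lemma~\ref{lem:Gf}) and the nine constant-size exceptional graphs (handled with $O(1)$-bit certificates). The paper's proof is just a terser version of yours; your explicit flag-based union argument is the standard way to make the case distinction sound, and it is implicit in the paper.
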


\begin{proof}
Consider a \emph{yes}-instance. By Theorem~\ref{thm:K24-3-connected}, either it is one of the nine small graphs of Theorem~\ref{thm:K24-3-connected}, and then we can use a constant size certification, or it is a graph of $\mathcal{G}[f]$ for the specific $f$ of Theorem~\ref{thm:K24-3-connected}, and then we can use Lemma~\ref{lem:Gf}.
\end{proof}

For the 2-connected case, one of the types of graphs that we want to certify is of the following form: a 3-connected graph, where a set of edges with a special property is expanded with another graph class. 
To be able to certify this, we will need the nodes to check that the set of edges that has been expanded has the special property. 
To capture the notion of special property, without going into the intricate details of what this property is exactly, let us define an \emph{edge-set decider}. 
A function $h$ is an edge-set decider if it takes as input a 3-connected $K_{2,4}$-minor-free graph whose edges are either unlabeled, or labeled with a special label, and outputs a Boolean.

\begin{lemma}\label{lem:h-label}
Let $h$ be an edge-set decider, such that for every graph $G$, there is at most $O(n)$ different sets of edges $S$ such that $h(G,S)=1$.
The set of 3-connected $K_{2,4}$-free graphs $G$ with labelled edges, where $h(G)$ is true, can be certified with $O(\log n)$ bits. 
\end{lemma}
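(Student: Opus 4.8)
The plan is to leverage the rigid canonical structure of 3-connected $K_{2,4}$-minor-free graphs provided by Theorem~\ref{thm:K24-3-connected}, together with the fact that $h$ accepts only $O(n)$ edge-sets, to encode the accepted labeling by a single index of $O(\log n)$ bits. First I would certify that the underlying graph is indeed 3-connected and $K_{2,4}$-minor-free using Lemma~\ref{lem:3-connected-K24}, which costs $O(\log n)$ bits. If the graph is one of the nine small graphs of Theorem~\ref{thm:K24-3-connected}, then $n=O(1)$ and the whole instance (including the edge labels) can be certified in constant size by brute force. Otherwise the graph is some $G_{n,r,s,p}\in\mathcal{G}[f]$, and the underlying certification additionally equips every node with the parameters $n,r,s,p$ (checked to be globally consistent) and with its own index $i$ along the Hamiltonian path $v_1,\dots,v_n$. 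The crucial consequence is that $(n,r,s,p)$ determines $G$ up to isomorphism and the path index fixes a canonical name $v_i$ for every vertex; hence every node knows the entire graph $G$ abstractly, and every edge has a canonical description in terms of the path-positions of its two endpoints.

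The key idea is then a canonical enumeration. Fix once and for all an enumeration $S_1,S_2,\dots$ of all edge-sets $S$ with $h(G,S)=1$, ordered lexicographically by the canonical descriptions of their edges; by hypothesis this list has length $O(n)$. On a \emph{yes}-instance, let $S^\star$ be the set of specially-labeled edges; since $h(G)=h(G,S^\star)$ is true, we have $S^\star=S_k$ for some index $k\in[1,O(n)]$. The prover writes this single index $k$ into the certificate of every node, which costs only $O(\log n)$ bits, so the total certificate size remains $O(\log n)$.

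For the verification, every node first runs the checks of Lemma~\ref{lem:3-connected-K24}, guaranteeing the graph and the agreed parameters $(n,r,s,p)$, and then checks that it received the same index $k$ as its neighbors. Using its (unbounded) local computation, it reconstructs $G$ from the parameters, enumerates the accepting sets in the fixed canonical order, recovers $S_k$, and finally checks that for each incident edge $e$, the edge $e$ carries the special label if and only if $e\in S_k$. Because all nodes share the same $(n,r,s,p)$ and the same deterministic enumeration, they all recover the \emph{same} set $S_k$; so if every node accepts, the globally labeled set is exactly $S_k$, whence $h(G,S_k)=1$. Conversely, on a \emph{yes}-instance the prover's assignment passes all checks.

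I expect the only real subtlety to be the encoding argument itself. A naive description of the distinguished edge-set $S^\star$ could cost $\Omega(n)$ bits, and what saves us is precisely the $O(n)$ bound on the number of accepting sets combined with the canonical structure of $G_{n,r,s,p}$, which lets each node recompute $S_k$ from the index $k$ alone. The point requiring care is that the canonical enumeration must depend only on data that every node provably agrees upon, namely $(n,r,s,p)$ and the path labeling, so that the independent local reconstructions of $S_k$ are guaranteed to coincide.
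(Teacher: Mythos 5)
Your proposal is correct and follows essentially the same route as the paper: certify the underlying 3-connected $K_{2,4}$-minor-free structure via Lemma~\ref{lem:3-connected-K24} (so each node knows $(n,r,s,p)$ and its position), fix a canonical indexing of the $O(n)$ accepting edge-sets, give every node the $O(\log n)$-bit index of the labeled set, and have each node check its incident edges against the reconstructed set. Your added remarks on the nine small graphs and on the enumeration depending only on agreed-upon data are just explicit versions of points the paper leaves implicit.
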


\begin{proof}
First, for every graph $G$, we fix an indexing of edge sets $S$ such that $h(G,S)=1$.
The prover first uses the same certificates as in Lemma~\ref{lem:3-connected-K24} for the certification of unlabeled 3-connected $K_{2,4}$-minor-free graphs. 
Then it gives to all nodes the index of the set of labeled edges.
Following the certification of the proof of Theorem~\ref{thm:K24-3-connected}, every node knows in which graph it lives and what is its position in that graph. Then, every node just checks that the labeled edges in its neighborhood correspond to the index announced by the prover. The labels have size $O(\log n)$ because of Lemma~\ref{lem:3-connected-K24} and because there are at most $O(n)$ different sets of edges $S$ such that $h(G,S)=1$.
\end{proof}

\subsection{2-connected case}

We now state the characterization theorem of the 2-connected case. 

\begin{theorem}[Theorem~3.5 in \cite{EllinghamMOT16}]
There exists a function $h$ as in Lemma~\ref{lem:h-label} such that the following holds.
A graph $G$ is 2-connected $K_{2,4}$-minor-free graph if and only if one of the following holds:
\begin{enumerate}
    \item $G$ is outerplanar.
    \item $G$ is the union of three path-outerplanar graphs $H_1$ , $H_2$, $H_3$ with the same path endpoints $x$ and $y$, and possibly the edge $(x,y)$, where $|V (H_i)| \geq 3$, for each $i$ and $V (H_i) \cap V(H_j) = {x, y}$ for $i \neq j$.
    \item  $G$ is obtained from a 3-connected $K_{2,4}$-minor-free graph $G_0$ by choosing a subset $S$ such that $h(G_0,S)=1$, and replacing each edges $(x_i,y_i)$ of $S$ by a path-outerplanar graphs $H_i$ with endpoints $(x_i,y_i)$ , where $V(H_i)\cap V (G_0) = {x_i , y_i }$ for each $i$, and $V(H_i)\cap V(H_j) \subset V(G_0)$ for $i \neq j$. 
\end{enumerate}
\end{theorem}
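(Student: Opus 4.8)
The plan is to prove the two directions separately, bearing in mind that this is a restatement of the structural characterization of Ellingham, Marshall, Ozeki and Tsuchiya \cite{EllinghamMOT16}, so the most economical route in our setting is to invoke their theorem directly; for completeness I sketch how one would derive it, building on the 3-connected case already established in Theorem~\ref{thm:K24-3-connected}.

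For the backward direction I would check that each of the three families is 2-connected and $K_{2,4}$-minor-free, which is essentially a finite verification. Outerplanar graphs are $(K_4,K_{2,3})$-minor-free (as recalled in Corollary~\ref{coro:diamond-etc}) and $K_{2,3}$ is a subgraph of $K_{2,4}$, so case~1 is immediate. For case~2 I would use that a $K_{2,4}$ minor requires two branch sets joined by four internally disjoint connections: three path-outerplanar graphs sharing only the endpoints $x,y$ provide at most three such external connections between the two sides, and the nested structure inside each $H_i$ rules out a local $K_{2,4}$, so no $K_{2,4}$ minor can be assembled. Case~3 is the delicate verification, and this is exactly why the edge-set condition $h(G_0,S)=1$ is imposed: it selects precisely the edges whose replacement by a path-outerplanar graph does not create a fourth parallel connection across any 2-separation of $G_0$, which is the content captured by Lemma~\ref{lem:h-label}.

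For the forward direction I would start from the classical Tutte decomposition of a 2-connected graph into its 3-connected components, cycles and bonds along its 2-separations. The key quantitative fact is that $K_{2,4}$-minor-freeness bounds the number of parallel bridges: across any 2-cut $\{x,y\}$ at most three bridges can contain an internal vertex, since four such bridges already yield four internally disjoint paths between $x$ and $y$, each with an interior, and hence a $K_{2,4}$ minor after contracting the interiors. If $G$ is already 3-connected, Theorem~\ref{thm:K24-3-connected} places it in $\mathcal{G}[f]$ (case~3 with $S=\emptyset$) or among the nine exceptional small graphs. Otherwise I would analyse the 2-separations: if the whole graph is assembled from parallel pieces along a single pair $\{x,y\}$ one lands in cases~1 or~2, whereas if a genuine 3-connected core $G_0$ survives contraction of the attached parts, then 2-connectivity together with $K_{2,4}$-freeness forces each attached part to be path-outerplanar and hung on a single edge of $G_0$, giving case~3.

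The hard part will be the precise determination of the function $h$, namely deciding exactly which edge subsets $S$ of a 3-connected $K_{2,4}$-minor-free core may be simultaneously expanded into path-outerplanar graphs without ever producing a $K_{2,4}$ minor, while also verifying that expansions on distinct edges cannot interact to build one. This demands a careful global analysis of how the 2-separations of the expanded graph project onto separations of $G_0$, and it is the technical heart of \cite{EllinghamMOT16}; fortunately, for the certification application we only need the consequence that the number of admissible sets $S$ is $O(n)$, which is precisely the hypothesis making Lemma~\ref{lem:h-label} applicable.
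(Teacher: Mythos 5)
Your proposal takes the same route as the paper: this statement is imported verbatim as Theorem~3.5 of \cite{EllinghamMOT16} and the paper offers no proof of its own, so invoking that reference directly is exactly what is intended. Your additional sketch of the two directions is a reasonable outline of the external argument (and you correctly flag that the determination of $h$ is the technical heart of \cite{EllinghamMOT16}, with only the $O(n)$ bound on admissible sets $S$ being needed for Lemma~\ref{lem:h-label}), but it is not required here.
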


In \cite{EllinghamMOT16}, $h$ is called the set of subdividable edges, and is fully characterized. Our proof works for any $h$, as long as it satisfies the properties of Lemma~\ref{lem:3-connected-K24}, and it is the case for the $h$ of \cite{EllinghamMOT16}.

\begin{lemma}\label{lem:2-connected-K24}
2-connected $K_{2,4}$-minor-free graphs can be certified with $O(\log n)$-bit labels.
\end{lemma}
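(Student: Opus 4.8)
The plan is to certify each of the three cases from the 2-connected characterization theorem separately, guided by the by-now-standard strategy: have the prover announce which case applies, certify a spanning tree to guarantee this announcement is globally consistent, and then invoke the appropriate building block for each case. First I would let the prover write, in every certificate, a constant-size tag in $\{1,2,3\}$ indicating which of the three cases the graph falls into, together with a spanning tree certification (Lemma~\ref{lem:ST-with-no-root-label}) so that all nodes agree on a single tag; this costs only $O(\log n)$ bits. The verification then branches on the tag, and each branch reduces to a certification we already have.

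For case~1, $G$ outerplanar, I would directly reuse the $O(\log n)$ certification of outerplanar graphs established in Corollary~\ref{coro:diamond-etc}. For case~3, where $G$ is obtained from a 3-connected $K_{2,4}$-minor-free graph $G_0$ by replacing a subset $S$ of edges (with $h(G_0,S)=1$) by path-outerplanar graphs, the machinery is already assembled: Lemma~\ref{lem:h-label} certifies that the skeleton $G_0$ is 3-connected $K_{2,4}$-minor-free with a valid labeled edge-set $S$, and Proposition~\ref{prop:edge-expansion} certifies the edge-expansion of that skeleton by path-outerplanar graphs (Lemma~\ref{lem:path-outerplanar}). Composing these gives $O(\log n)$-bit certificates on the edges; by Theorem~\ref{thm:degeneracy} and the fact that $K_{2,4}$-minor-free graphs have bounded degeneracy, these transfer to $O(\log n)$-bit node certificates. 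I would also need the verifier to confirm that the expanded edges are exactly those of $S$ and that the interiors of distinct $H_i$ meet only inside $V(G_0)$, which is a local consistency check using the edge labels supplied by the two composed schemes.

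Case~2 is the one I expect to demand the most care, since it is not packaged as a black-box corollary: $G$ is the union of three path-outerplanar graphs $H_1,H_2,H_3$ sharing exactly the two endpoints $x$ and $y$, plus possibly the edge $(x,y)$. The plan is to have the prover point out $x$ and $y$ (via two spanning trees, using Corollary~\ref{coro:pointed} so that $x,y$ themselves need no case-2 certificate), give every node an index in $\{1,2,3\}$ recording which $H_i$ it belongs to, and certify each $H_i$ as path-outerplanar by Lemma~\ref{lem:path-outerplanar} applied independently to the three pieces. The endpoints $x$ and $y$ must check that their incident edges split correctly among the three groups and that each group's path-outerplanar certificate is consistent; each internal node checks its neighbors lie in the same $H_i$ or are $x$ or $y$. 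The delicate points are avoiding congestion at $x$ and $y$, where three certificates would naively accumulate, and ensuring the verifier rejects any graph where the three pieces overlap improperly (i.e.\ $V(H_i)\cap V(H_j)\neq\{x,y\}$); the first is handled by the pointed-class trick of Lemma~\ref{lem:pointed} and Corollary~\ref{coro:pointed}, and the second is enforced by the membership-index check at every node.

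Putting the three branches together, every node receives $O(\log n)$ bits total (a constant tag, one of the three case-specific certifications, and a few $O(\log n)$-bit spanning tree certificates), and the verifier accepts precisely the 2-connected $K_{2,4}$-minor-free graphs by the equivalence of the characterization theorem. The main obstacle throughout is purely the bookkeeping of case~2's gluing condition and the congestion at the shared vertices $x,y$; once the pointed-class and composition tools are in place, each individual case is a routine application of an already-established lemma.
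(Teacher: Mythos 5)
Your proposal is correct and follows essentially the same route as the paper: a case analysis over the three branches of the 2-connected characterization, with case 1 dispatched by Corollary~\ref{coro:diamond-etc} and case 3 by combining Lemma~\ref{lem:h-label}, Proposition~\ref{prop:edge-expansion}, Lemma~\ref{lem:path-outerplanar} and Theorem~\ref{thm:degeneracy}, exactly as in the paper. The only divergence is case 2: you build the gluing at $x$ and $y$ by hand (membership indices in $\{1,2,3\}$, three independent path-outerplanar certifications, and the pointed-class trick to avoid congestion at the two shared vertices), whereas the paper simply observes that this case is an edge-expansion of the multigraph with three parallel edges between two nodes and notes that the proof of Proposition~\ref{prop:edge-expansion} goes through for multigraphs—both work, the paper's version just reuses the composition machinery instead of re-deriving it. One small detail you omit is certifying the condition $|V(H_i)|\ge 3$ from the characterization theorem, which the paper handles with a spanning tree counting the nodes of each piece; it costs nothing to add and keeps the verifier's accepted set exactly aligned with the statement of the characterization.
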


\begin{proof}
We show that each of the three cases can be certified with $O(\log n)$ bits.
\begin{enumerate}
\item Outerplanar graphs can be certified with $O(\log n)$ bits (Corollary~\ref{coro:diamond-etc}).
\item This case basically consists in an edge expansion of a multigraph with three edges between two nodes by path outerplanar graphs. Note that the proof of Proposition~\ref{prop:edge-expansion} works here even if the original graph is a multigraph. Proposition~\ref{prop:edge-expansion} gives us an $O(\log n)$ edge certification because path-outerplanar graphs can be certified with $O(\log n)$ certificates, and the condition on the number of nodes can also be certified with $O(\log n)$ bits with a spanning tree counting the number of nodes (see \emph{e.g.} in \cite{Feuilloley19}). 
This edge certification can be transferred to a node certification with the same certificate size asymptotically because of Theorem~\ref{thm:degeneracy}, and because $H$-minor-free graphs have bounded degeneracy.
\item Again, this item basically corresponds to an edge-expansion: the edge expansion of a 3-connected $K_{2,4}$-minor-free graph by path-outerplanar graphs. We know by Lemma~\ref{lem:3-connected-K24} and Lemma~\ref{lem:path-outerplanar} that both these classes can be certified on $O(\log n)$ bits, so the vanilla edge-expansion can also be certified with $O(\log n)$ bits (using the degeneracy like in the previous item). 
The only issue left is the fact that the only edges of $G_0$ that are allowed to be expanded by something different from an edge need to belong to an $S$ such that $h(G_0,S)=1$. But this is easy with Lemma~\ref{lem:h-label}: the edges that have a path-outerplanar expansion are the one that are considered to have a special label.\qedhere
\end{enumerate}
\end{proof}

\section{Certifying $H$-free graphs with $|H| \le 4$}
\label{sec:4vertices}

In previous sections, we have proven that certifying $H$-minor free graphs can be done with $O(\log n)$ bits for some graphs $H$. 
The graphs we have treated in previous sections are somehow amongst the hardest graphs of small size. When the connectivity of the graph $H$ increases, the class of $H$-minor free graph contains more and more graphs, and then is (morally speaking) harder to certify.
Let us prove that the other graphs on $4$ vertices (which have fewer edges, and then are less connected) can also be certified, with arguments either simpler than or similar to what has been done in previous sections, to establish the following theorem.

\ThmFourVertices*

We consider two cases depending on whether $H$ contains a cycle. 

\begin{lemma}
If $|H| \le 4$, and $H$ contains a cycle, then $H$-free graphs can be certified with $O(\log n)$ bits.
\end{lemma}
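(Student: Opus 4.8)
The plan is to enumerate the graphs $H$ on at most $4$ vertices that contain a cycle, and reduce each to a certification we have already established. The relevant cycle-containing graphs on at most $4$ vertices are: the triangle $K_3=C_3$; the four-vertex cycle $C_4$; the diamond (two triangles sharing an edge); the complete graph $K_4$; the "paw" ($C_3$ plus a pendant vertex); and $C_3$ or $C_4$ together with isolated vertices or pendant edges. The key observation is that $H$-minor-freeness is monotone under taking minors of $H$: if $H'$ is a minor of $H$, then $H'$-minor-free implies $H$-minor-free, but more usefully, several of these $H$ reduce directly to cases already treated. In particular $K_3$, $C_4$, the diamond, $K_4$ and $K_{2,3}$ (which is relevant only for the $5$-vertex discussion, so we ignore it here) are exactly the classes certified in Corollary~\ref{coro:diamond-etc}.

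The main step is therefore to handle the remaining cycle-containing graphs on $4$ vertices that are not literally in Corollary~\ref{coro:diamond-etc}, namely the paw ($C_3$ with a pendant edge) and graphs consisting of a small cycle plus disconnected or pendant pieces. For the paw, I would observe that a connected graph is paw-minor-free if and only if it is either a forest (hence $C_3$-minor-free) or it contains a triangle-minor but then has very restricted structure; more cleanly, since the paw contains $C_3$ as a minor, any $C_3$-minor-free (i.e.\ acyclic) graph is automatically paw-minor-free, so by Theorem~\ref{thm:2-connected-to-general} it suffices to certify the $2$-connected paw-minor-free graphs, which are highly structured (the paw has a cut-vertex, so in a $2$-connected graph a paw minor is easy to force, leaving only cycles and very small graphs as the $2$-connected paw-minor-free graphs). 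These can be certified with $O(\log n)$ bits just as in the diamond and house cases of Corollary~\ref{coro:diamond-etc}. For graphs $H$ that are a cycle plus extra components or pendants, I would use that the $2$-connected reduction of Theorem~\ref{thm:2-connected-to-general} applies whenever $H$ is $2$-connected, and otherwise reduce to a smaller $2$-connected core using the same block-cut-tree machinery.

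So, concretely, I would first note that every cycle-containing $H$ on at most $4$ vertices either \emph{is} one of the graphs $C_3,C_4,\text{diamond},K_4$ handled in Corollary~\ref{coro:diamond-etc}, or has one of these as a spanning subgraph/minor in a way that lets us reduce. For each such $H$ I invoke Theorem~\ref{thm:2-connected-to-general}: since certifying the $2$-connected members of the class suffices, I classify the $2$-connected $H$-minor-free graphs (a finite case analysis using that $H$ is small and $2$-connectivity forces the presence of the minor unless the graph is tiny or a cycle or a complete-bipartite-like graph), and certify each resulting family using the building blocks already available: cycles and paths ($O(1)$ or $O(\log n)$ from Subsection~\ref{subsec:building-blocks}), path-outerplanar graphs (Lemma~\ref{lem:path-outerplanar}), series-parallel graphs (Theorem~\ref{thm:2-connected-series-parallel}), and the $K_{2,p}$-type graphs as in Lemma~\ref{lem:C5minor}.

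The main obstacle I anticipate is the structural classification of the $2$-connected $H$-minor-free graphs for the "awkward" $H$ such as the paw: unlike $K_4$ or the diamond, the paw is not $2$-connected, so one must argue carefully that its presence as a minor is forced in any sufficiently rich $2$-connected graph, and pin down exactly which small $2$-connected graphs survive. Once this finite characterization is in hand, the certification itself is routine given the toolbox, so the structural lemma is where the real work lies, but it closely mirrors the arguments already carried out for the house and $K_{2,3}$ cases in Corollary~\ref{coro:diamond-etc}.
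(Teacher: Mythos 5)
There is a genuine gap in your treatment of the paw (and of the triangle plus an isolated vertex). You propose to invoke Theorem~\ref{thm:2-connected-to-general} to reduce to the $2$-connected paw-minor-free graphs, but that theorem is stated only for $2$-connected $H$, and the paw is not $2$-connected --- as you yourself note. This is not a technicality: the theorem's underlying claim, that $G$ is $H$-minor-free if and only if each $2$-connected block of $G$ is, is simply \emph{false} when $H$ has a cut vertex. A triangle with a pendant edge is a counterexample: each of its two blocks (a triangle and a single edge) is paw-minor-free, yet the whole graph \emph{is} the paw. A model of a non-$2$-connected minor can straddle several blocks of the block-cut tree, so classifying the $2$-connected paw-minor-free graphs, however carefully, does not certify the class. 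Your fallback phrase ``reduce to a smaller $2$-connected core using the same block-cut-tree machinery'' does not repair this, because no such reduction is available for these $H$.

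The paper closes exactly this case with a different, direct structural argument that your proposal is missing. For $H$ equal to a triangle plus one vertex attached to at most one vertex of the triangle (which covers both the paw and $C_3+K_1$), one observes that in a connected $H$-minor-free graph $G$ containing a cycle, a shortest cycle $C$ must contain \emph{every} vertex of $G$: otherwise connectivity gives a vertex $v$ attached to $C$, and contracting $C$ to a triangle together with $v$ yields $H$ as a minor. Hence $G$ is either a tree or a single (induced) cycle, both certifiable with $O(\log n)$ bits by the building blocks of Subsection~\ref{subsec:building-blocks}. Your handling of $C_4$, the diamond, $K_4$ and $K_3$ via Corollary~\ref{coro:diamond-etc} matches the paper and is fine; it is only the non-$2$-connected $H$ for which you need to replace the block-cut-tree reduction by this direct argument.
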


\begin{proof}
Since $H$ contains a cycle, either it is $C_4$, and the result follows from Corollary~\ref{coro:diamond-etc}, or it contains a triangle. 
Let us distinguish the cases depending on how the fourth vertex is connected to the triangle. If it is connected to two or three vertices, then $H$ is either $K_4$, or a diamond, and then $H$-minor-free graphs can be certified with $O(\log n)$ bits by Corollary~\ref{coro:diamond-etc}.

So we can assume that $H$ is a triangle plus one vertex attached to at most one vertex of the triangle. If $G$ contains a cycle, let $C$ be a shortest cycle in $G$, that is a cycle that contains the minimum number of vertices. Then $C$ must contain all the vertices of the graph. Indeed, otherwise, since $G$ is connected, there exists a node $v$ attached to $C$, and $v \cup C$ contains $H$ as a minor. Therefore, $G$ is either a cycle or a tree, which can be both certified in $O(\log n)$ bits, see Subsection~\ref{subsec:building-blocks}.
\end{proof}

\begin{lemma}
If $|H|\le 4$ and $H$ is acyclic, then we can certify $H$-free graphs with $O(\log n)$ bits.
\end{lemma}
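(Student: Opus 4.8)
The plan is to use that forbidding a \emph{minor} on at most four vertices forces a very rigid global structure, reducing each case to one of the elementary classes already certified in Subsection~\ref{subsec:building-blocks}. The connected acyclic graphs on at most four vertices are exactly the paths $P_1,P_2,P_3,P_4$ and the claw $K_{1,3}$; I would focus on these, the remaining (disconnected) acyclic $H$ yielding the same kind of elementary classes. Since we only consider connected host graphs, the cases $H=P_k$ with $k\le 3$ are immediate: a connected graph on at least $k$ vertices already contains $P_k$ as a subgraph, so $P_k$-minor-free connected graphs have fewer than $k$ vertices and can be certified with $O(1)$ bits by brute force (Lemma~\ref{lem:universal}).

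The two genuine cases are $K_{1,3}$ and $P_4$, and for both I would first pin down the exact structure of the $H$-minor-free graphs. For $H=K_{1,3}$: a connected graph has the claw as a minor if and only if it has a vertex of degree at least three (such a vertex yields a claw even as a subgraph, and conversely any claw model produces a branch vertex of degree at least three). Hence claw-minor-free connected graphs are precisely those of maximum degree at most two, that is, paths and cycles. For $H=P_4$: as $P_4$ is a path, it is a minor of a connected graph exactly when that graph contains a path on four vertices as a subgraph; forbidding this leaves only the connected graphs whose longest path has at most three vertices, which are the stars $K_{1,m}$ and the triangle $K_3$ (no vertex or edge can be added to a triangle without creating a $P_4$).

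Given these characterizations the certification is routine and stays within $O(\log n)$ bits. The prover first announces which elementary shape $G$ has and then certifies it: a cycle needs no certificate (each node checks that its degree is exactly two); paths and trees are certified in $O(\log n)$ bits using Lemma~\ref{lem:acyclicity} and the spanning-tree discussion following it; a star is certified by marking its center, attaching a spanning tree pointing to it (Lemma~\ref{lem:ST-with-no-root-label}), and having each leaf check that it has degree one and is adjacent to the marked center; and $K_3$ is certified with $O(1)$ bits. Soundness comes from the fact that the union of the certifiable shapes is \emph{exactly} the $H$-minor-free class in each case, so on a \emph{no}-instance $G$ matches none of the shapes and every assignment is rejected by some node (a degree test, an acyclicity or spanning-tree test, or the center/endpoint check fails).

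I do not expect a real obstacle here: this is deliberately the easy end of Theorem~\ref{thm:4vertices}, as announced at the start of the section, and all the resulting classes have either bounded size or maximum degree at most two. The only points needing care are getting the two minor characterizations exactly right (in particular that a degree-three branch vertex is forced by every claw model, and that attaching anything to a triangle creates a $P_4$) and phrasing the scheme as a certified \emph{disjunction} of the elementary shapes, so that no-instances, which lie outside every shape, are always rejected.
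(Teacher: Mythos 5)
Your proposal is correct and follows essentially the same route as the paper: enumerate the small acyclic $H$, characterize the $H$-minor-free connected graphs exactly (claw-minor-free $=$ paths and cycles; $P_4$-minor-free $=$ stars, where you rightly also include $K_3$ --- an exception the paper's own proof actually overlooks), and certify each elementary shape with the standard $O(\log n)$ building blocks. The only point to firm up is your one-line dismissal of the disconnected cases: as in the paper, an isolated vertex in $H$ forces $|V(G)|\le 3$ (every connected graph on four vertices contains $P_3$ plus an isolated vertex as a minor), and the remaining case $H=2K_2$ forces $G$ to be a star or $K_3$, which your $P_4$ scheme already certifies.
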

\begin{proof}
If $H$ has an isolated node then any graph $G$ contain $H$ has a minor as long as $G$ contains a (non necessarily induced) path on three nodes and an isolated vertex. Since this property holds for every connected graph on $4$ vertices, the conclusion follows.

So we can assume that $H$ is connected. There are only two acyclic connected graphs on $4$ vertices: the star $S_{1,3}$ with $3$ leaves, and the path $P_4$ on four vertices.
If $G$ does not contain a star with $3$ leaves as a minor, it means that $G$ is either a path or a cycle which can be easily certified. 
If $G$ does not contain a path on four nodes as a minor, it means that $G$ is a star which, can be certified the following way. Give the identifier of the center to all nodes, and let the nodes check that they have been given the same ID, and that the non-center nodes have exactly one neighbor, and that this neighbor has this ID.
\end{proof}

This completes the picture for graphs $H$ on at most $4$ vertices.

\section{Graphs on at most $5$ vertices}
\label{sec:5vertices}

Let us now focus on graphs $H$ with at most $5$ vertices. We were not able to deal with all of them, the most problematic one being $K_5$, as we will discuss later on. However, we proved that $H$-minor freeness can be certified for some dense graphs like $K_{2,3}$. The goal of this section is to provide evidence that again, the hardest case will be the case where $H$ is dense. Before entering into the details of the proof, let us study some necessary conditions on the graph to be minimally not $H$-minor-free.

\subsection{$H$-minimal graphs}
A graph $G$ is \emph{$H$-minimal} if $G$ admits a $H$-minor but, for any vertex $v$, $G\setminus v$ does not admit any $H$-minor. 
Consider such a model $V_1,\ldots,V_{|H|}$ of $H$ in a $H$-minimal graph $G$. 
Intuitively, for all $i$, the important part of the subgraph induced by $V_i$ is a spanning tree that makes it connected, and connected to the neighboring $V_j$'s. 
For example, if a $V_i$ contains a node that is only connected to other nodes of $V_i$, and whose removal does not disconnect the subgraph of $V_i$, then this node is unessential. 
In other words, such a node would not appear in a $H$-minimal graph, because we could remove it, and still have a model of $H$. Nevertheless, it is not true that the subgraph induced by every $V_i$ is a tree (see Figure~\ref{fig:almost-tree}).

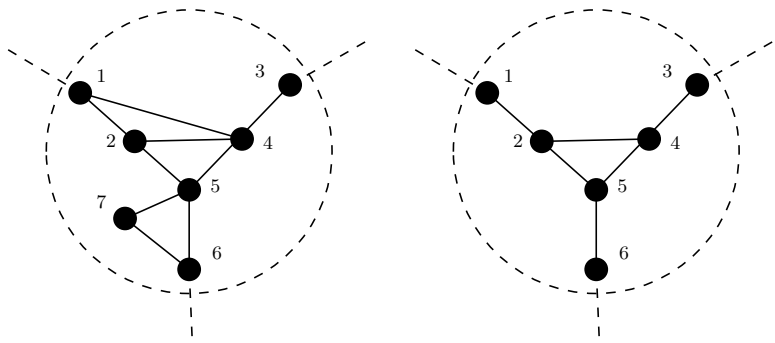
\begin{figure}[!h]
    \centering
    \begin{tabular}{cc}
    \scalebox{0.8}{
    \tikzset{every picture/.style={line width=0.75pt}} 

\begin{tikzpicture}[x=0.75pt,y=0.75pt,yscale=-1,xscale=1]

\draw  [dash pattern={on 4.5pt off 4.5pt}] (75.09,176.18) .. controls (75.09,126.94) and (115,87.03) .. (164.24,87.03) .. controls (213.47,87.03) and (253.39,126.94) .. (253.39,176.18) .. controls (253.39,225.41) and (213.47,265.33) .. (164.24,265.33) .. controls (115,265.33) and (75.09,225.41) .. (75.09,176.18) -- cycle ;
\draw  [fill={rgb, 255:red, 0; green, 0; blue, 0 }  ,fill opacity=1 ] (157.32,200.2) .. controls (157.31,196.38) and (160.39,193.27) .. (164.21,193.26) .. controls (168.03,193.25) and (171.14,196.33) .. (171.15,200.15) .. controls (171.16,203.97) and (168.08,207.08) .. (164.26,207.09) .. controls (160.44,207.1) and (157.34,204.02) .. (157.32,200.2) -- cycle ;
\draw  [fill={rgb, 255:red, 0; green, 0; blue, 0 }  ,fill opacity=1 ] (220.32,134.2) .. controls (220.31,130.38) and (223.39,127.27) .. (227.21,127.26) .. controls (231.03,127.25) and (234.14,130.33) .. (234.15,134.15) .. controls (234.16,137.97) and (231.08,141.08) .. (227.26,141.09) .. controls (223.44,141.1) and (220.34,138.02) .. (220.32,134.2) -- cycle ;
\draw  [fill={rgb, 255:red, 0; green, 0; blue, 0 }  ,fill opacity=1 ] (157.32,250.2) .. controls (157.31,246.38) and (160.39,243.27) .. (164.21,243.26) .. controls (168.03,243.25) and (171.14,246.33) .. (171.15,250.15) .. controls (171.16,253.97) and (168.08,257.08) .. (164.26,257.09) .. controls (160.44,257.1) and (157.34,254.02) .. (157.32,250.2) -- cycle ;
\draw  [fill={rgb, 255:red, 0; green, 0; blue, 0 }  ,fill opacity=1 ] (89.32,139.2) .. controls (89.31,135.38) and (92.39,132.27) .. (96.21,132.26) .. controls (100.03,132.25) and (103.14,135.33) .. (103.15,139.15) .. controls (103.16,142.97) and (100.08,146.08) .. (96.26,146.09) .. controls (92.44,146.1) and (89.34,143.02) .. (89.32,139.2) -- cycle ;
\draw  [fill={rgb, 255:red, 0; green, 0; blue, 0 }  ,fill opacity=1 ] (123.32,169.7) .. controls (123.31,165.88) and (126.39,162.77) .. (130.21,162.76) .. controls (134.03,162.75) and (137.14,165.83) .. (137.15,169.65) .. controls (137.16,173.47) and (134.08,176.58) .. (130.26,176.59) .. controls (126.44,176.6) and (123.34,173.52) .. (123.32,169.7) -- cycle ;
\draw  [fill={rgb, 255:red, 0; green, 0; blue, 0 }  ,fill opacity=1 ] (117.32,218.2) .. controls (117.31,214.38) and (120.39,211.27) .. (124.21,211.26) .. controls (128.03,211.25) and (131.14,214.33) .. (131.15,218.15) .. controls (131.16,221.97) and (128.08,225.08) .. (124.26,225.09) .. controls (120.44,225.1) and (117.34,222.02) .. (117.32,218.2) -- cycle ;
\draw  [fill={rgb, 255:red, 0; green, 0; blue, 0 }  ,fill opacity=1 ] (190.32,168.2) .. controls (190.31,164.38) and (193.39,161.27) .. (197.21,161.26) .. controls (201.03,161.25) and (204.14,164.33) .. (204.15,168.15) .. controls (204.16,171.97) and (201.08,175.08) .. (197.26,175.09) .. controls (193.44,175.1) and (190.34,172.02) .. (190.32,168.2) -- cycle ;
\draw    (96.24,139.18) -- (164.24,200.18) ;
\draw    (227.24,134.18) -- (164.24,200.18) ;
\draw    (164.24,200.18) -- (164.24,250.18) ;
\draw    (124.24,218.18) -- (164.24,200.18) ;
\draw    (124.24,218.18) -- (164.24,250.18) ;
\draw  [dash pattern={on 4.5pt off 4.5pt}]  (51.3,112.1) -- (96.24,139.18) ;
\draw  [dash pattern={on 4.5pt off 4.5pt}]  (227.24,134.18) -- (275.3,106.43) ;
\draw  [dash pattern={on 4.5pt off 4.5pt}]  (164.24,250.18) -- (166.3,298.43) ;
\draw    (130.24,169.68) -- (197.24,168.18) ;
\draw    (96.24,139.18) -- (197.24,168.18) ;

\draw (105,123.4) node [anchor=north west][inner sep=0.75pt]    {$1$};
\draw (111,164.4) node [anchor=north west][inner sep=0.75pt]    {$2$};
\draw (204,120.4) node [anchor=north west][inner sep=0.75pt]    {$3$};
\draw (209,165.4) node [anchor=north west][inner sep=0.75pt]    {$4$};
\draw (176,192.4) node [anchor=north west][inner sep=0.75pt]    {$5$};
\draw (177,234.4) node [anchor=north west][inner sep=0.75pt]    {$6$};
\draw (105,202.4) node [anchor=north west][inner sep=0.75pt]    {$7$};

\end{tikzpicture}}&
    \scalebox{0.8}{
    \tikzset{every picture/.style={line width=0.75pt}} 

\begin{tikzpicture}[x=0.75pt,y=0.75pt,yscale=-1,xscale=1]

\draw  [dash pattern={on 4.5pt off 4.5pt}] (75.09,176.18) .. controls (75.09,126.94) and (115,87.03) .. (164.24,87.03) .. controls (213.47,87.03) and (253.39,126.94) .. (253.39,176.18) .. controls (253.39,225.41) and (213.47,265.33) .. (164.24,265.33) .. controls (115,265.33) and (75.09,225.41) .. (75.09,176.18) -- cycle ;
\draw  [fill={rgb, 255:red, 0; green, 0; blue, 0 }  ,fill opacity=1 ] (157.32,200.2) .. controls (157.31,196.38) and (160.39,193.27) .. (164.21,193.26) .. controls (168.03,193.25) and (171.14,196.33) .. (171.15,200.15) .. controls (171.16,203.97) and (168.08,207.08) .. (164.26,207.09) .. controls (160.44,207.1) and (157.34,204.02) .. (157.32,200.2) -- cycle ;
\draw  [fill={rgb, 255:red, 0; green, 0; blue, 0 }  ,fill opacity=1 ] (220.32,134.2) .. controls (220.31,130.38) and (223.39,127.27) .. (227.21,127.26) .. controls (231.03,127.25) and (234.14,130.33) .. (234.15,134.15) .. controls (234.16,137.97) and (231.08,141.08) .. (227.26,141.09) .. controls (223.44,141.1) and (220.34,138.02) .. (220.32,134.2) -- cycle ;
\draw  [fill={rgb, 255:red, 0; green, 0; blue, 0 }  ,fill opacity=1 ] (157.32,250.2) .. controls (157.31,246.38) and (160.39,243.27) .. (164.21,243.26) .. controls (168.03,243.25) and (171.14,246.33) .. (171.15,250.15) .. controls (171.16,253.97) and (168.08,257.08) .. (164.26,257.09) .. controls (160.44,257.1) and (157.34,254.02) .. (157.32,250.2) -- cycle ;
\draw  [fill={rgb, 255:red, 0; green, 0; blue, 0 }  ,fill opacity=1 ] (89.32,139.2) .. controls (89.31,135.38) and (92.39,132.27) .. (96.21,132.26) .. controls (100.03,132.25) and (103.14,135.33) .. (103.15,139.15) .. controls (103.16,142.97) and (100.08,146.08) .. (96.26,146.09) .. controls (92.44,146.1) and (89.34,143.02) .. (89.32,139.2) -- cycle ;
\draw  [fill={rgb, 255:red, 0; green, 0; blue, 0 }  ,fill opacity=1 ] (123.32,169.7) .. controls (123.31,165.88) and (126.39,162.77) .. (130.21,162.76) .. controls (134.03,162.75) and (137.14,165.83) .. (137.15,169.65) .. controls (137.16,173.47) and (134.08,176.58) .. (130.26,176.59) .. controls (126.44,176.6) and (123.34,173.52) .. (123.32,169.7) -- cycle ;
\draw  [fill={rgb, 255:red, 0; green, 0; blue, 0 }  ,fill opacity=1 ] (190.32,168.2) .. controls (190.31,164.38) and (193.39,161.27) .. (197.21,161.26) .. controls (201.03,161.25) and (204.14,164.33) .. (204.15,168.15) .. controls (204.16,171.97) and (201.08,175.08) .. (197.26,175.09) .. controls (193.44,175.1) and (190.34,172.02) .. (190.32,168.2) -- cycle ;
\draw    (96.24,139.18) -- (164.24,200.18) ;
\draw    (227.24,134.18) -- (164.24,200.18) ;
\draw    (164.24,200.18) -- (164.24,250.18) ;
\draw  [dash pattern={on 4.5pt off 4.5pt}]  (51.3,112.1) -- (96.24,139.18) ;
\draw  [dash pattern={on 4.5pt off 4.5pt}]  (227.24,134.18) -- (275.3,106.43) ;
\draw  [dash pattern={on 4.5pt off 4.5pt}]  (164.24,250.18) -- (166.3,298.43) ;
\draw    (130.24,169.68) -- (197.24,168.18) ;

\draw (105,123.4) node [anchor=north west][inner sep=0.75pt]    {$1$};
\draw (111,164.4) node [anchor=north west][inner sep=0.75pt]    {$2$};
\draw (204,120.4) node [anchor=north west][inner sep=0.75pt]    {$3$};
\draw (209,165.4) node [anchor=north west][inner sep=0.75pt]    {$4$};
\draw (176,192.4) node [anchor=north west][inner sep=0.75pt]    {$5$};
\draw (177,234.4) node [anchor=north west][inner sep=0.75pt]    {$6$};

\end{tikzpicture}}
    \end{tabular}
    \caption{The two pictures represent some set $V_i$ in a $H$-model. The dashed edges represent connections to other nodes of the model. In the first picture, the graph cannot be $H$-minimal, indeed we can remove the nodes~7 and~2, and still have a proper model. In the second picture, no node can be removed without disconnecting the subgraph induced by $V_i$.}
    \label{fig:almost-tree}
\end{figure}

We now describe what the $V_i$'s subgraphs precisely look like in a $H$-minimal graph.
Let $T$ be a graph, and $S_1,\ldots,S_r$ be some prescribed subsets of vertices of $T$.
A \emph{Steiner tree} of~$T$ with respect to the $S_i$'s is a tree in $T$ containing at least one element of each $S_i$.
We say that $T$ is an \emph{almost tree} for the $S_i$'s if any Steiner tree with respect to the $S_i$'s contains all the vertices of $T$.  
Now, given a model $V_1,\ldots,V_{|H|}$ of $H$, and $v_i \in H$, the prescribed sets we are going to consider for $V_i$ are the subsets $S_j\subseteq V_i$ containing all the vertices connected to $V_j$, for every $j$ such that $v_iv_j$ is an edge of $H$. 
A Steiner tree of $V_i$ for the model $V_1,\ldots,V_{|H|}$ of $H$ is a Steiner tree containing at least one vertex of each prescribed set. When the model is clear from context, we simply say a Steiner tree of~$V_i$.

With these notions, let us describe some properties of $H$-minimal graphs:
\begin{lemma}
Let $H$ be a graph and let $G$ a $H$-minimal graph. For every $H$-model of $G$, each $V_i$ is an almost tree.
\end{lemma}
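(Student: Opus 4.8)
The plan is to argue by contradiction, directly exploiting the defining feature of $H$-minimality: deleting any single vertex destroys \emph{all} $H$-minors. So I would fix an arbitrary $H$-model $V_1,\ldots,V_{|H|}$ of $G$, assume that some part $V_i$ fails to be an almost tree, and from this manufacture a vertex whose deletion still leaves an $H$-minor, contradicting minimality.

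Concretely, suppose $G[V_i]$ is not an almost tree for its prescribed sets $S_j$ (the vertices of $V_i$ adjacent to $V_j$, ranging over the neighbours $v_j$ of $v_i$ in $H$). Then, by definition of almost tree, there is a Steiner subtree $T$ of $G[V_i]$ — a tree meeting every $S_j$ — that omits some vertex $x\in V_i$. I would then replace the branch set $V_i$ by $V(T)$, keep every other $V_k$ unchanged, and claim that the family $\{V_k\}_{k\neq i}\cup\{V(T)\}$ is a valid set of branch sets witnessing an $H$-minor inside $G\setminus x$. Note that all of these sets avoid $x$: the sets $V_k$ with $k\neq i$ are disjoint from $V_i$, and $V(T)\subseteq V_i\setminus\{x\}$.

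The verification splits into the three branch-set conditions. Disjointness is immediate since $V(T)\subseteq V_i$. Connectivity holds because $T$ is a tree and the parts $V_k$ with $k\neq i$ are untouched by the deletion. For the adjacency condition, an edge $v_av_b$ of $H$ with $a,b\neq i$ is witnessed by the same edge as before, which cannot be incident to $x$ since $x\in V_i$; and an edge $v_iv_b$ of $H$ is witnessed by a vertex $w\in S_b$ lying on $T$ — such a $w$ exists \emph{precisely} because $T$ is a Steiner tree — together with its neighbour in $V_b$, an edge again avoiding $x$. Hence every required edge survives the deletion of $x$, so $G\setminus x$ still admits $H$ as a minor, contradicting $H$-minimality; thus each $V_i$ must be an almost tree.

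The two delicate points I would flag explicitly are the following. First, I use the branch-set formulation of a minor rather than the partition formulation, so that it is harmless for $\{V_k\}_{k\neq i}\cup\{V(T)\}$ to cover only part of $V(G\setminus x)$; the leftover vertices of $V_i\setminus(V(T)\cup\{x\})$ are simply discarded, and one should not try to reattach them since $G[V_i\setminus\{x\}]$ need not be connected. Second, the claim that every witnessing edge avoids $x$ is exactly where the Steiner condition ``$T$ meets each $S_j$'' is used, and is the crux of the argument: a Steiner tree, by construction, retains one adjacency to each neighbouring part, which is what guarantees the model remains valid after deletion.
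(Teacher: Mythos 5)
Your proof is correct and follows essentially the same route as the paper's: if some $V_i$ is not an almost tree, replace it by a Steiner subtree omitting some vertex $x$, observe that the resulting branch sets still form an $H$-model avoiding $x$, and contradict $H$-minimality. You simply spell out the verification (disjointness, connectivity, adjacency via the Steiner condition) that the paper's one-line proof leaves implicit.
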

\begin{proof}
The proof is straightforward. If some $V_i$ is not an almost tree, then we can select a subset $V_i'$ of $V_i$ which is an almost tree. When we consider the subsets where all the $V_j$'s are the same but $V_i$ which is replaced $V_i'$, we still have a model of $H$, and it does not contain all the vertices, a contradiction with the fact that $G$ is $H$-minimal.
\end{proof}

It follows that we can characterize the form of the $V_i$'s such that $h_i$ has small degree in $H$.

\begin{corollary}\label{coro:Hminimal_prop}
Let $H$ be a graph, and $G$ be a $H$-minimal graph. There exists a $H$-model of $G$ such that: 
\begin{enumerate}
    \item If the degree of $h_i$ in $H$ is one, then $V_i$ is reduced to a single vertex.
    \item If the degree of $h_i$ in $H$ is two, then $V_i$ is reduced to a single path $P$ and, if $h_j,h_k$ are the two neighbors of $h_i$ then exactly one endpoint of $P$ is connected to  $V_j$, the other is connected to $V_k$, and all the other vertices of $P$ are neither connected to $V_j$ nor $V_k$.
    \item If the degree of $h_i$ in $H$ is three, then the subgraph induced by $V_i$ is of one of the three following types:
    \begin{itemize}
        \item Type A: the subgraph is a triangle with a path attached to each of the three corners (which might be reduced to a single vertex) where the other endpoint of the path is attached to a $V_j$, and no other vertex is attached to $V_j$.
        \item Type B: the subgraph is an induced subdivided star where only the last vertex of each branch is connected to a set $V_j$, and in that case it is connected to exactly one $V_j$.
        \item Type C. the subgraph is a path, and there exists $j,k$ such that the only connections with $V_j$ and $V_{k}$ are on the endpoints of the path. Any connection is possible for the vertices of the path with the last set $V_{\ell}$.
    \end{itemize}
\end{enumerate}
\end{corollary}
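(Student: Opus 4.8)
The plan is to start from an arbitrary $H$-model $V_1,\dots,V_{|H|}$ of $G$ and exploit the preceding lemma (each $V_i$ is an almost tree for its prescribed sets), where the prescribed set $S_j\subseteq V_i$ attached to a neighbour $h_j$ of $h_i$ in $H$ consists of the vertices of $V_i$ having a neighbour in $V_j$. Since $G$ has an $H$-minor, each such $S_j$ is non-empty. I would then establish the three items by a case analysis on $\deg_H(h_i)\in\{1,2,3\}$, using repeatedly the following \emph{forcing/avoidability dichotomy}: a vertex $u$ lies outside some Steiner tree precisely when every $S_j$ admits a representative distinct from $u$, so $u$ is forced into \emph{every} Steiner tree exactly when $S_j=\{u\}$ for some $j$. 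Because the almost-tree property requires every vertex to be forced, every leaf of $G[V_i]$ must be the sole element of one of the sets $S_j$.

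For $\deg_H(h_i)=1$ there is a single non-empty $S_j$, so a single vertex of $S_j$ already forms a Steiner tree; the almost-tree property forces it to span, hence $V_i$ is a single vertex. For $\deg_H(h_i)=2$ the minimal Steiner trees are exactly the paths from $S_j$ to $S_k$ whose only terminal vertices are their endpoints, and each must span $V_i$; thus $G[V_i]$ has a Hamiltonian path $v_1\dots v_m$. A chord $v_pv_q$ would let one reroute this path to skip an interior segment, producing a shorter, non-spanning Steiner tree, so $G[V_i]$ is exactly this path. Finally, a second vertex of $S_j$ at an interior position would give a Steiner tree starting there and avoiding $v_1$, a contradiction; hence $S_j=\{v_1\}$, $S_k=\{v_m\}$, and no interior vertex is joined to $V_j$ or $V_k$, which is item~2.

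The heart of the argument is $\deg_H(h_i)=3$, where I would classify connected almost trees with three terminal sets. A minimal Steiner tree $M$ spans $V_i$, is acyclic, and has all its leaves in $S_1\cup S_2\cup S_3$, so it has at most three leaves. If $G[V_i]$ is acyclic then $G[V_i]=M$, and the forcing observation shows its leaves are private singletons of distinct $S_j$'s, so there are at most three of them: two leaves give a path whose two pinned endpoints separate every interior vertex, making each interior vertex forced regardless of where the third set sits, which is Type~C; three leaves give a subdivided star whose three tips are exactly the three singletons, which is Type~B. If $G[V_i]$ has a cycle, I would show that the cycle must be a triangle and that the cyclomatic number is one: a cycle of length at least four, or a second independent cycle, necessarily contains a non-terminal vertex whose deletion keeps $V_i$ connected, and such a vertex is avoidable, contradicting the almost-tree property. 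The three corners of the triangle must each be supported by a terminal (otherwise a corner becomes avoidable), while any branching off a corner or off a pendant path would create a fourth leaf and hence a fourth forced singleton, which is impossible; thus each corner carries exactly one pendant path ending in a private singleton of one $S_j$, which is Type~A.

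The main obstacle I anticipate is precisely this cyclic subcase: turning the informal statement ``there is always an avoidable non-terminal vertex'' into a clean argument that simultaneously rules out long cycles, several independent cycles, and non-path attachments, all while tracking the fact that only three terminal sets are available to pin vertices. The acyclic subcases, together with the degree one and two cases, are comparatively routine once the forcing/avoidability dichotomy is set up.
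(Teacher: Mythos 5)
Your proposal follows essentially the same route as the paper: both rest on the preceding lemma that each $V_i$ is an almost tree for its prescribed sets, and both proceed by a case analysis on $\deg_H(h_i)$, with the degree-3 case split according to whether $G[V_i]$ contains a cycle and, in the acyclic case, how many leaves it has; your treatment of the cyclic subcase (counting forced singleton terminals against the three available sets to rule out long cycles, several independent cycles, and extra branchings) is in fact more detailed than the paper's one-line justification. One caveat: your ``forcing/avoidability dichotomy'' is false as a stated biconditional --- a cut vertex separating two terminal sets is forced into every Steiner tree without being a singleton terminal, and taken literally your dichotomy together with the almost-tree property would imply $|V_i|\le 3$ in every case, contradicting the rest of your own argument. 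The correct principle, which is the one you actually invoke for leaves and for cycle vertices, is that a vertex whose deletion keeps $G[V_i]$ connected and leaves every $S_j$ non-empty is avoidable; with the dichotomy restated in that restricted form, the argument goes through and matches the paper's.
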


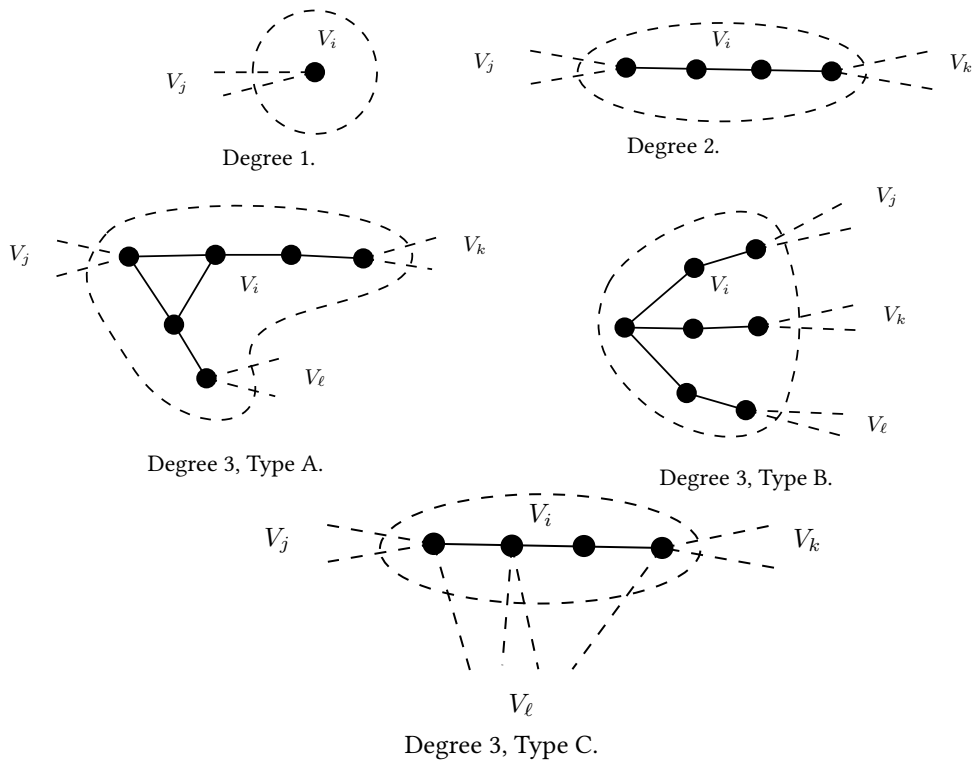
\begin{figure}[!h]
    \centering
    \begin{tabular}{cc}
        \begin{minipage}{0.3\textwidth}
        \centering
        \scalebox{0.9}{
        \tikzset{every picture/.style={line width=0.75pt}} 

\begin{tikzpicture}[x=0.75pt,y=0.75pt,yscale=-1,xscale=1]

\draw  [fill={rgb, 255:red, 0; green, 0; blue, 0 }  ,fill opacity=1 ] (88,62.15) .. controls (88,59.31) and (90.31,57) .. (93.15,57) .. controls (95.99,57) and (98.3,59.31) .. (98.3,62.15) .. controls (98.3,64.99) and (95.99,67.3) .. (93.15,67.3) .. controls (90.31,67.3) and (88,64.99) .. (88,62.15) -- cycle ;
\draw  [dash pattern={on 4.5pt off 4.5pt}]  (37.3,62.2) -- (93.15,62.15) ;
\draw  [dash pattern={on 4.5pt off 4.5pt}] (58.67,62.15) .. controls (58.67,43.11) and (74.11,27.67) .. (93.15,27.67) .. controls (112.19,27.67) and (127.62,43.11) .. (127.62,62.15) .. controls (127.62,81.19) and (112.19,96.62) .. (93.15,96.62) .. controls (74.11,96.62) and (58.67,81.19) .. (58.67,62.15) -- cycle ;
\draw  [dash pattern={on 4.5pt off 4.5pt}]  (93.15,62.15) -- (42.3,75.2) ;

\draw (93,37.4) node [anchor=north west][inner sep=0.75pt]    {$V_{i}$};
\draw (9,61.4) node [anchor=north west][inner sep=0.75pt]    {$V_{j}$};


\end{tikzpicture}}
        \\
        Degree 1.
        \end{minipage}
         & 
         \begin{minipage}{0.4\textwidth}
        \centering
        \scalebox{0.9}{
       \tikzset{every picture/.style={line width=0.75pt}} 

\begin{tikzpicture}[x=0.75pt,y=0.75pt,yscale=-1,xscale=1]

\draw  [fill={rgb, 255:red, 0; green, 0; blue, 0 }  ,fill opacity=1 ] (252,60.15) .. controls (252,57.31) and (254.31,55) .. (257.15,55) .. controls (259.99,55) and (262.3,57.31) .. (262.3,60.15) .. controls (262.3,62.99) and (259.99,65.3) .. (257.15,65.3) .. controls (254.31,65.3) and (252,62.99) .. (252,60.15) -- cycle ;
\draw  [fill={rgb, 255:red, 0; green, 0; blue, 0 }  ,fill opacity=1 ] (291,61.15) .. controls (291,58.31) and (293.31,56) .. (296.15,56) .. controls (298.99,56) and (301.3,58.31) .. (301.3,61.15) .. controls (301.3,63.99) and (298.99,66.3) .. (296.15,66.3) .. controls (293.31,66.3) and (291,63.99) .. (291,61.15) -- cycle ;
\draw  [fill={rgb, 255:red, 0; green, 0; blue, 0 }  ,fill opacity=1 ] (327,61.26) .. controls (327,58.46) and (329.31,56.2) .. (332.15,56.2) .. controls (334.99,56.2) and (337.3,58.46) .. (337.3,61.26) .. controls (337.3,64.05) and (334.99,66.32) .. (332.15,66.32) .. controls (329.31,66.32) and (327,64.05) .. (327,61.26) -- cycle ;

\draw  [fill={rgb, 255:red, 0; green, 0; blue, 0 }  ,fill opacity=1 ] (366,62.24) .. controls (366,59.45) and (368.31,57.18) .. (371.15,57.18) .. controls (373.99,57.18) and (376.3,59.45) .. (376.3,62.24) .. controls (376.3,65.04) and (373.99,67.3) .. (371.15,67.3) .. controls (368.31,67.3) and (366,65.04) .. (366,62.24) -- cycle ;
\draw    (257.15,60.15) -- (371.15,62.24) ;
\draw  [dash pattern={on 4.5pt off 4.5pt}] (230.3,62.7) .. controls (230.3,47.51) and (266.12,35.2) .. (310.3,35.2) .. controls (354.48,35.2) and (390.3,47.51) .. (390.3,62.7) .. controls (390.3,77.89) and (354.48,90.2) .. (310.3,90.2) .. controls (266.12,90.2) and (230.3,77.89) .. (230.3,62.7) -- cycle ;
\draw  [dash pattern={on 4.5pt off 4.5pt}]  (257.15,60.15) -- (201.3,50.2) ;
\draw  [dash pattern={on 4.5pt off 4.5pt}]  (257.15,60.15) -- (204.3,69.2) ;
\draw  [dash pattern={on 4.5pt off 4.5pt}]  (427,72.19) -- (371.15,62.24) ;
\draw  [dash pattern={on 4.5pt off 4.5pt}]  (424.3,51.2) -- (371.15,62.24) ;

\draw (303,38.4) node [anchor=north west][inner sep=0.75pt]    {$V_{i}$};
\draw (171,50.4) node [anchor=north west][inner sep=0.75pt]    {$V_{j}$};
\draw (435,51.4) node [anchor=north west][inner sep=0.75pt]    {$V_{k}$};

\end{tikzpicture}}
       \\
        Degree 2.
        \end{minipage}
    \end{tabular}
    \begin{tabular}{cc}
        \begin{minipage}{0.45\textwidth}
        \centering
        \scalebox{0.9}{
       \tikzset{every picture/.style={line width=0.75pt}} 

\begin{tikzpicture}[x=0.75pt,y=0.75pt,yscale=-1,xscale=1]

\draw  [fill={rgb, 255:red, 0; green, 0; blue, 0 }  ,fill opacity=1 ] (90,184.15) .. controls (90,181.31) and (92.31,179) .. (95.15,179) .. controls (97.99,179) and (100.3,181.31) .. (100.3,184.15) .. controls (100.3,186.99) and (97.99,189.3) .. (95.15,189.3) .. controls (92.31,189.3) and (90,186.99) .. (90,184.15) -- cycle ;
\draw  [fill={rgb, 255:red, 0; green, 0; blue, 0 }  ,fill opacity=1 ] (138,183.15) .. controls (138,180.31) and (140.31,178) .. (143.15,178) .. controls (145.99,178) and (148.3,180.31) .. (148.3,183.15) .. controls (148.3,185.99) and (145.99,188.3) .. (143.15,188.3) .. controls (140.31,188.3) and (138,185.99) .. (138,183.15) -- cycle ;
\draw  [fill={rgb, 255:red, 0; green, 0; blue, 0 }  ,fill opacity=1 ] (115,222.15) .. controls (115,219.31) and (117.31,217) .. (120.15,217) .. controls (122.99,217) and (125.3,219.31) .. (125.3,222.15) .. controls (125.3,224.99) and (122.99,227.3) .. (120.15,227.3) .. controls (117.31,227.3) and (115,224.99) .. (115,222.15) -- cycle ;

\draw  [fill={rgb, 255:red, 0; green, 0; blue, 0 }  ,fill opacity=1 ] (180,183.15) .. controls (180,180.31) and (182.31,178) .. (185.15,178) .. controls (187.99,178) and (190.3,180.31) .. (190.3,183.15) .. controls (190.3,185.99) and (187.99,188.3) .. (185.15,188.3) .. controls (182.31,188.3) and (180,185.99) .. (180,183.15) -- cycle ;
\draw  [fill={rgb, 255:red, 0; green, 0; blue, 0 }  ,fill opacity=1 ] (133,252.15) .. controls (133,249.31) and (135.31,247) .. (138.15,247) .. controls (140.99,247) and (143.3,249.31) .. (143.3,252.15) .. controls (143.3,254.99) and (140.99,257.3) .. (138.15,257.3) .. controls (135.31,257.3) and (133,254.99) .. (133,252.15) -- cycle ;
\draw    (95.15,184.15) -- (143.15,183.15) ;
\draw    (95.15,184.15) -- (120.15,222.15) ;
\draw    (120.15,222.15) -- (143.15,183.15) ;
\draw    (143.15,183.15) -- (185.15,183.15) ;

\draw  [fill={rgb, 255:red, 0; green, 0; blue, 0 }  ,fill opacity=1 ] (220,185.15) .. controls (220,182.31) and (222.31,180) .. (225.15,180) .. controls (227.99,180) and (230.3,182.31) .. (230.3,185.15) .. controls (230.3,187.99) and (227.99,190.3) .. (225.15,190.3) .. controls (222.31,190.3) and (220,187.99) .. (220,185.15) -- cycle ;
\draw    (185.15,183.15) -- (225.15,185.15) ;
\draw    (120.15,222.15) -- (138.15,252.15) ;
\draw  [dash pattern={on 4.5pt off 4.5pt}]  (138.15,252.15) -- (178.3,241.2) ;
\draw  [dash pattern={on 4.5pt off 4.5pt}]  (138.15,252.15) -- (176.3,262.2) ;
\draw  [dash pattern={on 4.5pt off 4.5pt}]  (55,195.1) -- (95.15,184.15) ;
\draw  [dash pattern={on 4.5pt off 4.5pt}]  (55.3,175.2) -- (95.15,184.15) ;

\draw  [dash pattern={on 4.5pt off 4.5pt}]  (225.15,185.15) -- (270.3,172.2) ;
\draw  [dash pattern={on 4.5pt off 4.5pt}]  (225.15,185.15) -- (263.3,191.2) ;
\draw  [dash pattern={on 4.5pt off 4.5pt}] (82.93,170.97) .. controls (104.3,146.2) and (254.22,152.17) .. (252.3,184.2) .. controls (250.37,216.22) and (150.3,206.2) .. (163.3,246.2) .. controls (176.3,286.2) and (121.3,284.2) .. (97.3,246.2) .. controls (73.3,208.2) and (61.55,195.75) .. (82.93,170.97) -- cycle ;

\draw (155,193.4) node [anchor=north west][inner sep=0.75pt]    {$V_{i}$};
\draw (27,176.4) node [anchor=north west][inner sep=0.75pt]    {$V_{j}$};
\draw (279,171.4) node [anchor=north west][inner sep=0.75pt]    {$V_{k}$};
\draw (191,245.4) node [anchor=north west][inner sep=0.75pt]    {$V_{\ell}$};

\end{tikzpicture}}
       \\
       Degree 3, Type A.
    \end{minipage}
    &
    \begin{minipage}{0.45\textwidth}
    \centering
        \scalebox{0.9}{
       \tikzset{every picture/.style={line width=0.75pt}} 

\begin{tikzpicture}[x=0.75pt,y=0.75pt,yscale=-1,xscale=1]

\draw  [fill={rgb, 255:red, 0; green, 0; blue, 0 }  ,fill opacity=1 ] (339.06,222.66) .. controls (336.21,222.61) and (333.95,220.27) .. (334,217.42) .. controls (334.04,214.58) and (336.39,212.31) .. (339.23,212.36) .. controls (342.08,212.41) and (344.34,214.76) .. (344.29,217.6) .. controls (344.24,220.44) and (341.9,222.71) .. (339.06,222.66) -- cycle ;
\draw  [fill={rgb, 255:red, 0; green, 0; blue, 0 }  ,fill opacity=1 ] (373.43,259.26) .. controls (370.58,259.21) and (368.32,256.87) .. (368.37,254.02) .. controls (368.42,251.18) and (370.76,248.91) .. (373.61,248.96) .. controls (376.45,249.01) and (378.72,251.36) .. (378.67,254.2) .. controls (378.62,257.04) and (376.27,259.31) .. (373.43,259.26) -- cycle ;
\draw  [fill={rgb, 255:red, 0; green, 0; blue, 0 }  ,fill opacity=1 ] (377.05,223.32) .. controls (374.21,223.27) and (371.94,220.92) .. (371.99,218.08) .. controls (372.04,215.23) and (374.38,212.97) .. (377.23,213.02) .. controls (380.07,213.07) and (382.34,215.41) .. (382.29,218.26) .. controls (382.24,221.1) and (379.89,223.37) .. (377.05,223.32) -- cycle ;
\draw  [fill={rgb, 255:red, 0; green, 0; blue, 0 }  ,fill opacity=1 ] (377.64,189.32) .. controls (374.79,189.27) and (372.53,186.93) .. (372.58,184.08) .. controls (372.63,181.24) and (374.97,178.97) .. (377.82,179.02) .. controls (380.66,179.07) and (382.92,181.42) .. (382.88,184.26) .. controls (382.83,187.11) and (380.48,189.37) .. (377.64,189.32) -- cycle ;
\draw  [fill={rgb, 255:red, 0; green, 0; blue, 0 }  ,fill opacity=1 ] (406.27,268.83) .. controls (403.42,268.78) and (401.16,266.43) .. (401.21,263.59) .. controls (401.26,260.75) and (403.6,258.48) .. (406.45,258.53) .. controls (409.29,258.58) and (411.56,260.92) .. (411.51,263.77) .. controls (411.46,266.61) and (409.11,268.88) .. (406.27,268.83) -- cycle ;
\draw  [fill={rgb, 255:red, 0; green, 0; blue, 0 }  ,fill opacity=1 ] (413.08,221.94) .. controls (410.24,221.89) and (407.97,219.54) .. (408.02,216.7) .. controls (408.07,213.86) and (410.41,211.59) .. (413.26,211.64) .. controls (416.1,211.69) and (418.37,214.03) .. (418.32,216.88) .. controls (418.27,219.72) and (415.92,221.99) .. (413.08,221.94) -- cycle ;

\draw  [fill={rgb, 255:red, 0; green, 0; blue, 0 }  ,fill opacity=1 ] (411.82,178.91) .. controls (408.98,178.86) and (406.71,176.52) .. (406.76,173.67) .. controls (406.81,170.83) and (409.16,168.56) .. (412,168.61) .. controls (414.84,168.66) and (417.11,171.01) .. (417.06,173.85) .. controls (417.01,176.69) and (414.67,178.96) .. (411.82,178.91) -- cycle ;
\draw    (339.14,217.51) -- (377.73,184.17) ;
\draw    (377.73,184.17) -- (411.91,173.76) ;
\draw    (377.14,218.17) -- (413.17,216.79) ;
\draw    (373.52,254.11) -- (406.36,263.68) ;
\draw    (339.14,217.51) -- (373.52,254.11) ;
\draw    (339.14,217.51) -- (377.14,218.17) ;
\draw  [dash pattern={on 4.5pt off 4.5pt}]  (411.91,173.76) -- (460.17,147.93) ;

\draw  [dash pattern={on 4.5pt off 4.5pt}]  (411.91,173.76) -- (465.17,161.93) ;
\draw  [dash pattern={on 4.5pt off 4.5pt}]  (413.17,216.79) -- (466.42,204.96) ;
\draw  [dash pattern={on 4.5pt off 4.5pt}]  (413.17,216.79) -- (467.17,218.93) ;
\draw  [dash pattern={on 4.5pt off 4.5pt}]  (406.36,263.68) -- (460.36,265.82) ;
\draw  [dash pattern={on 4.5pt off 4.5pt}]  (406.36,263.68) -- (459.17,277.93) ;

\draw  [dash pattern={on 4.5pt off 4.5pt}] (330.17,192.93) .. controls (342.33,167.87) and (415.17,131.93) .. (427.17,167.93) .. controls (439.17,203.93) and (438.17,227.93) .. (429.17,257.93) .. controls (420.17,287.93) and (376.17,281.93) .. (351.17,260.93) .. controls (326.17,239.93) and (318,218) .. (330.17,192.93) -- cycle ;

\draw (384.88,187.66) node [anchor=north west][inner sep=0.75pt]    {$V_{i}$};
\draw (477,137.4) node [anchor=north west][inner sep=0.75pt]    {$V_{j}$};
\draw (481,204.4) node [anchor=north west][inner sep=0.75pt]    {$V_{k}$};
\draw (472,265.4) node [anchor=north west][inner sep=0.75pt]    {$V_{\ell}$};

\end{tikzpicture}}
       \\
        Degree 3, Type B.
    \end{minipage}
    \end{tabular}
    \newline
    \vspace{0.5cm}
    \begin{minipage}{0.45\textwidth}
    \centering
       \tikzset{every picture/.style={line width=0.75pt}} 

\begin{tikzpicture}[x=0.75pt,y=0.75pt,yscale=-1,xscale=1]

\draw  [fill={rgb, 255:red, 0; green, 0; blue, 0 }  ,fill opacity=1 ] (212,374.15) .. controls (212,371.31) and (214.31,369) .. (217.15,369) .. controls (219.99,369) and (222.3,371.31) .. (222.3,374.15) .. controls (222.3,376.99) and (219.99,379.3) .. (217.15,379.3) .. controls (214.31,379.3) and (212,376.99) .. (212,374.15) -- cycle ;

\draw  [fill={rgb, 255:red, 0; green, 0; blue, 0 }  ,fill opacity=1 ] (251,375.15) .. controls (251,372.31) and (253.31,370) .. (256.15,370) .. controls (258.99,370) and (261.3,372.31) .. (261.3,375.15) .. controls (261.3,377.99) and (258.99,380.3) .. (256.15,380.3) .. controls (253.31,380.3) and (251,377.99) .. (251,375.15) -- cycle ;

\draw  [fill={rgb, 255:red, 0; green, 0; blue, 0 }  ,fill opacity=1 ] (287,375.26) .. controls (287,372.46) and (289.31,370.2) .. (292.15,370.2) .. controls (294.99,370.2) and (297.3,372.46) .. (297.3,375.26) .. controls (297.3,378.05) and (294.99,380.32) .. (292.15,380.32) .. controls (289.31,380.32) and (287,378.05) .. (287,375.26) -- cycle ;

\draw  [fill={rgb, 255:red, 0; green, 0; blue, 0 }  ,fill opacity=1 ] (326,376.24) .. controls (326,373.45) and (328.31,371.18) .. (331.15,371.18) .. controls (333.99,371.18) and (336.3,373.45) .. (336.3,376.24) .. controls (336.3,379.04) and (333.99,381.3) .. (331.15,381.3) .. controls (328.31,381.3) and (326,379.04) .. (326,376.24) -- cycle ;
\draw    (217.15,374.15) -- (331.15,376.24) ;

\draw  [dash pattern={on 4.5pt off 4.5pt}] (190.3,376.7) .. controls (190.3,361.51) and (226.12,349.2) .. (270.3,349.2) .. controls (314.48,349.2) and (350.3,361.51) .. (350.3,376.7) .. controls (350.3,391.89) and (314.48,404.2) .. (270.3,404.2) .. controls (226.12,404.2) and (190.3,391.89) .. (190.3,376.7) -- cycle ;

\draw  [dash pattern={on 4.5pt off 4.5pt}]  (217.15,374.15) -- (161.3,364.2) ;
\draw  [dash pattern={on 4.5pt off 4.5pt}]  (217.15,374.15) -- (164.3,383.2) ;
\draw  [dash pattern={on 4.5pt off 4.5pt}]  (387,386.19) -- (331.15,376.24) ;
\draw  [dash pattern={on 4.5pt off 4.5pt}]  (384.3,365.2) -- (331.15,376.24) ;

\draw  [dash pattern={on 4.5pt off 4.5pt}]  (256.15,375.15) -- (269.37,437.23) ;
\draw  [dash pattern={on 4.5pt off 4.5pt}]  (256.15,375.15) -- (251.37,435.23) ;
\draw  [dash pattern={on 4.5pt off 4.5pt}]  (331.15,376.24) -- (286.37,437.23) ;
\draw  [dash pattern={on 4.5pt off 4.5pt}]  (217.15,374.15) -- (235.37,438.23) ;

\draw (263,352.4) node [anchor=north west][inner sep=0.75pt]    {$V_{i}$};
\draw (131,364.4) node [anchor=north west][inner sep=0.75pt]    {$V_{j}$};
\draw (395,365.4) node [anchor=north west][inner sep=0.75pt]    {$V_{k}$};
\draw (253,448.4) node [anchor=north west][inner sep=0.75pt]    {$V_{\ell}$};

\end{tikzpicture}
        Degree 3, Type C.
    \end{minipage}    
    \caption{The types of the $V_i$'s in Corollary~\ref{coro:Hminimal_prop}}
    \label{fig:my_label}
\end{figure}

\begin{proof}
\begin{enumerate}
    \item If the degree of $h_i$ is one, then a Steiner tree only needs the node that is connected to the rest of the model, so $V_i$ has only one node.
    \item If the degree is two, then any Steiner tree contains a path between a node in $V_j$ and a node in $V_k$, and the shortest such paths is an induced path, thus the subgraph induced by $V_i$ must be an induced path, with only the endpoints connected to the rest of the graph.
    \item For degree 3, there are several cases. 
    \begin{itemize}
        \item If $V_i$ contains a cycle, then by minimality the removal of any vertex on this cycle disconnects $V_i$ from another branch. It follows that $V_i$ has type A.
        \item If $V_i$ does not have a cycle, it has at most three leaves. If it has exactly three leaves then it has type B.
        \item Otherwise, $V_i$ is a path, and by the degree-2 case, only the endpoints can connect to some sets $V_j$ and $V_k$, but the connections to the third set $V_{\ell}$ are not controlled. This is type C.
    \end{itemize}
\end{enumerate}
\end{proof}

\subsection{$H$ with an isolated vertex and extension}
\begin{theorem}
\label{thm:H5_isol}
Let $H$ be a graph on $5$ vertices containing an isolated vertex. We can certify $H$-free graphs with certificates of size $O(\log n)$.
\end{theorem}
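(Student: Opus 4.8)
The plan is to reduce to the four-vertex case via a structural dichotomy. Write $H = H' \sqcup K_1$, where $H'$ is the graph on $4$ vertices obtained by deleting the isolated vertex. The starting point is the following characterization. For a connected $G$, an $H$-model consists of four branch sets forming an $H'$-model together with one more disjoint connected set for the isolated vertex, which may be taken to be a single leftover vertex and carries no adjacency constraint. Hence $G$ contains $H$ as a minor if and only if $G$ admits an $H'$-model that does not use all of $V(G)$. Equivalently, $G$ is $H$-minor-free if and only if either $G$ is $H'$-minor-free, or $G$ contains $H'$ as a minor but every $H'$-model covers $V(G)$, that is, $G$ is \emph{$H'$-minimal} in the sense of Section~\ref{sec:5vertices} (deleting any vertex destroys every $H'$-minor). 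These two cases are disjoint, so the prover can indicate with one extra bit which one holds and the verifier runs the corresponding scheme.

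In the first case I would invoke Theorem~\ref{thm:4vertices} directly: since $H'$ has at most $4$ vertices, $H'$-minor-freeness can be certified with $O(\log n)$ bits, and an $H'$-minor-free graph is automatically $H$-minor-free because $H'$ is a minor of $H$. The soundness of this branch is inherited from that of Theorem~\ref{thm:4vertices}.

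The substantial case is certifying that $G$ is $H'$-minimal. Here I would rely on Corollary~\ref{coro:Hminimal_prop}: since every vertex of $H'$ has degree at most $3$, each branch set $V_i$ is an almost tree of a very restricted shape (a single vertex, an induced path, or a subdivided star or triangle-with-paths). Consequently $G$ has only $O(1)$ vertices of degree at least $3$ — the triangle corners, star centres, and the connecting vertices between branch sets — while all remaining vertices lie on internally disjoint paths joining these branch vertices. In other words $G$ is a subdivision of a skeleton $\hat G$ of constant size. To certify this, the prover marks the branch vertices, writes on every vertex the complete map of $\hat G$ together with the identifiers of its $O(1)$ branch vertices (costing $O(\log n)$ bits since $\hat G$ is of constant size), and certifies each connecting path using the path and spanning-tree building blocks of Subsection~\ref{subsec:building-blocks} and Corollary~\ref{coro:pointed}. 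The verifier checks, from the map, that $\hat G$ realizes a tight $H'$-model of one of the prescribed types, that every degree-$2$ vertex lies on the announced path between two branch vertices, and that no edge or vertex lies outside this subdivision.

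The delicate point, and the step I expect to be the main obstacle, is soundness of this second branch: I must argue that if $G$ is exactly such a subdivision of a tight $H'$-skeleton, then $G$ is genuinely $H'$-minimal, that is, deleting any vertex destroys \emph{every} $H'$-minor and not merely the exhibited one. This is where the almost-tree property is essential: because each $V_i$ is an almost tree for its prescribed connection sets, any Steiner tree realizing the required connections must use all of $V_i$, so there is no slack allowing a vertex to be freed or reassigned to a neighbouring branch set. Since $\hat G$ has constant size, the verifier can even perform this as a brute-force check from the map it receives — confirming that $\hat G$ remains a valid $H'$-realization under no vertex or edge deletion — and subdividing the paths preserves this criticality. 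Combining the two branches yields an $O(\log n)$-bit certification of $H$-minor-freeness.
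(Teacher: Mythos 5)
Your overall architecture matches the paper's: split into the case where $G$ is $H'$-minor-free (handled by Theorem~\ref{thm:4vertices}) and the case where $G$ is $H'$-minimal, and in the latter case use Corollary~\ref{coro:Hminimal_prop} to pin down the shape of the branch sets. The first branch and the dichotomy itself are correct. The gap is in the structural claim driving the second branch: it is \emph{not} true that an $H'$-minimal graph has $O(1)$ vertices of degree at least $3$, nor that it is a subdivision of a constant-size skeleton. Corollary~\ref{coro:Hminimal_prop} bounds the shape of each branch set, but for a type-C branch set it explicitly allows \emph{arbitrarily many} edges to the third neighbouring set, so the number of inter-branch attachment vertices (each of degree at least $3$) is unbounded. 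A concrete counterexample is the wheel $W_n$ for $H'=K_4$: an induced cycle on $n$ rim vertices plus a hub adjacent to all of them. It contains a $K_4$-minor, deleting the hub leaves a cycle and deleting a rim vertex leaves a fan (which is outerplanar, hence $K_4$-minor-free), so $W_n$ is $K_4$-minimal and is a legitimate \emph{yes}-instance for $(K_4+K_1)$-minor-freeness. Yet all $n+1$ of its vertices have degree at least $3$, so it is not a subdivision of any constant-size graph, and your scheme has nothing to give it: completeness fails. The same problem arises for $H'$ equal to a triangle with a pendant vertex, whose minimal graphs are induced cycles plus one extra node with arbitrarily many attachments.

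The paper closes this gap with a substantially longer case analysis (Lemma~\ref{lem:H'-minimal}) that classifies \emph{all} $H'$-minimal graphs for every four-vertex $H'$ into eight explicit families, including precisely the ones your argument misses -- wheels, induced cycles plus a node, the two-triangles-joined-by-three-paths graphs of Figure~\ref{fig:k4+K1}, graphs with at most five high-degree vertices, and $K_{3,3}$ -- and then certifies each family by ad hoc means (a spanning tree pointing to the few genuinely special vertices, plus path certificates carrying endpoint information along the degree-$2$ stretches). To repair your proof you would need either to carry out an equivalent classification, or to find a uniform argument covering the families with unboundedly many degree-$3$ vertices; the "subdivision of a constant skeleton" framework alone does not suffice.
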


The rest of this section is devoted to the proof of Theorem~\ref{thm:H5_isol}.

Let $H'$ be the graph $H$ where an isolated vertex has been removed. A $H$-free graph is either a $H'$-free graph, or it is a graph $G$ such that all the models of $H'$ contain all the vertices of $G$. Since $H'$-minor free graphs can be certified within $O(\log n)$ bits by Theorem~\ref{thm:4vertices}, we can assume that $G$ is $H'$-minimal. 

The core of the proof consists in proving the following lemma. 

\begin{lemma}
\label{lem:H'-minimal}
For every graph $H'$ on four vertices, any $H'$-minimal graph is in one of the following categories:
\begin{enumerate}
    \item \label{item:sub-copies} subdivided copies of $H'$,
    \item \label{item:size-4} graphs of size 4,
    \item \label{item:ind-cycles} induced cycles,
    \item \label{item:ind-cycles-plus-1} induced cycle plus a node,
    \item \label{item:figure} the graphs of the type of Figure~\ref{fig:k4+K1},
    \item \label{item:at-most-5} graphs with at most five vertices of degree larger than 2,
    \item \label{item:last_case2} the complete bipartite graph $K_{3,3}$.
\end{enumerate} 
\end{lemma}

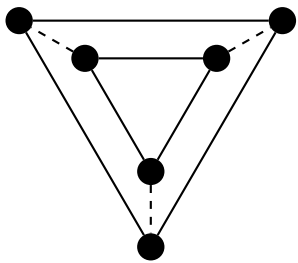
\begin{figure}[!ht]
\centering
\begin{tikzpicture}[thick, every node/.style={circle,draw,fill}]
\node (A) at (30:1) {};
\node (B) at (150:1) {};
\node (C) at (-90:1) {};
\draw (A) -- (B) -- (C) -- (A);
\node (A2) at (30:2) {};
\node (B2) at (150:2) {};
\node (C2) at (-90:2) {};
\draw[dashed] (A) to (A2);
\draw[dashed] (B) to (B2);
\draw[dashed] (C) to (C2);
\draw (A2) to (B2);
\draw (B2) to (C2);
\draw (C2) to (A2);
\end{tikzpicture}
\caption{This drawing represents a class of graphs built by taking two vertex-disjoint triangles, and linking pairs of corners of the triangles by vertex-disjoint paths.}
\label{fig:k4+K1}
\end{figure}


\begin{proof}
Let $v_1,\ldots,v_4$ be the vertices of $H'$ and $V_1,\ldots,V_4$ be a model of $H'$. 
Let us now distinguish the cases depending on the maximum degree in  $H'$. For each case, we characterize the form of $G$.
\medskip

\noindent
\textbf{Case 1: $H'$ is acyclic.}
We claim that if $H'$ is acyclic, then $G$ has to be a copy of $H'$ (which is  Item~\ref{item:sub-copies} in the lemma).
If $H'$ has a node of degree 3, then $G$ should have a node of degree 3, and by minimality $G$ is exactly one node with degree 3 and its three neighbors, that is, exactly the same as $H'$.
If $H'$ has no node of degree 3, then in the model of $H'$ every $V_i$ contains exactly one node by minimality (using Corollary~\ref{coro:Hminimal_prop}), and again $G$ has to be a copy of $H'$.
\medskip

\noindent
\textbf{Case 2: $H'$ has at most one degree-$3$ vertex.}
By Case 1, $H'$ contains a $C_3$ or a $C_4$. 

If $H'=C_4$, it means that $G$ contains a cycle of size at least 4 and that every such cycle contains all the nodes of the $G$. In other words, either $G$ has size exactly four, or $G$ is an induced cycle (Items~\ref{item:size-4} and~\ref{item:ind-cycles} in the lemma).

If $H'$ is a triangle plus a node, then we claim that $G$ is an induced cycle plus a unique vertex (Item~\ref{item:ind-cycles-plus-1} in the lemma). 
Indeed, since $H'$ has at most one degree $3$ node, the vertex not in the triangle has degree at most one. Thus, for any cycle $C$ of $G$, the cycle plus any node incident to $C$ is a $H'$-minor. Since $G$ is $H'$-minimal, the graph $G$ is an induced cycle plus a node.
\medskip

\noindent
\textbf{Case 3: $H'$ has two degree $3$ vertices.}
In this case, $H'$ is a triangle plus a vertex of degree two (that is, a diamond) or three (that is, a $K_4$). 
Let $V_1, V_2,V_3,V_4$ be a model of $H'$ where (at least) $V_3$ and $V_4$ are associated to degree $3$ vertices of $H'$.

Assume that both $V_3$ and $V_4$ have type A or B. In this case, there is a unique vertex $x \in V_3$ incident to a vertex $y$ in $V_4$. If we add $y$ to $V_3$, and remove it from $V_4$, then the size of $V_4$ is decreasing, and the $V_i$'s still form a model of $H'$. We can repeat this operation until $V_4$ does not have type~A or~B. 
Therefore, we can assume that $V_3$ or $V_4$ has type C. 
\smallskip

\noindent
\textit{Case 3.a.}
Assume first that $H'$ is a diamond. 
Then $v_1$ and $v_2$ have degree 2, and by Corollary~\ref{coro:Hminimal_prop}, the subsets $V_1$ and $V_2$ are paths. 
Moreover, if there is an edge between them, and one $V_i$ has two vertices, we could remove one of these vertices, and still have a model of $H'$. 
Therefore, each $V_i$ is reduced to a single vertex (otherwise $G$ is not $H'$-minimal) hence $G$ is $K_4$ (Item~\ref{item:size-4} in the lemma).
Otherwise, if one of $V_3,V_4$ has type~A, then $G\setminus V_1$ contains a diamond as a minor, a contradiction since $G$ is $H'$-minimal. Moreover, as we already observed, at least one of $V_3,V_4$, say $V_3$, has type C. 

Assume that $V_4$ has type B, and let $u\in V_4$ be the vertex of $V_4$ adjacent to $V_3$. 
If $u$ sees two vertices of $V_3$, then replacing $(V_3,V_4)$ by $(V_3\cup\{u\},V_4\setminus u)$ gives a model where $V_3\cup\{u\}$ has type~A, a contradiction. Therefore, $u$ sees a unique vertex of $V_3$ and $G$ is a subdivided diamond (Item~\ref{item:sub-copies} in the lemma).

Otherwise, $V_4$ has type C, hence $V_3$ and $V_4$ induce two paths (with maybe edges between them).

There cannot be two edges between $V_3$ and $V_4$, otherwise, $G\setminus V_1$ contains a diamond-minor, a contradiction. Therefore, there is only one edge between $V_3$ and $V_4$ and $G$ is again a subdivision of a diamond (Item~\ref{item:sub-copies} in the lemma). And this finishes the analysis for the case where $H'$ is a diamond.

\smallskip

\noindent
\textit{Case 3.b.}
Assume that $H'=K_4$. Let us first prove the following claim:

\begin{claim}\label{clm:K4+K1}
If $G$ is $K_4$-minimal, and $G$ contains two vertex-disjoint cycles $C_1,C_2$ with three pairwise non-incident edges between $C_1$ and $G\setminus C_1$, then $G$ is the graph depicted on Figure~\ref{fig:k4+K1}.
\end{claim}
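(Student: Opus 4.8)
The plan is to first strip the hypotheses down to a rigid ``subdivided prism'' skeleton, and then let $K_4$-minimality forbid everything that is not part of it. Throughout I would use that, since $K_4$ has maximum degree $3$, a graph contains $K_4$ as a minor if and only if it contains a subdivision of $K_4$; thus $K_4$-minimality means that $G$ contains a $K_4$-subdivision $S$ but $G\setminus v$ contains none for every $v$. A first easy consequence is that $G$ is $2$-connected: as $S$ is $2$-connected it lies inside a single block $B$ of $G$, so if $G$ had a cut-vertex we could delete a non-cut vertex of some leaf block other than $B$ (such a vertex is not in $S$) without destroying $S$, contradicting minimality.

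Second, I would build the skeleton. Write the three non-incident edges as $u_iw_i$ with $u_i\in V(C_1)$ and $w_i\notin V(C_1)$, the $u_i$ (resp.\ the $w_i$) being pairwise distinct. Using $2$-connectivity together with the fact that $G\setminus V(C_1)$ still contains the cycle $C_2$, I would apply a Menger-type argument inside $G\setminus V(C_1)$ to route three \emph{vertex-disjoint} paths $Q_1,Q_2,Q_3$, where $Q_i$ starts at $w_i$ and ends on a common cycle $C_2'$ extracted from $C_2$, with the three endpoints on $C_2'$ pairwise distinct. The union $H=C_1\cup C_2'\cup\bigcup_i\bigl(\{u_iw_i\}\cup Q_i\bigr)$ is then precisely a subdivision of the triangular prism: two disjoint cycles joined by three vertex-disjoint paths, which already carries a $K_4$-minor (as does the prism itself).

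Third, I would invoke minimality to show $G=H$ and that the two cycles are triangles. No vertex can lie outside $H$, since deleting it would leave $H$, and hence a $K_4$-minor, intact; so $V(G)=V(H)$, and a similar shortcut argument rules out chords, giving $G=H$. It then remains to see that $C_1$ and $C_2'$ each have exactly three vertices, namely their three attachment points. The decisive mechanism is the redundancy between the two cycles: if, say, $C_1$ carried a vertex $z$ internal to one of its arcs (so that $C_1$ is longer than a triangle), then deleting $z$ would break that arc, but the $K_4$-minor could be rebuilt by routing the corresponding connection through the \emph{other} triangle $C_2'$ and two of the connecting paths; hence $G\setminus z$ would still admit a $K_4$-minor, contradicting minimality. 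Thus every vertex of each cycle is an attachment point, both cycles are triangles, and $G$ is exactly the graph of Figure~\ref{fig:k4+K1}.

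The hard part will be the second step: guaranteeing that the outside attaches ``like a triangle'', i.e.\ producing three vertex-disjoint paths reaching three \emph{distinct} vertices of a single cycle on the $C_2$ side (rather than, say, all three edges entering the same bridge, or $C_2$ being reachable at fewer than three places). This is exactly the point where the hypothesis that $G$ contains a second disjoint cycle, and not merely a connected outside part, is essential: without it one would only obtain a wheel, which is the separate case treated elsewhere in the lemma. By contrast, the minimality bookkeeping in the third step (forbidding extra vertices, chords, and cycle-subdivisions simultaneously) is routine once the two-cycle redundancy has been isolated.
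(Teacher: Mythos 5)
Your endgame is sound: the observation that a subdivision vertex on an arc of $C_1$ can be deleted because the broken connection $u_i$--$u_j$ reroutes through $t_i$--$t_j$ on the other cycle is correct, and it is a nice explicit alternative to the paper's argument, which instead establishes the structure on the $C_2$ side first and then concludes that $C_1$ is a triangle by observing that the hypotheses now hold with $C_1$ and $C_2$ exchanged. But the step you yourself flag as ``the hard part'' is a genuine gap, and the tools you name do not close it. Two-connectivity of $G$ says nothing about $G\setminus V(C_1)$ (deleting an entire cycle can leave a graph with small separators), so Menger does not give you three vertex-disjoint paths from $\{w_1,w_2,w_3\}$ to three distinct vertices of $C_2$: a priori all three $w_i$ could be separated from $C_2$ by a single cut vertex of $G\setminus V(C_1)$. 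Nothing in your write-up rules this out, and without the three disjoint paths the whole prism skeleton never gets off the ground.

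The missing idea is that minimality itself, not connectivity, is what delivers the skeleton. The three arcs of $C_1$ determined by $u_1,u_2,u_3$ already form a model of $K_3$, each arc attached to one $w_i$; hence \emph{every} connected vertex subset of $G\setminus V(C_1)$ containing $w_1,w_2,w_3$ completes a $K_4$-model and must therefore be all of $V(G)\setminus V(C_1)$. So $G\setminus V(C_1)$ is a minimal connected set for the three terminals $w_1,w_2,w_3$; since it also contains the cycle $C_2$, the only possibility (this is the Type~A case of Corollary~\ref{coro:Hminimal_prop}) is a triangle with three pendant paths ending at the $w_i$. This single argument produces, in one stroke, both the triangle on the $C_2$ side and the three vertex-disjoint connecting paths that you wanted Menger to supply; it is exactly how the paper proceeds. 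One further small point: your claim that ``a similar shortcut argument rules out chords'' needs care, since minimality is a statement about vertex deletion and an extra edge does not directly hand you a vertex to delete; you must argue that an extra edge creates enough redundancy that some vertex of $G$ becomes deletable while a $K_4$-minor survives (which is true here, but is an argument about vertices, not a ``shortcut'' about edges).
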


\begin{proof}
Let $a_1b_1,a_2b_2,a_3b_3$ be the edges from the statement, with $a_i\in C_1$.

Assume first that $C_2$ is not a triangle. Then we can remove a vertex of $G \setminus C_1$ in such a way it remains connected and still contains the $b_i$'s. This gives a $K_4$-model, a contradiction. Hence, we assume that $C_2$ is a triangle.

We say that $u\in C_2$ has a private path to one of the $a_i$'s if it has such a path that avoids $C_2\setminus\{u\}$. If some vertex $u\in C_2$ has no such path, then $(G\setminus C_1)\setminus\{u\}$ is connected, hence $G\setminus u$ contains a $K_4$ minor, a contradiction. 
Moreover, if two vertices $u,v\in C_2$ have a private path to the same $a_i$, then we get a $K_4$ minor avoiding some other $a_j$. Therefore, each vertex of $C_2$ is associated with a unique $a_i$ by considering private paths. Observe that there is exactly one path for each of the three choice of endpoints (since if there were two paths, one could remove a vertex which lies in one path but not in the other and get a $K_4$ minor). 

It remains to show that $C_1$ is a triangle. To this end, observe that the structure we found on $G\setminus C_2$ ensures that the hypotheses of the statement are still met when exchanging $C_1$ with $C_2$, and the first argument of the proof shows that $C_1$ is a triangle.
\end{proof}

The remarks at the beginning of Case 3 ensure that all but at most one set $V_i$ (say $V_4$) are of type C. We now do a case analysis of the type of $V_4$.

Assume that $V_4$ has type A. Then $V_4$ contains a triangle $C$. Moreover, $G \setminus C$ contains a cycle since it contains $V_1,V_2,V_3$ which are pairwise connected. So by Claim~\ref{clm:K4+K1}, the graph is of the form of Figure~\ref{fig:k4+K1} (Item~\ref{item:figure} in the lemma).

Assume now that $V_4$ has type B. That is, $V_4$ is a subdivided star with three branches. 
Let $x$ be the vertex of $V_4$ of degree three and $a_1,a_2,a_3$ be the endpoints of the subdivided star rooted in $x$ (note that the $a_i$'s are indeed distinct from $x$). 
Without loss of generality, each $a_i$ is connected to $V_i$ (and not to some other $V_j$ since otherwise $G\setminus a_j$ contains a $K_4$-minor). 
If some $a_i$ is connected to at least two vertices of $V_i$, then $G \setminus (\{ a_i \} \cup V_i)$ contains a cycle as well as $V_i \cup \{a_i \}$ with the conditions of Claim~\ref{clm:K4+K1}. So the graph is the graph of Figure~\ref{fig:k4+K1} (Item~\ref{item:figure} in the lemma).
Therefore, each $a_i$ is connected to exactly one vertex of $V_i$ and $G$ is a subdivided $K_4$ (Item~\ref{item:sub-copies} in the lemma). 

Assume now that $V_4$ has type C. That is, the four sets are of type C. 
We focus on $V_1$. 
We extend $V_1$ greedily, that is, if we can add a vertex $v$ of some $V_j$ to $V_1$, in such a way can still find a model of~$K_4$ where one of the set is $V_1 \cup v$, we do it. 
So from now on, we can assume that any addition of a neighbor of a vertex of $V_j$ to $V_1$ does not keep a model. 
We can assume that $V_1$ has still type C, otherwise we can apply the previous cases. 
Now we claim that $V_2 \cup V_3 \cup V_4$ is a cycle $C$. Indeed, it must contain a cycle, since $V_2,V_3,V_4$ is a model of the triangle. If $w$ is not in this cycle, either it is adjacent to $V_1$ and then can be added to $V_1$ (a contradiction) or it is not, and then $G \setminus w$ has a model of $K_4$ and then $H=K_4+K_1$ appears as a minor in $G$.
Let $C$ be the cycle containing all the vertices of $V_2,V_3,V_4$ and $X_1,X_2,X_3$ the neighbors of $V_1$ in respectively $V_2,V_3,V_4$.

Assume that the cycle $C$ is not induced.
Any chord of the cycle separate the cycle into two sides. 
If there is a chord of $C$ that leaves one side of the cycle with at least one element of each of $X_1,X_2,X_3$, then we can remove a vertex on the opposite side of the cycle and still have a $K_4$-minor, a contradiction with the $H$-minimality of $G$. 
So, without loss of generality, the chord separates $X_1$ on one side and at least one element from $X_2,X_3$ on the other side. Let $e$ be the chord and $P,P'$ be the two parts of $C$ separated by $e$ where $P'$ only contains $X_1$.
In this case, we can apply Claim~\ref{clm:K4+K1} with the cycle $e+P'$, and a cycle  using $V_1$, a part of $P$ and edges between $V_1$ and $X_2,X_3$. Hence, this case again boils down to the graphs of Figure~\ref{fig:k4+K1} (Item~\ref{item:figure} in Lemma~\ref{lem:H'-minimal}). 

So we can assume that $C$ is an induced cycle. If $V_1$ is reduced to a single vertex, then the graph $G$ is a wheel (an induced cycle plus a vertex incident to at least $3$ vertices of the cycle) and it is $K_4$-minimal (Item~\ref{item:ind-cycles-plus-1} in Lemma~\ref{lem:H'-minimal}). 
So we can assume that $V_1$ has at least two vertices. And it is a path since it has type~C and both endpoints of the path have neighbors in $C$ (otherwise the model is not minimal).  
Since we have a $K_4$-model, we need the whole set $V_1$ to have at least three different neighbors on $C$. 

First, note that every vertex of $V_1$ has at most $2$ neighbors on $C$ (otherwise, we have a $K_4+K_1$ model since $V_1$ contains at least two vertices). More generally, if a subpath of $V_1$ has at least three neighbors on $C$, we have a contradiction. So we can assume that, both extremities of $V_1$ have a private neighbor in $C$ and, in total $V_1$ is adjacent to at most $4$ vertices in $C$.

Now if $N(V_1) \cap C$ has size $4$, we claim that only extremities of $V_1$ have neighbors in $C$ and each extremity has exactly two neighbors. Let us denote by $v_1,\ldots,v_\ell$ the vertices of $V_1$ with neighbors in $C$ (in that order in the path $V_1$). Since $|N(\cup_{i \le \ell-1} v_i) \cap C| \le 2$ and $|N(\cup_{i=2}^{\ell} v_i) \cap C| \le 2$, we have $\ell=2$.
So $v_1$ and $v_2$ are the two extremities of $V_1$ and $N(v_1) \cap C = \{a,b\}$ and $N(v_2) \cap C= \{c,d \}$ where $a,b,c,d$ are pairwise distinct. If $a,b,c,d$ appear in that order then we are in the case of Item~\ref{item:at-most-5} in Lemma~\ref{lem:H'-minimal}. So we can assume up to symmetry that the vertices $a,c,b,d$ appear in that order in $C$. If the cycle is has length $4$ and $v_1v_2$ is an edge, then the graph is $K_{3,3}$ (Item~\ref{item:last_case2}). Now if at least one of $ac,cb,bd,da,v_1v_2$ is not an edge, we have a $K_4+K_1$. Assume that one of $ac,cb,bd,da$ is subdivided, w.l.o.g. $ac$. Let $x$ be a vertex between $a$ and $c$ in $C$, then $C'=v_1bC_{bd}C_{da}av_1$ is an induced cycle and there are three paths in $G \setminus x$ from $v_2$ to $C'$ (to $v_1,b,d$), a contradiction. If $v_1v_2$ is subdivided then we get the conclusion with the same cycle but three paths are leaving from $c$ (to $a,b$ and $d$)


So, we can now assume that $V_1$ has at most $3$ neighbors in $C$. But now, let $v_1,\ldots,v_\ell$ be the vertices of $V_1$ with neighbors in $C$. Since $v_1$ and $v_\ell$ have private neighbors on $C$, all the other vertices are adjacent to the same vertex $w$ of $C$. Let us denote by $a$ and $b$ respectively the private neighbors of $v_1$ and $v_\ell$. Note that $v_\ell V_1v_1aP_{ab}bv_\ell$ is an induced cycle $C'$ (where $P_{ab}$ is the subpath of $C$ from $a$ to $b$ avoiding $w$). If the only vertex outside of $C'$ is $w$, then we are in Item~\ref{item:ind-cycles-plus-1} in Lemma~\ref{lem:H'-minimal}. If $w$ sees only one $v_i$, then the graph is a subdivided $K_4$, which is Item~\ref{item:sub-copies}. Otherwise, there are at least $4$ paths starting from $w$ to $C'$ and one of them is subdivided. The removal of a vertex in a subdivided path still leaves a graph with a $K_4$ minor, a contradiction. It completes the proof.
\end{proof}

We can now derive the theorem from the lemma. All the classes of Lemma~\ref{lem:H'-minimal} can be certified with certificates of size $O(\log n)$, and in these classes it is easy to certify $H'$-minimality. This is because in all these classes there is a constant number of special vertices: the nodes before the subdivision for Item~\ref{item:sub-copies}, the additional node for Item~\ref{item:ind-cycles-plus-1}, the corners of the triangles for Item~\ref{item:figure}, the nodes of degree larger than 2 in Item~\ref{item:at-most-5}, and none for Items~\ref{item:size-4} and~\ref{item:ind-cycles}.  
The vertices that are not special have degree 2.
Certifying these classes boils down to having spanning trees pointing to the special vertices, and having a certification of every path of non-special nodes, to transfer the knowledge of the endpoints from one side of the paths to the other. Then basic consistency checks verify the certification. 
Because the structure is so constrained, it is easy also to check whether the graph is $H'$-minimal.

Let us finish this section with an observation. 
Since the graph $G$ is connected, the same proof holds for a graph $H'$ plus a single vertex of degree one as long as all the vertices of $H'$ are equivalent. A graph is \emph{vertex transitive} if for every pair of vertices $(u,v)$, there exists an automorphism of $H$ mapping $u$ to $v$. 

\begin{lemma}
Let $H$ be a graph on 5 vertices obtained by adding a pending edge to a vertex transitive graph. Then $H$-minor free graph can be certified with $O(\log n)$ bits.
\end{lemma}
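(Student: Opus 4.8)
The plan is to reduce the statement to the proof of Theorem~\ref{thm:H5_isol} almost verbatim, exploiting the fact that, in a \emph{connected} host graph, hanging a pendant vertex on a vertex-transitive graph behaves exactly like adding an isolated vertex. Write $H'$ for the graph obtained from $H$ by deleting the pendant vertex, and let $v_1$ be the vertex of $H'$ to which the pendant is attached; by hypothesis $H'$ is a vertex-transitive graph on four vertices. The central claim I would prove is the following equivalence: for every connected graph $G$, the graph $G$ contains $H$ as a minor if and only if $G$ admits a model of $H'$ that does not cover all of $V(G)$. Granting this, a connected graph is $H$-minor-free precisely when it is either $H'$-minor-free or $H'$-minimal, which is exactly the dichotomy used in the reduction of Theorem~\ref{thm:H5_isol}.

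To establish the equivalence, the forward direction is immediate: any model of $H$ restricts to a model of $H'$ together with a disjoint nonempty branch set playing the role of the pendant, so the induced $H'$-model leaves at least one vertex of $G$ uncovered. For the converse, suppose $V_1,\dots,V_4$ is a model of $H'$ with $V_1\cup\cdots\cup V_4\neq V(G)$. Since $G$ is connected, there is a vertex $w$ outside the model adjacent to some branch set $V_j$. This is where vertex-transitivity is essential: choosing an automorphism $\sigma$ of $H'$ with $\sigma(v_j)=v_1$ and relabelling the branch sets accordingly yields a model of $H'$ in which $w$ is adjacent to the branch set of $v_1$, so that $\{w\}$ becomes a legitimate branch set for the pendant. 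Without vertex-transitivity this relabelling could fail, since the pendant must attach to a fixed vertex of $H'$ whose branch set need not have an outside neighbour.

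With the equivalence in hand, the rest of the argument is identical to that of Theorem~\ref{thm:H5_isol}. On one branch, $H'$-minor-free graphs are certified with $O(\log n)$ bits by Theorem~\ref{thm:4vertices}, since $H'$ has four vertices. On the other branch, $H'$-minimal graphs fall, by Lemma~\ref{lem:H'-minimal}, into a constant number of highly structured families, each certifiable with $O(\log n)$ bits together with an $O(\log n)$-bit verification of $H'$-minimality, because every such family has only a bounded number of special vertices while all remaining vertices have degree two. Running both certifications in parallel and accepting when at least one succeeds gives the desired $O(\log n)$-bit scheme. The only genuinely new ingredient over Theorem~\ref{thm:H5_isol} is the central equivalence, so that is where the care goes; the sole (mild) obstacle is to make the automorphism-relabelling precise and to check that the relevant four-vertex vertex-transitive graphs ($\overline{K_4}$, $2K_2$, $C_4$ and $K_4$) are all covered---indeed the $\overline{K_4}$ case already produces an $H$ with an isolated vertex and is subsumed directly by Theorem~\ref{thm:H5_isol}.
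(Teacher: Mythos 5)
Your proposal is correct and takes essentially the same approach as the paper: the paper's own proof is the one-sentence observation preceding the lemma, namely that since $G$ is connected and all vertices of $H'$ are equivalent under automorphisms, an uncovered vertex adjacent to any branch set can serve as the pendant's branch set after relabelling, so the proof of Theorem~\ref{thm:H5_isol} carries over. Your ``central equivalence'' is exactly that observation made precise, and the rest of your argument follows the paper's reduction to Theorem~\ref{thm:4vertices} and Lemma~\ref{lem:H'-minimal} verbatim.
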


\section{Lower bounds}
\label{sec:lower-bound}

In this section, we show logarithmic lower bounds for $H$-minor-freeness for every 2-connected graph $H$. 
These results generalize the lower bounds of \cite{FeuilloleyFMRRT20} for $K_k$ and $K_{p,q}$. Our technique is a simple reduction from the certification of paths, via a local simulation. 
In contrast, the proofs of \cite{FeuilloleyFMRRT20} were ad-hoc adaptations of the constructions of \cite{GoosS16} and \cite{FeuilloleyH18}, with explicit counting arguments.
Moreover, our lower bounds apply in the stronger model of locally checkable proofs, where the verifier can look at a constant distance. 

\begin{theorem}
\label{thm:lower-bound}
For every 2-connected graph $H$, certifying $H$-minor-freeness requires $\Omega(\log n)$ bits.
\end{theorem}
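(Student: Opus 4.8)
The plan is to reduce from the certification of paths. It is classical that distinguishing a simple path $P_n$ from a simple cycle $C_n$---equivalently, certifying that a connected graph of maximum degree $2$ is a path rather than a cycle---requires $\Omega(\log n)$ bits, even in the locally checkable proof model (this is the same phenomenon as the lower bound for certifying acyclicity), and this is the base hardness I will transfer. Fix an edge $xy\in E(H)$ and set $\Hm=H-xy$. Since $H$ is $2$-connected it is in particular $2$-edge-connected, so $\Hm$ is connected and still contains an $x$--$y$ path. Moreover $\Hm$ has strictly smaller cyclomatic number than $H$ (one fewer edge on the same vertex set), and the cyclomatic number is minor-monotone, so $\Hm$ is $H$-minor-free.

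The reduction replaces each vertex of the input graph $G$ (a path or a cycle on $m$ vertices) by a private copy $\Hm_i$ of $\Hm$ with terminals $x_i,y_i$, and for every edge $v_iv_{i+1}$ of $G$ identifies $y_i$ with $x_{i+1}$. Call the result the \emph{necklace} $\mathcal N(G)$; it has $\Theta(m)$ vertices since $|H|=O(1)$. The key claim is that $\mathcal N(G)$ is $H$-minor-free exactly when $G$ is a path. If $G$ is a path, consecutive blocks of $\mathcal N(G)$ meet in a single cut-vertex, so every $2$-connected component of $\mathcal N(G)$ is contained in a single block $\Hm_i\subseteq\Hm$ and is therefore $H$-minor-free; since $H$ is $2$-connected, a graph is $H$-minor-free iff each of its $2$-connected components is (exactly the fact used in the proof of Theorem~\ref{thm:2-connected-to-general}), hence $\mathcal N(G)$ is $H$-minor-free. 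If instead $G$ is a cycle, the terminals lie on a closed cyclic chain, so for any block $\Hm_i$ the remaining blocks supply an $x_i$--$y_i$ path internally disjoint from $\Hm_i$; together with $\Hm_i=H-xy$ this is a subdivision of $H$, which contains $H$ as a minor. Thus $\mathcal N(\text{cycle})$ is not $H$-minor-free.

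Finally I convert any certification of $H$-minor-freeness with $s(n)$-bit certificates and radius $r$ into a path-versus-cycle scheme by local simulation. Each vertex $v_i$ of $G$ simulates its constant-size block $\Hm_i$; the prover gives $v_i$ the $H$-minor-freeness certificates of all vertices of $\Hm_i$, which is $O(s)$ bits, together with identifiers for the block derived locally from that of $v_i$. As blocks have constant size, the radius-$r$ ball of any block vertex in $\mathcal N(G)$ meets only $O(r)$ consecutive blocks, so $v_i$ can assemble the needed view from its radius-$O(r)$ neighborhood in $G$, run the $H$-minor-freeness verifier on every vertex of $\Hm_i$, and accept iff all accept. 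By the two structural claims this scheme accepts paths and rejects cycles, so the base lower bound forces $s=\Omega(\log m)=\Omega(\log n)$. The two uses of $2$-connectedness of $H$ are precisely these two claims, and I expect the main work to lie there---in checking that closing the chain reconstitutes exactly a subdivision of $H$, while the open chain keeps each $2$-connected component inside one $H$-minor-free block. The only ingredient I defer to the cited base lower bound is the handling of identifiers; since the simulation is identifier-faithful, no separate counting over identifier assignments is needed in the reduction itself.
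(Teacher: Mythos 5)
Your proposal is correct and follows essentially the same route as the paper: remove an edge $xy$ from $H$, string copies of $H-xy$ into an open chain (which is $H$-minor-free because $H$ is $2$-connected and the gluing vertices are cut vertices) versus a closed necklace (which completes a model of $H$ via a path through the other copies), and transfer the $\Omega(\log n)$ path-versus-cycle lower bound by local simulation. The only cosmetic differences are that you make explicit why $H-xy$ itself is $H$-minor-free (an edge-count/cyclomatic argument the paper leaves implicit) and you spell out the radius-$r$ simulation in slightly more detail.
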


Let us start by proving a couple of lemmas. 
Let $H$ be a 2-connected graph, and let $e=uv$ be an arbitrary edge of $H$. 
Let $\Hm$ be the graph $H \setminus e$. Note that $H^{-}$ is connected.
We are going to consider copies of $\Hm$, that we  index as $\Hm_i$'s, and where the copies of the nodes $u$ and $v$ will be called $u_i$ and~$v_i$.
Let $\mathcal{P}$ be the class of all the graphs that can be made by taking some $k$ copies of $\Hm$, and by identifying for every $i\in [1,k-1]$, $v_i$ with $u_{i+1}$. 
In other words, $\mathcal{P}$ is the set of paths, where every edge is a copy of $\Hm$.
The class $\mathcal{C}$ is the same as $\mathcal{P}$ except that we close the paths into cycles, that is, we identify $v_k$ with $u_1$.

\begin{lemma}
\label{lem:P-C}
The graphs of $\mathcal{P}$ are all $H$-minor-free, and the graphs of $\mathcal{C}$ all contain $H$ as a minor.
\end{lemma}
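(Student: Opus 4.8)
The plan is to treat the two claims separately, since they call for genuinely different tools: the membership $H \preceq C$ for $C \in \mathcal{C}$ will be witnessed by an explicit minor model, whereas the $H$-minor-freeness of the graphs in $\mathcal{P}$ rests on a structural fact about how $2$-connected minors interact with cut vertices.

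For the cycle class $\mathcal{C}$, I would build a model of $H$ by hand. Fix $C \in \mathcal{C}$ obtained from copies $\Hm_1,\dots,\Hm_k$ by identifying $v_i$ with $u_{i+1}$ (cyclically, so $v_k = u_1$), and write $h_1 = u$, $h_2 = v$, $h_3,\dots,h_m$ for the vertices of $H$, with $h_i^{(1)}$ their copies in $\Hm_1$. Since $\Hm$ is connected, each $\Hm_i$ contains a $u_i$--$v_i$ path; concatenating these for $i = 2,\dots,k$ produces a path $R$ from $v_1$ to $u_1$ whose internal vertices lie in the interiors of $\Hm_2,\dots,\Hm_k$ and hence avoid $\Hm_1$. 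I would then set $B_i = \{h_i^{(1)}\}$ for $i \ge 2$ and $B_1 = \{u_1\} \cup (V(R) \setminus \{v_1\})$. The set $B_1$ is connected (a subpath of $R$), the branch sets are pairwise disjoint, and every edge of $H$ is realized: edges of $H$ not equal to $uv$ already live in $\Hm_1$, while the missing edge $uv$ is realized by the adjacency of $v_1 \in B_2$ with the first internal vertex of $R$, which lies in $B_1$. This gives $H \preceq C$.

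For the path class $\mathcal{P}$ I would isolate the following lemma: \emph{if $H$ is $2$-connected and $H \preceq G$, then $H \preceq B$ for some block (maximal $2$-connected subgraph, or bridge) $B$ of $G$.} Granting it, the conclusion is immediate. Every block of a graph $P \in \mathcal{P}$ is a block of some copy $\Hm_i$, hence isomorphic to a block of $\Hm = H \setminus e$. Such a block has fewer than $m = |V(H)|$ vertices unless $\Hm$ is itself $2$-connected, in which case the block is $\Hm$ with its $m$ vertices but one edge fewer than $H$. In either case it has no $H$-minor, since an $m$-vertex minor of an $m$-vertex graph can only be obtained by edge deletions and $H \not\subseteq \Hm$. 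Thus no block of $P$ contains $H$ as a minor, and the lemma then forbids $H \preceq P$.

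To prove the lemma I would induct on the number of blocks of $G$, with base case $G$ $2$-connected. Take a minor model with branch sets $B_1,\dots,B_m$ and let $M$ be their union together with the connecting edges, a connected subgraph. If $M$ meets more than one block, there is a cut vertex $c$ of $G$, lying in a unique branch set $B_j$, such that $M$ has vertices in at least two components of $G - c$. The crucial point is that, since $H$ is $2$-connected, $H - h_j$ is connected, so any two branch sets $B_i, B_{i'}$ with $i, i' \neq j$ are joined inside $M$ by a walk through branch sets and connecting edges that avoids $B_j$, hence avoids $c$; therefore all branch sets except $B_j$ lie in one component of $G - c$. Consequently the part of $M$ in any other component is contained in $B_j$, and because a path inside the connected set $B_j$ cannot cross the cut vertex $c$ twice, deleting that part keeps $B_j$ connected and still incident to all its neighbouring branch sets. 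This produces a model inside a proper union of blocks of $G$, and induction closes the argument. The main obstacle is exactly this lemma: getting the interaction between $2$-connectivity and the cut-vertex structure right, in particular checking that trimming $B_j$ preserves both its connectivity and every edge of $H$ it is responsible for.
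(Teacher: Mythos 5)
Your proof is correct and follows essentially the same route as the paper: for $\mathcal{C}$ you complete the near-model $\Hm_1$ by routing the missing edge $uv$ through the rest of the cycle, and for $\mathcal{P}$ you use that a $2$-connected minor must live inside a single block, which is a copy of (a block of) $\Hm$ and hence has no $H$-minor. The only difference is that you carefully prove the block-localization fact and the observation that $\Hm$ itself is $H$-minor-free, both of which the paper asserts without proof.
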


\begin{proof}
Let $G$ be a graph of $\mathcal{P}$. Note that every vertex $v_i$ (identified with $u_{i+1}$) for $i\in \{1,...,k-1\}$, is a cut vertex of $G$. Therefore, since $H$ is 2-connected, a model of $H$ can only appear between two such nodes. 
By construction this cannot happen, as the graphs between the cut vertices are all~$\Hm$.
Thus $G$ is $H$-minor-free.

Now let $G$ be a graph of $\mathcal{C}$. We claim that $G$ contains $H$ as a minor. Consider the following model of $H$. Any $\Hm_i$ is a model of $H$ except for the edge $uv$. Since we have made a cycle of $\Hm_i$'s, there is a path between $v_i$ and $u_i$ outside $\Hm_i$, and this path finishes the model of $H$.
\end{proof}

\begin{lemma}
\label{lem:reduction-path}
Let $H$ be a 2-connected graph.
If there is a certification with $O(f(n))$ bits for $H$-minor-free graphs, then there is a $O(f(n))$ certification for paths.
\end{lemma}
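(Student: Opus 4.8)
The plan is to reduce the certification of paths to the certification of $H$-minor-free graphs, through a \emph{local simulation} of the blow-up construction underlying the classes $\mathcal{P}$ and $\mathcal{C}$. Write $\mathcal{A}$ for the assumed $O(f(n))$-bit certification of $H$-minor-free graphs. Given an input graph $G'$ that we want to certify is a path, every node will \emph{imagine} that it is replaced by a private copy of $\Hm$, the copies being glued along the edges of $G'$, and will locally simulate the verifier of $\mathcal{A}$ on the resulting virtual graph $\hat G$. By Lemma~\ref{lem:P-C}, when $G'$ is a path we have $\hat G\in\mathcal{P}$, which is $H$-minor-free, whereas when $G'$ is a cycle we have $\hat G\in\mathcal{C}$, which contains $H$ as a minor. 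Since a connected graph in which every node has degree at most $2$ is a path or a cycle, the verifier $\mathcal{B}$ for paths will combine a (cost-free) local check that every degree is at most $2$, which already rejects every non-path, non-cycle graph, with the simulation of $\mathcal{A}$, which is meant to reject exactly the cycles.

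I would set up the simulation so that it adds \emph{no} $\Theta(\log n)$ overhead, which is essential since the conclusion must be an $O(f(n))$-bit scheme and not an $O(f(n)+\log n)$-bit one. The key observation is that the gluing can be made \emph{symmetric}, so that no global orientation of $G'$, and hence no distance labelling, is needed: the copy $\Hm_i$ attached to a node $p_i$ carries the two marked vertices $u,v$ of $\Hm$, and $p_i$ simply associates one marked vertex to each incident edge; for an edge $\{p_i,p_j\}$ of $G'$ the two chosen endpoints are identified into a single virtual vertex. Inspecting the proof of Lemma~\ref{lem:P-C}, the dichotomy does not depend on whether consecutive copies are glued $u$-to-$v$, $u$-to-$u$ or $v$-to-$v$: the only facts used are that the gluing vertices are cut vertices, so no $H$-minor crosses a copy when $G'$ is acyclic, and that around a cycle the two attachment points of any copy are joined by an external path, which restores the missing edge $e$. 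The virtual vertices receive unique polynomial-range identifiers derived locally from the real ones (internal vertices of $\Hm_i$ from $\mathrm{id}(p_i)$ together with a local index, and the shared vertex of an edge from the unordered pair of identifiers of its endpoints), so that $\mathcal{A}$ can legitimately be run on $\hat G$.

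On a \emph{yes}-instance the prover builds $\hat G\in\mathcal{P}$, computes an accepting $\mathcal{A}$-assignment (which exists by completeness of $\mathcal{A}$ since $\hat G$ is $H$-minor-free), and stores in the certificate of each $p_i$ the $\mathcal{A}$-certificates of the constantly many virtual vertices of its copy $\Hm_i$; as $|\Hm|=O(1)$ and $\hat n=O(n)$, this costs $O(f(n))$ bits. The verifier $\mathcal{B}$ at $p_i$ checks $\deg(p_i)\le 2$, checks with each neighbor that the shared virtual vertex has been given the same $\mathcal{A}$-certificate on both sides, and then simulates the decision of $\mathcal{A}$ at every virtual vertex of $\Hm_i$; since $\Hm$ has constant diameter and $\mathcal{A}$ has constant verification radius, the $\mathcal{A}$-certificates needed for this simulation lie in $O(1)$ neighbouring copies, hence within constant distance of $p_i$ in $G'$. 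Completeness is then immediate. For soundness, after the degree check a non-path is necessarily a cycle; then $\hat G\in\mathcal{C}$ contains an $H$-minor, so by soundness of $\mathcal{A}$ \emph{no} globally accepted $\mathcal{A}$-assignment on $\hat G$ exists, while the consistency checks guarantee that the per-node simulations use one single global assignment. Hence at least one virtual vertex, and therefore at least one node of $G'$, rejects.

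The step I expect to be the main obstacle is exactly this bookkeeping of the simulation: making the gluing orientation-free so that the reduction is genuinely overhead-free, and arguing that the local consistency checks on the shared virtual vertices suffice to stitch the per-node simulations into a single coherent run of $\mathcal{A}$ on $\hat G$, so that the soundness of $\mathcal{A}$ on $\mathcal{C}$ transfers back into a rejection on the original cycle.
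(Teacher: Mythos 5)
Your proof is correct and follows essentially the same route as the paper's: simulate the graph obtained by replacing the adjacencies of the path/cycle with copies of $\Hm$, store the $\mathcal{A}$-certificates of the adjacent virtual copies at each real node, check consistency across real edges, run the verifier of $\mathcal{A}$ on the virtual graph, and conclude via Lemma~\ref{lem:P-C}. The only difference is that you make explicit some bookkeeping the paper leaves implicit (orientation-free gluing and locally derived identifiers for the virtual vertices), which is a harmless and welcome refinement rather than a divergence.
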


\begin{proof}
Suppose there exists a certification with $O(f(n))$ bits for $H$-minor-free graphs.
The certification of paths boils down to differentiate between paths and cycles, since the nodes can locally check that they have degree 2. Consider the following certification of paths. 
The idea is that the nodes of the path (or cycle) will simulate the computation they would do if instead of being linked by edges, they were linked by copies of $\Hm$. 
The prover will give to every node the certificates of $H$-minor-freeness for these simulated graphs, that is, for every node the certificates of the two copies of $\Hm$ adjacent to it in the simulated graph. Every node will check with its neighbor in the real graph that they have been given the same certificates for these virtual $\Hm$. 
Then every node will run the verification algorithm for $H$-freeness in the simulated graph. 

By construction, the simulated graph is either in $\mathcal{P}$ or in $\mathcal{C}$. Thus, if the verification algorithm accepts, that is, if the simulated graph is $H$-minor-free, then the graph is in $\mathcal{P}$, and then the real graph is a path. 
If the verification algorithm rejects, that is if the simulated graph is not $H$-minor-free then the graph is in~$\mathcal{C}$, and then the real graph is a cycle.
In other words we have designed a local certification for paths, with certificates of size $O(f(n))$.
\end{proof}

\begin{proof}[Proof of Theorem~\ref{thm:lower-bound}]
Now Theorem~\ref{thm:lower-bound} follows from the fact that paths cannot be certified with $o(\log n)$ bits \cite{KormanKP10, GoosS16}.
Note that the proof applies in the locally checkable proof setting, as soon as the number of copies of $\Hm$ is large enough, since the lower bound for paths also applies to locally checkable proofs.
\end{proof}

\section{Discussion}
\label{sec:discussion}

\paragraph*{Milestones to go further}
In this paper, we develop several tools and use them to show that some minor closed graph classes can be certified with $O(\log n)$ bits. One can probably use the tools we developed to certify new classes, we simply wanted to illustrate the interest of these tools. 
Let us now discuss the tools that are missing in order to tackle the general question on $H$-minor-freeness and which steps can be interesting to tackle it.

First, as we explained in Section~\ref{sec:5vertices}, certification of $H$-minor free classes seems easier when $H$ is sparse. One first question that might be interested to look at is the following:

\begin{question}\label{ques:tree}
Let $T$ be a tree. Can $T$-minor free graphs be certified with $O(\log n)$ bits? 
\end{question}

The answer to this question for small graphs $H$ (up to $5$ vertices) is not very interesting since the number of vertices of degree at least $3$ is bounded (and then the whole structure of the graph is "simple"). Even if it remains simple for any $H$, there is no trivial argument allowing us to certify these nodes with $O(\log n)$ bits.
\medskip

A natural approach to tackle Conjecture~\ref{conj:minorconj} would consist in an induction on the size of $H$. Indeed, knowing how to certify $H \setminus x$ for any possible $x$ may help to certify $H$. The basic idea would  consist in separating two cases. 1) When $H$ is not heavily connected where we can heavily use the fact that we can $H \setminus x$ can be certified. And 2) when $H$ is heavily connected, try to use a more general argument. 
A first step toward step 1) would consist in proving that if $H$-minor-freeness can be certified then so is $H+K_1$-minor-freeness\footnote{$H+K_1$ is the graph $H$ plus an isolated vertex.}. We proved it for five vertices in Theorem~\ref{thm:H5_isol}, but the proof heavily uses the structure of the graphs on four vertices. One can then naturally ask the following general question:

\begin{question}\label{quest:H+K1}
Let $H$ be a graph. Can $(H+K_1)$-minor free graphs be certified with $O(\log n)$ bits when $H$ can be certified with $O(\log n)$ bits? 
\end{question}

As in the proof of Theorem~\ref{thm:H5_isol}, we know that we can assume that $G$ is $H$-minimal.
Even if most of the techniques for Lemma~\ref{thm:H5_isol} are specific, Corollary~\ref{coro:Hminimal_prop} gives some (basics) general properties of $H$-minimal graphs which might be useful to tackle this question. 

In structural graph theory, a particular class of $H$-minimal graphs received a considerable attention which are minimally non-planar graphs, in order words, graphs $G$ that are minimal and that contains either a $K_5$ or a $K_{3,3}$ as a minor. It might be interesting to determine if minimally non-planar graphs can be certified with $O(\log n)$ bits.

Note that if we can answer positively Question~\ref{quest:H+K1} positively, the second step would consist in proving the conjecture when we add to $H$ a vertex attached to a single vertex of $H$. Proving this case would, in particular, imply a positive answer to Question~\ref{ques:tree}.

If we want to consider dense graphs, the questions seem to become even harder. In particular, one of the first main complicated $H$-minor class to deal with is probably the class of $K_5$-free graphs. There are several reasons for that. First, it is the smallest $4$-connected graph  and the hardness to certify seem to be highly related to the connectivity of the graph that is forbidden as a minor. The second reason is that it is the smallest graph for which $H$-minor free graphs is a super class of planar graphs.  In other words, we cannot take advantage of the ``planarity'' of the graph (formally or informally) to certify the graph class. 
We then ask the following question:

\begin{question}
Can $K_5$-minor free graphs be certified with $O(\log n)$ bits?
\end{question}

Wagner proved in~\cite{Wagner} that a graph is $K_5$-minor-free if and only if it can be built from planar graphs and from a special graph $V_8$ by repeated clique sums. A \emph{clique sum} consists in taking two graphs of the class and gluing them on a clique and then (potentially) remove edges of that clique. While it should have been easy to certify this sum if we keep the edges of the clique, the fact that they might disappear makes the work much more complicated for certification.

More generally, many decompositions are using the fact that we replace a subgraph by a smaller structure (a single vertex or an edge for instance) only connected to the initial neighbors of that structure in the graph. Certifying such structures is a challenging question whose positive answer can probably permit to break several of the current hardest cases.

\paragraph*{Obstacles towards lower bounds}

There are also several obstacles preventing us to prove extra-logarithmic lower bounds for the certificate size of $H$-minor-free graphs. 
Basically, the only techniques we know consist in (explicit or implicit) reductions to communication complexity. 
In particular \cite{GoosS16} and \cite{Censor-HillelPP20} designed lower bounds for respectively non-3-colorable graphs and bounded diameter graphs as reductions from the disjointness problem in non-deterministic communication complexity. 

Let us remind what these reductions look like.
In such a reduction, one considers a family of graphs with two vertex sets $A$ and $B$, with few edges in between. 
These graphs are defined in such a way that the input of Alice for the disjointness problem can be encoded in the edges of $A$ and the input of Bob in the edges of $B$.
Then, given a certification scheme, Alice and Bob can basically simulate the verification algorithm, and deduce an answer for the disjointness problem. 
If a certification with small labels existed for the property at hand, then the communication protocol would contradict known lower bounds which proves a lower bound for certification.

The difficulty of using this proof for $H$-minor free graphs comes from the fact that it is difficult to control where a minor can appear, that is, to control the models of $H$.
For example, it is difficult to control that if $H$ appears in the graph, then the nodes $V_i$ associated with some node $i$ of $H$ are on Alice's side.
As a comparison, for proving properties on the diameter, \cite{Censor-HillelPP20} used a construction where all the longest paths in the graph had to start from Alice side and finish in Bob side, but such a property seems difficult to obtain for minors. 

\paragraph*{Connectivity questions}

A large part of the paper is devoted to certify connectivity and related notions that are of independent importance, for instance to certify the robustness of a network.
For these, we do not have lower bounds, and leave the following question open.

\begin{question}
Does the certification of $k$-connectivity require $\Omega(\log n)$ bits?
\end{question}

For this question it is tempting to try a construction close to the one we have used for $H$-minor-free graphs. For example, one could think that the nodes of the path/cycle could simulate the $k$-th power of the graph which is $k$-connected if and only if the graph is a cycle. 
But this does not work: we want the \emph{yes}-instances for the property (\emph{e.g.} the $k$-connected graphs) to be in mapped to \emph{yes}-instances for acyclicity (\emph{e.g.} paths), and not with the \emph{no}-instances, which are the cycles. 

An interesting open problem about $k$-connectivity also is on the positive side:

\begin{question}
Can $k$-connectivity be certified with $O(\log n)$ bits for any $k \ge 4$?
\end{question}

Beyond the question of certifying the connectivity itself, we would like to be able to decompose graphs based on $k$-connected components, like what we did with the block-cut tree for 2-connectivity. 
Such decomposition are more complicated and less studied than block-cut trees, but for 3-connectivity such a tool is  SPQR trees \cite{BattistaT89}. Unfortunately, similarly to the clique sum operation we mentioned earlier, some steps of the SPQR tree construction are based on edges that can be removed in later steps, making it hard to certify this structure.

\section*{Acknowledgments}

We thank Jens M. Schmidt for pointing out a mistake in the characterization of 3-connexity we used in an earlier version of this paper. We also thank the reviewers of an earlier version for their comments.

\bibliography{biblio.bib}

\end{document}